\theoremstyle{definition}
\newtheorem{definition}{Definition}[section]
\newtheorem{invariant}[definition]{Invariant}
\newtheorem{example}[definition]{Example}
\theoremstyle{plain}
\newtheorem{theorem}[definition]{Theorem}
\newtheorem{lemma}[definition]{Lemma}
\newcommand{\bigo}[1]{{O\!\left(#1\right)}}
\newcommand{\bigO}{O}   % O notation without an argument.
\newlength{\leftrightarrowwidth}
\newcommand{\bisauxiliary}{\raisebox{.3ex}{\makebox[\leftrightarrowwidth]{$\underline{\makebox[0.7\leftrightarrowwidth]{$\leftrightarrow$}}$}}}
\newcommand{\bbis}{\mathrel{\bisauxiliary_b\!}}
\newcommand{\pijl}[1]{\mathrel{\text{$\xrightarrow{\smash[t]{#1}}$}}}
\newcommand{\lijp}[1]{\mathrel{\text{$\xleftarrow{\smash[t]{#1}}$}}}
\newcommand{\R}[2]{{#1 \mathrel{R} #2}}
\newcommand{\size}[1]{\left\lvert #1 \right\rvert}
\newcommand{\remark}[1]{}
\definecolor{jfgcolor}{rgb}{0.2, 0, 1}
\definecolor{jkcolor}{rgb}{1, 0.5, 0}
\newcommand{\jk}[1]{\remark{\textcolor{jkcolor}{JK: #1.}}}
\definecolor{awcolor}{rgb}{0.8, 0.5, 0.8}
\newcommand{\TB}{T_{B\pijl{}}}
\newcommand{\TBaB}{T_{B\pijl{a} B'}}
\newcommand{\TaB}{T_{\pijl{a} B'}}
\newcommand{\BT}{B_{\pijl{T}}}
\newcommand{\nraB}{\#aB'}
\newcommand{\Bottom}{\mathit{Bottom}}
\newcommand{\Marked}{\mathit{Marked}}
\newcommand{\splittableBlocks}{\mathit{splittableBlocks}}
\newcommand{\ob}{$\blacktriangle$} % one-percent level bad (slow or large)
\newcommand{\ns}{\phantom{$\blacktriangle$}} % non-significant
\newcommand{\og}{$\blacktriangledown$} % one-percent level good (fast or small)
\newcommand{\shortto}{\mathord{\tikz[baseline]{\clip(0em,-0.3ex) rectangle (0.5em,1.8ex); \node[draw=none,inner sep=0pt,anchor=base](0em,0ex){$\to$};}\!}}
\newcommand{\xshortto}[1]{\mathord{\tikz[baseline]{\clip(0em,-0.3ex) rectangle (0.5em,2.5ex); \node[draw=none,inner sep=0pt,anchor=base](0em,0ex){$\smash[t]{\stackrel{~#1}{\rightarrow}}$};}\!}}
\newlength{\halflineskip}
\newcommand{\rightbrace}[3][0]{\setlength{\halflineskip}{0.5\baselineskip}\hspace{0pt plus 1filll}\makebox[0pt][l]{\smash{\tikz[baseline=#1\baselineskip]{\draw[thick,color=gray,line cap=round,decorate,decoration={brace,amplitude=2.3pt}] (0pt,0.7\baselineskip) ++(0.1em,0) -- ++(0pt,-#2\baselineskip+1pt) (5pt,\halflineskip-#2\halflineskip) node[color=gray,right,inner sep=0.1em,anchor=base west]{\small #3};}}}}
\newlength{\statetextwidth}\settowidth{\statetextwidth}{$s_0$}
\newlength{\statetextheight}\settoheight{\statetextheight}{$s_0$}
\newlength{\statetextdepth}\settodepth{\statetextdepth}{$s_0$}
\tikzset{bunch/.style={very thin,draw,fill=gray!5}}
\tikzset{block/.style={thick,draw,fill=gray!15,rounded corners=4.95mm}}
\tikzset{state/.style={circle,draw,semithick,fill=white,inner sep=1.35pt,text height=\statetextheight,text depth=\statetextdepth}}
\tikzset{red state/.style={state,fill=red!10,path picture={\draw[red!40,very thick]
	(-0.25\statetextwidth,-1cm) -- (-0.25\statetextwidth,2cm)
	(0cm,-1cm) -- (0cm,2cm)
	(0.25\statetextwidth,-1cm) -- (0.25\statetextwidth,2cm)
	(0.5\statetextwidth,-1cm) -- (0.5\statetextwidth,2cm)
	(0.75\statetextwidth,-1cm) -- (0.75\statetextwidth,2cm)
	(\statetextwidth,-1cm) -- (\statetextwidth,2cm)
	(1.25\statetextwidth,-1cm) -- (1.25\statetextwidth,2cm);}}}
\tikzset{blue state/.style={state,fill=blue!10,path picture={\draw[blue!40,very thick]
	(-1cm,0.5\statetextheight-0.5\statetextdepth+0.75\statetextwidth) -- (2cm,0.5\statetextheight-0.5\statetextdepth+0.75\statetextwidth)
	(-1cm,0.5\statetextheight-0.5\statetextdepth+0.5\statetextwidth) -- (2cm,0.5\statetextheight-0.5\statetextdepth+0.5\statetextwidth)
	(-1cm,0.5\statetextheight-0.5\statetextdepth+0.25\statetextwidth) -- (2cm,0.5\statetextheight-0.5\statetextdepth+0.25\statetextwidth)
	(-1cm,0.5\statetextheight-0.5\statetextdepth) -- (2cm,0.5\statetextheight-0.5\statetextdepth)
	(-1cm,0.5\statetextheight-0.5\statetextdepth-0.25\statetextwidth) -- (2cm,0.5\statetextheight-0.5\statetextdepth-0.25\statetextwidth)
	(-1cm,0.5\statetextheight-0.5\statetextdepth-0.5\statetextwidth) -- (2cm,0.5\statetextheight-0.5\statetextdepth-0.5\statetextwidth)
	(-1cm,0.5\statetextheight-0.5\statetextdepth-0.75\statetextwidth) -- (2cm,0.5\statetextheight-0.5\statetextdepth-0.75\statetextwidth);}}}
\newcommand{\redstatetext}{\smash{\tikz[baseline=-0.37\statetextwidth]{\node[red state,thin,inner sep=0pt]{\makebox[0.5\statetextwidth]{}};}}}
\newcommand{\bluestatetext}{\smash{\tikz[baseline=-0.37\statetextwidth]{\node[blue state,thin,inner sep=0pt]{\makebox[0.5\statetextwidth]{}};}}}
\title{\bf A simpler \textit{O}(\textit{m}\,log\,\textit{n}) algorithm for branching bisimilarity on labelled transition systems}
\author{
David N.\ Jansen\thanks{This work is partly done during a visit of the first author at Eindhoven University of Technology, and
a visit of the second author at the Institute of Software, Chinese Academy of Sciences.}\\
\small Institute of Software, Chinese Academy of Sciences, Beijing, China\\[-0.5ex]
\small and Sino-Europe Institute of Dependable Smart Software\vspace{2ex}\\
Jan Friso Groote\footnotemark[1]\hspace{1cm}
Jeroen J.A.\ Keiren\hspace{1cm}
Anton Wijs\\
\small Department of Mathematics and Computer Science, Eindhoven University of Technology, The Netherlands\vspace{2ex}\\
\small\texttt{dnjansen@ios.ac.cn, \{J.F.Groote, J.J.A.Keiren, A.J.Wijs\}@tue.nl}
}
\date{}
\begin{document}

\maketitle
\begin{abstract}
\noindent%
Branching bisimilarity is a behavioural equivalence relation on labelled transition systems
that takes internal actions into account.
It has the traditional advantage that algorithms for branching bisimilarity
are more efficient than all algorithms for other weak behavioural equivalences, especially weak bisimilarity.
With $m$ the number of transitions and $n$ the number of states,
the classic $\bigo{m n}$ algorithm was recently replaced by an $\bigo{m (\log \size{\mathit{Act}} + \log n)}$ algorithm \cite{GJKW2017}, which is unfortunately rather complex.
This paper combines its ideas with the ideas from Valmari \cite{Valmari2009}.
This results in a simpler algorithm with complexity $\bigo{m \log n}$.
Benchmarks show that this new algorithm is also faster and often far more memory efficient
than its predecessors.
This makes it the best option for branching bisimulation minimisation and preprocessing for weak bisimulation of LTSs.
\end{abstract}

\section{Introduction}
Branching bisimilarity \cite{DBLP:journals/jacm/GlabbeekW96} is an alternative to weak bisimilarity \cite{DBLP:books/sp/Milner80}.
Both equivalences allow the reduction of labelled transition systems containing transitions labelled with internal actions, which are also referred to as silent, hidden or $\tau$-actions.

One of the distinct advantages of branching bisimilarity is that, from the outset, an efficient
algorithm has been available~\cite{GV90}.
This algorithm can be used to calculate whether two states in a labelled transition system are equivalent,
and to calculate a quotient transition system.
The algorithm had complexity $\bigo{mn}$ with $m$ the number of transitions and $n$ the number of states.
It is more efficient than classic algorithms for weak bisimilarity,
which use transitive closure (for instance, \cite{KannellakisSmolka1990} runs in $\bigo{n^2m\log n + mn^{2.376}}$, where $n^{2.376}$ is the time for computing the transitive closure), and algorithms for weak simulation (the strong simulation relation can be computed in $\bigo{mn}$ \cite{HHK1995}, and for weak simulation first the transitive closure of the transition relation needs to be computed).
The algorithm is also far more efficient than algorithms for trace-based equivalence
notions, such as (weak) trace equivalence or weak failure equivalence,
as these are generally PSPACE-complete on finite-state labelled transition systems \cite{KannellakisSmolka1990}.

Branching bisimilarity is interesting in several other respects.
Not only is it a useful notion to compare the
behaviour of labelled transition systems directly, as it exactly respects the branching structure of behaviour, it also enjoys a number
of nice mathematical properties such as the existence of a canonical quotient
with a minimal number of states and transitions modulo branching bisimilarity
(contrary to, for instance, trace-based equivalences).
Additionally, as branching bisimilarity is coarser than virtually any other conceivable
behavioural equivalence taking internal actions into account \cite{Gla1993}, it is ideal for preprocessing.
In order to calculate a desired equivalence, one can first reduce the behaviour modulo branching bisimilarity, before applying a dedicated
algorithm on the often substantially reduced transition system. In the mCRL2 toolset \cite{DBLP:conf/tacas/BunteGKLNVWWW19} this is common practice.

In \cite{GJKW2017,GrooteW16tacas} an algorithm to calculate stuttering equivalence on Kripke structures
with complexity $\bigo{m\log n}$ was proposed.
Stuttering equivalence essentially differs from branching bisimilarity in the fact that transitions do not have labels and
as such all transitions can be viewed as internal. In these papers it was shown that branching bisimilarity can be calculated by
translating labelled transition systems to Kripke structures, encoding the labels of transitions into labelled states following \cite{dNV1995,RSW2014}.
This led to an $\bigo{m (\log \size{\mathit{Act}} + \log n)}$ or $\bigo{m \log m}$ algorithm for branching bisimilarity.
In Appendix~\ref{app:complexity_via_stuttering} we include and example that shows this bound is, in fact, tight.

Besides the time complexity, the algorithm in \cite{GJKW2017,GrooteW16tacas} has two disadvantages.
First, the translation to Kripke structures introduces a new state and a new transition
per action label and target state of a transition,
and as such increases the memory required to calculate branching bisimilarity substantially,
depending on the structure of the transition system.
This made it far less memory efficient than the classical algorithm of \cite{GV90},
and this was actually perceived as a substantial practical hindrance.
For instance, when reducing systems consisting of tens of millions of states, such as \cite{BLW2018}, memory consumption is the bottleneck of the algorithm from \cite{GJKW2017,GrooteW16tacas}.
Second, the algorithm in \cite{GJKW2017,GrooteW16tacas} is very complex. To realise the targeted $\bigo{m\log n}$ complexity, several subtle situations that can occur while running the algorithm were handled using dedicated subalgorithmic steps.
To illustrate the complexity, implementing the algorithm of \cite{GJKW2017,GrooteW16tacas} took approximately half a man-year.

\paragraph{Contributions.}
We present an algorithm for branching bisimilarity that runs directly on labelled transition systems in $\bigo{m \log n}$ time and that is simpler than the algorithm of \cite{GJKW2017,GrooteW16tacas}.

To achieve this we use an idea
from Valmari and Lehtinen \cite{Valmari2009,ValmariL08}, which they apply in the context of strong bisimulation.
The standard Paige--Tarjan algorithm~\cite{PT87},
which has $\bigo{m \log n}$ time complexity for strong bisimilarity on Kripke structures,
registers work done in a separate partition of states.
Valmari~\cite{Valmari2009} observed that this leads to complexity $\bigo{m \log m}$ on LTSs
and proposed to use a partition of transitions,
whose elements he (and we) calls \textit{bunches,}
to register work done.
This reduces the time complexity on LTSs to $\bigo{m \log n}$.

Using this idea we design our more straightforward algorithm for branching bisimilarity on labelled transition
systems.
Essentially, this makes the maintenance of action labels particularly straightforward and allows to simplify
stability reassessment in case of new bottom states. It also leads to a novel main invariant,
which we formulate as Invariant~\ref{inv:bunches}. It allows us to prove the correctness of the algorithm in a far more straightforward way than before.

We provide a detailed proof of correctness of the algorithm
and demonstrate using benchmarks that it outperforms all
preceding algorithms both in time and space when the labelled transition systems become sizeable.
This is illustrated with more than 30 example LTSs.
This shows that the new algorithm pushes the state-of-the-art in comparing and minimising the behaviour of LTSs w.r.t.\@ weak equivalences,
either directly (branching bisimilarity) or in the form of a preprocessing step (weak bisimilarity).

Despite the fact that this new algorithm
is more straightforward than the $\bigo{m (\log \size{\mathit{Act}} + \log n)}$ algorithm~\cite{GJKW2017}, the implementation of the algorithm into code is still not easy.
To guard against implementation errors, we extensively applied random testing,
comparing the output with that of other algorithms.
The algorithms and their source code are freely available for use in
the mCRL2 toolset~\cite{DBLP:conf/tacas/BunteGKLNVWWW19}.

\paragraph{Historical overview.}
For those new to the area of algorithms for bisimulation equivalences on labelled transition systems, it might be useful
to review the major concepts that have been developed in this field throughout the years.
Following Kanellakis and Smolka \cite{KanellakisSmolka1983},
efficient algorithms use partition refinement.
The states of the transition system are partitioned into blocks,
such that equivalent states are in the same block.
These blocks are refined until non-equivalent states are in different blocks.
The original algorithm~\cite{KanellakisSmolka1983} calculated
strong bisimilarity and had time complexity $\bigo{mn}$.
The main idea is to find a \textit{splitter,}
i.e., a block that shows that some states are not equivalent,
and then move these states to separate blocks.

Subsequently,
the seminal article of Paige and Tarjan \cite{PT87} presented an efficient algorithm for strong bisimulation minimisation of Kripke structures.
Its main data structure consists of two partitions, a fine one into blocks and a coarse one into
(what is called in~\cite{GJKW2017}) \textit{constellations} of blocks.
The fine partition stores the current knowledge about inequivalence of states,
and the coarse partition stores the current knowledge on which blocks cannot act as splitters.
When the two partitions coincide, no more splitters exist,
so the blocks in the fine partition are the bisimulation equivalence classes.
Paige and Tarjan's algorithm repeatedly splits a constellation with multiple blocks into two parts
and splits the fine partition if the new constellations actually lead to some splits.
Their ingenious data structures ensure Hopcroft's ``Process the smaller half'' principle \cite{Hop1971}, guaranteeing
that the work done requires time proportional to the \textit{smaller} of the splitters.
This leads to a time complexity in $\bigo{m \log n}$ on Kripke structures.

The next key insight was by Valmari,
who introduced the idea to use a partition of transitions into bunches \cite{Valmari2009}
to store the knowledge about non-splitters.
This resulted in an $\bigo{m \log n}$ algorithm for strong bisimilarity on labelled transition systems,
even though $m$ may be larger than $n^2$ when there are enough transition labels.

For branching bisimilarity,
Groote and Vaandrager \cite{GV90} presented an algorithm with worst-case time complexity in $\bigo{mn}$.
They established that in order to determine that a block can be split it is only necessary to look
at its \textit{bottom states,} i.e., states that have no outgoing internal transition in the same block.
In addition to the algorithm of \cite{KanellakisSmolka1983} the algorithm of Groote and Vaandrager \cite{GV90}
had to determine which states can reach certain bottom states via internal transitions in the block to split that block.
They also observed that stability is not preserved
when new bottom states emerge. Therefore, stability of existing blocks had to be reassessed
by the algorithm.

More than 25 years later, \cite{GJKW2017,GrooteW16tacas} managed to merge the ideas of Paige and Tarjan with those of
Groote and Vaandrager, finding an algorithm for stuttering equivalence
that has time complexity $\bigo{m \log n}$ on Kripke structures
as well as time complexity $\bigo{m (\log \size{\mathit{Act}} + \log n)}$ on LTSs using a translation of LTSs to Kripke structures.
The first essential difficulty that had to be overcome was that calculating the reachability
of states through internal transitions must be done in time proportional to the smaller block that is split off, following
the ``Process the smaller half'' principle. This was done by
two coroutines that are executed in parallel to identify those states that can reach certain bottom states
via internal transitions,
and simultaneously identify the states from which those bottom states are not reachable. As soon as the smaller
set of states was found, the other coroutine was terminated.
The other essential contribution
of \cite{GJKW2017,GrooteW16tacas} is that reassessing stability
of the partition can be done in time proportional to $\log n$ times the number of (incoming and outgoing) transitions of new bottom states.
As each state becomes a bottom state at most once, this fits into the time bound.
%In this paper we do not need the factor $\log n$, but that has no effect on the overall time complexity.
Dealing with this was done in a rather delicate post-processing stage that would be executed whenever a new bottom state
was found.
We present an algorithm without postprocessing.

\paragraph{Overview of the article.}
In Section~\ref{sec:preliminaries} we provide the definition of labelled transition systems and branching bisimilarity.
In Section~\ref{sec:high} we provide the core algorithm with high-level data structures, correctness and complexity.
The next section presents the procedure for splitting blocks,
which can be presented as an independent pair of coroutines.
Section~\ref{sect:implementation} provides the details of the algorithm,
especially how the high-level data structures can be represented efficiently.
Section~\ref{sec:benchmarks} presents some benchmarks.

%In the present paper, we undertake to extend our algorithm \cite{GJKW2017} to LTSs
%in a way that leads to time complexity $\bigo{m \log n}$.
%We do assume that there are no superfluous states or actions,
%i.e., $\lvert \mathit{Act} \setminus \{ \tau \} \rvert \leq m$ and $n \leq 2m$,
%but we do not want the time complexity be dependent on the number of actions in another way.
%(The assumption implies $m \in \Omega(\lvert \mathit{Act} \rvert + n)$---%
%still, it does not necessarily mean $m \in \Omega(\lvert \mathit{Act} \rvert n)$).
%We will use a partition of the transitions similar to Valmari \cite{Valmari2009}
%and simplify the handling of new bottom states.

%\dnj{Describe the additional advantage: no translation from LTS to KS, so it needs less time and in particular less space.}

%Residuals from elsewhere: that follows the basic structure of \cite{GJKW2017}
%with substantial modification, among others following ideas of \cite{Valmari2009}.

\section{Branching bisimilarity}
\label{sec:preliminaries}
In this section we define labelled transition systems and branching bisimilarity.

% Short definition.
%An LTS is a
%triple $A = (S, \mathit{Act}, \mathord{\pijl{}})$, with $S$ a finite set of states, $\mathit{Act}$ a finite set of
%actions including the internal action $\tau$, and $\mathord{\pijl{}} \subseteq S \times \mathit{Act} \times S$ a transition
%relation.

\begin{definition}[Labelled transition system]
\label{dn:lts}
A \emph{labelled transition system} ({LTS}) is a triple
$A=(S,\mathit{Act}, \mathord{\pijl{}})$ where
\begin{enumerate}
\item
$S$ is a finite set of \emph{states.} The number of states is denoted by $n$.
\item
$\mathit{Act}$ is a finite set of actions including the \emph{internal action} $\tau$.
\item
$\mathord{\pijl{}} \subseteq S\times \mathit{Act}\times S$ is
a \emph{transition relation.} The number of transitions is necessarily finite and denoted by $m$.
\end{enumerate}
\end{definition}
\noindent%
It is common to write $t\pijl{a}t'$ for $(t,a,t')\in\mathord{\pijl{}}$. Using a slight abuse of notation we write $t\pijl{a}t' \in T$ instead of $(t,a,t') \in T$ for $T \subseteq \mathord{\pijl{}}$.
We also write $t\pijl{a}T$ for the set of transitions $\{ t \pijl{a} t' \mid t' \in T \}$, and likewise $T \pijl{a}T'$ for the set $\{ t \pijl{a} t' \mid t \in T\ \text{and}\ t' \in T' \}$.
We refer to all actions except $\tau$ as
the visible actions. The transitions labelled with $\tau$ are the invisible or hidden transitions.
If $t\pijl{a} t'$, we say that from $t$, the state $t'$, the action $a$, and the transition $t\pijl{a} t'$ are \emph{reachable.}

%There are various, but equivalent, ways to define branching bisimulation. We use the definition below.

%Short definition.
%\begin{definition}
%\label{def:branching}
%Consider the LTS $A=(S,\mathit{Act},\mathord{\pijl{}})$.
%We call a symmetric relation $R \subseteq S\times S$ a \emph{branching bisimulation relation} iff
%%\vspace{-0.3cm}
%\begin{shrinkeq}{-1.65ex}
%\[
%\begin{aligned}
%& \forall s, t, s' {\in} S. \forall a {\in} \mathit{Act}. s R t \wedge s \pijl{a} s' \implies \\
%& \hspace{1.45cm} (a = \tau \wedge s' R t) \vee (\exists t', t'' {\in} S. t \Rightarrow t' \pijl{a} t'' \wedge s R t' \wedge s' R t''),
%\end{aligned}
%\]
%\end{shrinkeq}
%where $\Rightarrow$ is the transitive, reflexive closure of $\pijl{\tau}$.
%\end{definition}
%Two states $s, t{\in} S$ are {\it branching bisimilar} iff there is a branching bisimulation relation $R$
%such that $s R t$.

\begin{definition}[Branching bisimilarity]
\label{def:branching}
Let $A=(S,\mathit{Act},\mathord{\pijl{}})$ be a labelled transition system.
We call a relation $R\subseteq S\times S$
a \emph{branching bisimulation relation} iff it is symmetric and
for all $s,t\in S$ such that $\R{s}{t}$
and all transitions $s \pijl{a} s'$ we have:
\begin{enumerate}
\item
$a=\tau$ and $\R{s'}{t}$, or
\item	\label{def:branching:simulating-transition}
there is a sequence
$t\pijl{\tau}\cdots\pijl{\tau} t'\pijl{a}t''$
such that $\R{s}{t'}$ and $\R{s'}{t''}$.
\end{enumerate}
Two states $s$ and $t$ are \emph{branching bisimilar,}
 denoted by $s\bbis t$,
iff there is a branching bisimulation relation $R$ such that
$\R{s}{t}$.
\end{definition}
Note that branching bisimilarity is an equivalence relation.
Given an equivalence relation $R$, a transition $s \pijl{a} t$ is called \textit{inert} iff $a = \tau$ and $\R{s}{t}$.
If $t \pijl{\tau} t_1 \pijl{\tau} \cdots \pijl{\tau} t_{n-1} \pijl{\tau} t_n \pijl{a} t'$ such that $\R{t}{t_i}$ for $1 \leq i \leq n$, we say that the state $t_n$, the action $a$, and the transition $t_n \pijl{a} t'$ are \emph{inertly reachable} from $t$.

\section{The algorithm}
\label{sec:high}

We now present our algorithm to calculate branching bisimilarity at an abstract level
and assign time budgets, i.e., indications of how much time each step is allowed to take.
The details of the implementation, which are essential to fit the time budgets, are given in Section~\ref{sect:implementation}.
We first describe the basic data structures and subsequently the algorithm, its correctness and complexity.
The algorithm depends on a block splitting procedure, which is explained in Section~\ref{sec:splitting_blocks}.
In this and the following sections, we use the labelled transition system $A=(S,\mathit{Act},\mathord{\pijl{}})$.

\subsection{The essential data types}
The algorithm relies heavily on partitions of sets, especially, sets of states and sets of transitions.

\begin{definition}[Partition]
For a set $X$ a partition $\Pi$ of $X$ is a disjoint cover of $X$, i.e., $\Pi=\{ B_i \subseteq X \mid B_i \not= \emptyset, 1\leq i\leq k\}$
such that $B_i\cap B_j=\emptyset$ for all $1\leq i<j\leq k$ and $X =\bigcup_{1\leq i\leq k}B_i$.

A partition $\Pi'$ is a \textit{refinement} of $\Pi$ iff for every $B' \in \Pi'$ there is some $B \in \Pi$ such that $B' \subseteq B$.
\end{definition}
\noindent%
Note that a partition induces an equivalence relation in the following way:
$s \equiv_\Pi t$ iff there is some $B \in \Pi$ containing both $s$ and $t$.

The algorithm uses two main partitions.
%The algorithm maintains two key data structures, partitioning the states and the transitions.
Partition $\Pi_s$ is a partition of states in $S$ that is coarser than branching bisimilarity.
We refer to the elements of $\Pi_s$ as \textit{blocks,} typically denoted using the letter $B$.
Partition $\Pi_t$ partitions the non-inert transitions of $\pijl{}$, where inertness of $\tau$-transitions is interpreted with respect to $\equiv_{\Pi_s}$.
We refer to the elements of $\Pi_t$ as \textit{bunches,} typically denoted using the letter $T$.

The partition of states $\Pi_s$ records the current knowledge about branching bisimilarity:
two states are in different blocks iff the algorithm has found a proof that they are not branching bisimilar
(see Invariant~\ref{inv:bbisim}).
The main idea behind the algorithm is to iteratively refine $\Pi_s$
until it induces a branching bisimulation relation.
In each iteration, a set of transitions with label $a$ and target block $B'$ is selected as the \emph{splitter,}
and for each block $B$ it is determined whether a strict subset of its states can silently reach a transition in the splitter.
If this is the case, $B$ is split based on this criterion.

The partition of transitions $\Pi_t$ records the current knowledge about splitters.
For each bunch of transitions, we initially assume that they can pairwise simulate each other (i.e.,\@ they can serve as transitions $s \pijl{a} s'$ and $t' \pijl{a} t''$ in Definition~\ref{def:branching}), until proven otherwise.
The algorithm maintains the invariant (formalised in Invariant~\ref{inv:bunches}) that
whenever a state in a block has a transition in a bunch,
then every state in that block can inertly reach a transition in the same bunch,
such that the condition of Definition~\ref{def:branching} is satisfied.
Whenever a particular block and bunch satisfy this invariant,
we say that the block is \emph{stable} with respect to that bunch.

However, Definition~\ref{def:branching} comes with some constraints on transition $t' \pijl{a} t''$:
first, it needs to have the same action label as $s \pijl{a} s'$,
and second, $s'$ and $t''$ need to be in the same target block.
If these constraints are violated,
it appears that our initial assumption about the bunch was incorrect,
and we try to correct this by splitting the bunch into parts
that do satisfy the constraints.
We do this by splitting off an \emph{action-block-slice,} a subset of transitions in a bunch with the same action label and the same target block and placing it in a new bunch.
The new bunch is called the \emph{primary} splitter,
the remainder of the bunch from which it was split off is called the \emph{secondary} splitter.
After such a split, the blocks in $\Pi_s$ need to be split with respect to both splitters to re-establish the invariant.

The algorithm uses a number of notions derived from the two partitions $\Pi_s$ and $\Pi_t$.
For bunches $T\in \Pi_t$, actions $a \in \mathit{Act}$ and blocks $B, B' \in \Pi_s$, we have:
\begin{itemize}
\item	The \textit{block-bunch-slices,} i.e., the transitions in $T$ that start in $B$:
$
\TB = \{ s \pijl{b} s' \in T \mid s \in B \}
$.
\item	The \textit{action-block-slices,} i.e., the transitions in $T$ that have label $a$ and end in $B'$:
$
\TaB = \{ s \pijl{a} s' \in T \mid s' \in B' \}
$.
\item	A block-bunch-slice intersected with an action-block-slice:\\
\hspace*{1cm}$\TBaB = \TB \cap \TaB = \{ s \pijl{a} s' \in T \mid s \in B \wedge s' \in B' \}$.
%\item	The block-bunch-slices of a block:
%$
%\Bunches(B) = \{ \TB \mid T \in \Pi_t \wedge \TB \neq \emptyset \}
%$.
\item	The \textit{bottom states} of a block, i.e., the states without an outgoing inert transition:\\
\hspace*{1cm}$
\Bottom(B) = \{ s \in B \mid \neg\exists s' \in B . s \pijl{\tau} s' \}
$.
\item	The states in a block with a transition in a bunch:
$
\BT = \{ s \mid s \pijl{a} s' \in \TB \}
$.
% \item	The states in a block without a transition in a bunch:
% $
% \BnotT = \{ s \mid \neg \exists s \pijl{a} s' \in \TB \}
% $.
\item	The blocks splittable by an action-block-slice:\\
\hspace*{1cm}
$\splittableBlocks(\TaB) = \{ B \in \Pi_s \mid  \emptyset \subset \TBaB \subset \TB \}$.
\item	The number of action-block-slices contained in a bunch:\\
\hspace*{1cm}$\nraB(T) = \left| \{ \TaB \mid a \in \mathit{Act}, B' \in
\Pi_s \} \setminus \{ \emptyset \} \right|$.
\item	If $B, B' \in \Pi_s$ and $B \neq B'$, then $B \pijl{\tau} B'$ are the non-inert $\tau$-transitions between $B$ and $B'$.
% \item	The non-inert $\tau$-transitions between blocks $B, B' \in \Pi_s$ ($B \neq B'$):\\
% \hspace*{1cm}$
% B \pijl{\tau} B' = \{ s \pijl{\tau} s' \mid s \in B \wedge s' \in B' \}
% $.
\item	The outgoing transitions of a block:
$B_{\pijl{}}=\{ s\pijl{a}s'\mid s\in B, a\in\mathit{Act} \text{ and } s'\in S\}$.
\item	The incoming transitions of a block:
$B_{\lijp{}}=\{ s\pijl{a}s'\mid s\in S, a\in\mathit{Act} \text{ and } s'\in B\}$.
\end{itemize}
%Clearly, $\TB \subseteq T$, $\TaB \subseteq T$ and $\TBaB \subseteq T$.

The first two of these sets (block-bunch-slices and action-block-slices) are maintained as auxiliary data structures in the algorithm in order to meet the required performance bounds.
If the partitions $\Pi_s$ or $\Pi_t$ are adapted,
these derived sets also change. For instance if a block in $\Pi_s$ is replaced by two other blocks
(this happens at Lines~\ref{alg-split1} and \ref{alg-split2} of Algorithm~\ref{algo:main-algorithm-abstract}),
the corresponding block-bunch-slices and action-block-slices are split as well.
For the sake of brevity, we omit the updating of these derived sets in the high-level description of the algorithm. We describe how these sets are maintained in Section~\ref{sect:implementation}.

When $\Pi_t$ is changed, the invariant needs to be re-established by splitting blocks.
To keep track of the blocks that still need to be split, we partition the block-bunch-slices into \textit{stable} and \textit{unstable} block-bunch-slices.
A block-bunch-slice $\TB$ is stable if we have ensured that it is not a splitter for any block in $\Pi_s$.
Otherwise it is deemed unstable, and it needs to be checked whether it is stable, or whether the block $B$ must be split.
If a block-bunch-slice is unstable,
it is stored in the \emph{splitter list,} either as a \textit{primary} or as a \textit{secondary} splitter.
Moreover, if a block-bunch-slice is unstable, we divide the transitions in this block-bunch-slice in
\textit{marked} and \textit{non-marked} transitions.
These markings are used to determine
	% \dnj{Note that I added ``bottom'' to ``states'' here.
	% The marked transitions can be used to split the bottom states,
	% then the two coroutines need to split the non-bottom states by executing in lockstep.}
which bottom states in a block have a transition in a particular bunch, and are essential for efficient splitting of blocks.
For an unstable block-bunch-slice $\TB$ we write its marked transitions as $\Marked(\TB)$.
Note that, even when a block-bunch-slice $\TB$ resides on the splitter list, the block $B$ may be split due to another splitter. In such a case, the block-bunch-slice is split accordingly, and the splitter list is implicitly adapted.

\subsection{Overview of the algorithm}

Before performing the partition refinement,
the LTS is preprocessed to contract $\tau$-strongly connected components (SCCs) into a single state without $\tau$-loop.
This step is valid since all states in a $\tau$-SCC are branching bisimilar,
and it ensures that all $\tau$-paths in the LTS are finite.

The algorithm itself is a partition refinement algorithm. It iteratively refines the partitions $\Pi_s$ and $\Pi_t$.
The main objective for the algorithm is to guarantee the following:
If a block $B$ has a transition in an action-block-slice $\TaB$, then every bottom state in $B$ has a transition in the same action-block-slice $\TaB$.
Since all infinite $\tau$-paths in the LTS have been removed by the above preprocessing step,
every state in $B$ is guaranteed to inertly reach a bottom state,
hence every state in $B$ can inertly reach a transition in $\TaB$.
Therefore, whenever this objective has been reached, every block is a branching bisimulation equivalence class.

To achieve this objective, the algorithm maintains the following weaker invariant,
which keeps track of bunches instead of action-block-slices.
If a block $B$ has a transition to some block $B'$ in a bunch $T$,
then every bottom state in $B$ has a transition in the same bunch $T$.
Observe that, once every action-block-slice is in its own bunch, and this is reflected in the blocks, the main objective of the algorithm has been reached.

Hence, the main invariant of our algorithm is the following.
\begin{invariant}[Bunches]
	\label{inv:bunches}
	$\Pi_s$ is stable under $\Pi_t$, i.e.,
	if a bunch $T \in \Pi_t$ contains a transition with its source state in a block $B$ of $\Pi_s$, then every bottom state in block $B$ has a transition in bunch
	$T$ (in fact, in block-bunch-slice $\TB$).
\end{invariant}
Now, if a bunch contains multiple action-block-slices---we call that a \emph{non-trivial} bunch---, to get closer to our main objective, we have to refine $\Pi_t$ by splitting off an action-block-slice.
At the same time, to preserve the main invariant, when we split a bunch, we need to reflect this change in the blocks. Therefore, the blocks that had an inertly reachable transition in the original bunch are split into subblocks that can either inertly reach the new bunch, the remainder of the original bunch, or both.
This idea is fleshed out in Algorithm~\ref{algo:main-algorithm-abstract}.
We first illustrate one step of the algorithm using the example in Figure~\ref{fig:example-highlevel}.

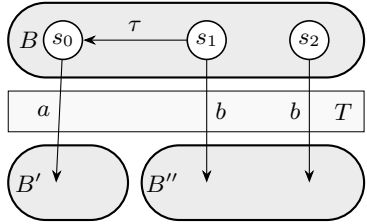
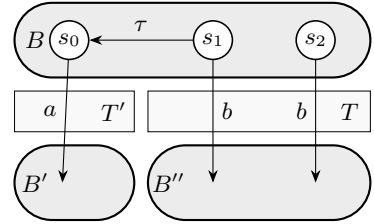
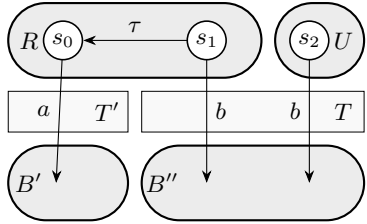
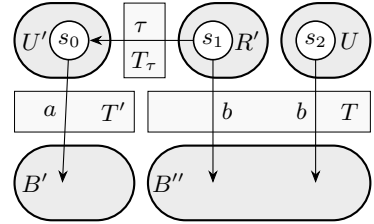
\begin{figure}[ht]
	\small
	\centering
	\begin{subfigure}[t]{.47\linewidth}
	\centering
	\begin{tikzpicture}[>=Stealth,scale=0.9]
		% Blocks
		\draw[block]
			(0,0) rectangle (5.2,1.1)
			(0,-1) rectangle (1.75,-2.1)
			(1.95,-1) rectangle (5.2,-2.1)
			;
		\node at(0.3,0.55){$B$};
		\node at(0.3,-1.55){$B'$};
		\node at(2.25,-1.55){$B''$};

		% Bunches
		\draw[bunch]
			(0,-0.8) rectangle (5.2,-0.2)
			;

		\node at(4.9,-0.5) {$T$};

		%block
		\node     [state](sr0) at (0.8,0.55){\makebox[\statetextwidth]{$s_0$}};
		\node     [state](sr1) at (2.9,0.55){\makebox[\statetextwidth]{$s_1$}};
		\node     [state](sr2) at (4.4,0.55){\makebox[\statetextwidth]{$s_2$}};

		\begin{scope}[->,line cap=rect]
			\draw (sr0) edge node[left,yshift=4pt]{$a$} (0.7,-1.55);
			\draw (sr1) edge node[above]{$\tau$} (sr0);
			\draw (sr1) edge node[right,yshift=4pt]{$b$} (2.9,-1.55);
			\draw (sr2) edge node[left,yshift=4pt]{$b$} (4.4,-1.55);
		\end{scope}
	\end{tikzpicture}
	\subcaption{Part of a labelled transition system with three states, $s_0$, $s_1$, $s_2$, divided into three blocks $B$, $B'$ and $B''$, and a single non-trivial bunch $T$.\label{fig:example-highlevel-start}}
	\end{subfigure}
	\hfill
	\begin{subfigure}[t]{.47\linewidth}
	\centering
	\begin{tikzpicture}[>=Stealth,scale=0.9]
		% Blocks
		\draw[block]
			(0,0) rectangle (5.2,1.1)
			(0,-1) rectangle (1.75,-2.1)
			(1.95,-1) rectangle (5.2,-2.1)
			;
		\node at(0.3,0.55){$B$};
		\node at(0.3,-1.55){$B'$};
		\node at(2.25,-1.55){$B''$};

		% Bunches
		\draw[bunch]
			(0,-0.8) rectangle (1.75,-0.2)
			(1.95,-0.8) rectangle (5.2,-0.2)
			;

		\node at(4.9,-0.5) {$T$};
		\node at(1.45,-0.5) {$T'$};

		%block
		\node     [state](sr0) at (0.8,0.55){\makebox[\statetextwidth]{$s_0$}};
		\node     [state](sr1) at (2.9,0.55){\makebox[\statetextwidth]{$s_1$}};
		\node     [state](sr2) at (4.4,0.55){\makebox[\statetextwidth]{$s_2$}};

		\begin{scope}[->,line cap=rect]
			\draw (sr0) edge node[left,yshift=4pt]{$a$} (0.7,-1.55);
			\draw (sr1) edge node[above]{$\tau$} (sr0);
			\draw (sr1) edge node[right,yshift=4pt]{$b$} (2.9,-1.55);
			\draw (sr2) edge node[left,yshift=4pt]{$b$} (4.4,-1.55);
		\end{scope}
	\end{tikzpicture}
	\subcaption{Situation after moving a small action-block-slice $T_{\pijl{a}B'}$ to its own bunch $T'$.\label{fig:example-highlevel-split-bunch}}
	\end{subfigure}

	\medskip

	\begin{subfigure}[t]{.47\linewidth}
	\centering
	\begin{tikzpicture}[>=Stealth,scale=0.9]
		% Blocks
		\draw[block]
			(0,0) rectangle (3.7,1.1)
			(3.9,0) rectangle (5.2,1.1)
			(0,-1) rectangle (1.75,-2.1)
			(1.95,-1) rectangle (5.2,-2.1)
			;
		\node at(0.3,0.55){$R$};
		\node at(4.9,0.55){$U$};
		\node at(0.3,-1.55){$B'$};
		\node at(2.25,-1.55){$B''$};

		% Bunches
		\draw[bunch]
			(0,-0.8) rectangle (1.75,-0.2)
			(1.95,-0.8) rectangle (5.2,-0.2)
			;

		\node at(4.9,-0.5) {$T$};
		\node at(1.45,-0.5) {$T'$};

		%block
		\node     [state](sr0) at (0.8,0.55){\makebox[\statetextwidth]{$s_0$}};
		\node     [state](sr1) at (2.9,0.55){\makebox[\statetextwidth]{$s_1$}};
		\node     [state](sr2) at (4.4,0.55){\makebox[\statetextwidth]{$s_2$}};

		\begin{scope}[->,line cap=rect]
			\draw (sr0) edge node[left,yshift=4pt]{$a$} (0.7,-1.55);
			\draw (sr1) edge node[above]{$\tau$} (sr0);
			\draw (sr1) edge node[right,yshift=4pt]{$b$} (2.9,-1.55);
			\draw (sr2) edge node[left,yshift=4pt]{$b$} (4.4,-1.55);
		\end{scope}
	\end{tikzpicture}
	\subcaption{
		Situation after splitting $B$ with respect to~$T'$.\label{fig:example-highlevel-split-block-vs-splitter}
	}
	\end{subfigure}
	\hfill
	\begin{subfigure}[t]{.47\linewidth}
	\centering
	\begin{tikzpicture}[>=Stealth,scale=0.9]
		% Blocks
		\draw[block]
			(0,0) rectangle (1.4,1.1)
			(2.4,0) rectangle (3.7,1.1)
			(3.9,0) rectangle (5.2,1.1)
			(0,-1) rectangle (1.75,-2.1)
			(1.95,-1) rectangle (5.2,-2.1)
			;
		\node at(0.3,0.55){$U'$};
		\node at(3.4,0.55){$R'$};
		\node at(4.9,0.55){$U$};
		\node at(0.3,-1.55){$B'$};
		\node at(2.25,-1.55){$B''$};

		% Bunches
		\draw[bunch]
			(0,-0.8) rectangle (1.75,-0.2)
			(1.95,-0.8) rectangle (5.2,-0.2)
			(1.6,0) rectangle (2.2,1.1)
			;

		\node at(4.9,-0.5) {$T$};
		\node at(1.45,-0.5) {$T'$};
		\node at(1.9,0.25) {$T_{\tau}$};

		%block
		\node     [state](sr0) at (0.8,0.55){\makebox[\statetextwidth]{$s_0$}};
		\node     [state](sr1) at (2.9,0.55){\makebox[\statetextwidth]{$s_1$}};
		\node     [state](sr2) at (4.4,0.55){\makebox[\statetextwidth]{$s_2$}};

		\begin{scope}[->,line cap=rect]
			\draw (sr0) edge node[left,yshift=4pt]{$a$} (0.7,-1.55);
			\draw (sr1) edge node[above]{$\tau$} (sr0);
			\draw (sr1) edge node[right,yshift=4pt]{$b$} (2.9,-1.55);
			\draw (sr2) edge node[left,yshift=4pt]{$b$} (4.4,-1.55);
		\end{scope}
	\end{tikzpicture}
	\subcaption{
		Situation after splitting $R$ with respect to~$T$.\label{fig:example-highlevel-split-block-vs-cosplitter}
	}
	\end{subfigure}

\caption{One step in the branching bisimulation algorithm}\label{fig:example-highlevel}
\end{figure}

\begin{example}
We start with a part of a labelled transition system in Figure~\ref{fig:example-highlevel-start}. We have three states, $s_0$, $s_1$, and $s_2$, with an inert transition $s_1 \pijl{\tau} s_0$, and a number of transitions in the bunch $T$. Note that $T$ is non-trivial: it contains two action-block-slices, $T_{\pijl{a}B'}$ and $T_{\pijl{b}B''}$.

To get closer to the situation where every bunch consists of exactly one action-block slice, the algorithm splits off a small action-block-slice from bunch $T$, and puts it into its own bunch $T'$. This situation is shown in Figure~\ref{fig:example-highlevel-split-bunch}.

To ensure that the invariant is maintained, we now need to split block $B$ with respect to $T'$ and $T$. We first split with respect to $T'$. This splits block $B$ into two blocks: $R$, which contains the states that can inertly reach a transition in $T'$, and $U$, containing those states from which $T'$ is unreachable inertly. This puts $s_0$ and $s_1$ in $R$, and $s_2$ in $U$. This is shown in Figure~\ref{fig:example-highlevel-split-block-vs-splitter}.

Block $U$ is stable with respect to $T$ since, according to the invariant that holds before splitting the bunch, all states that cannot inertly reach $T'$ must be able to inertly reach $T$.
Block $R$, however, is not yet stable with respect to $T$.
Therefore, $R$ must be split into the set of states from which a transition in $T$ is unreachable, denoted $U'$, and from which a transition in $T$ is reachable, denoted $R'$. This is shown in Figure~\ref{fig:example-highlevel-split-block-vs-cosplitter}.

Note that in this last split, the $\tau$-transition becomes non-inert, and a new bunch $T_\tau$ containing this transition is introduced. The algorithm needs to stabilise with respect to this new bunch, and it needs to ensure that $R'$, the block containing a new bottom state, is stabilised with respect to all other bunches it can inertly reach. In this particular example, $R'$ happens to be stable with respect to all bunches, so in this case, the required stabilisation has no effect.
\end{example}

%For administrative purposes all bunches are separated further into block-bunch-slices.
%These are non-empty collections of transitions
%that start in the same block and belong to the same bunch.
%If there is a block-bunch-slice for a certain block,
%there is a state with a transition in the bunch.
%Furthermore, when traversing the transitions in a block-bunch-slice,
%it is guaranteed that all the work done is related to the states of that block and bunch,
%without wasting time
%investigating irrelevant transitions or states.

\subsection{In-depth description of the algorithm}

\begin{algorithm}[!t]
	\caption{Abstract algorithm for branching bisimulation partitioning\label{algo:main-algorithm-abstract}}
	\algsetup{indent=0.85cm}
	\parbox{0.79\textwidth}{
	\small\begin{algorithmic}[1]
	\STATE	Find $\tau$-SCCs and contract each of them to a single state
		\label{alg-line-contract-SCCs}
		\rightbrace{4}{$\bigo{m}$}%
	\STATE $B_\text{vis} := \{ s \in S \mid s \text{ can inertly reach some } s' \pijl{a} s'' \}$;
	\label{alg-line-initialize-partitions-begin} \quad
		$B_\text{invis} := S \setminus B_\text{vis}$
	\STATE	$\Pi_s := \{ B_\text{vis}, B_\text{invis} \} \setminus \{ \emptyset \}$
		\label{alg-line-initialize-Pi_s-end}
	\STATE	$\Pi_t := \{ \{ s \pijl{a} s' \mid a \in \mathit{Act} \setminus \{\tau\}, s,s' \in S \} \cup B_\text{vis} \pijl{\tau} B_\text{invis} \}$
		\label{alg-line-initialize-partitions-end}
	\WHILE{a $T \in \Pi_t$ exists with $\nraB(T) > 1$} \label{algo:main:outerloop}
		\STATE	Select some $a \in \mathit{Act}$ and $B' \in \Pi_s$ such that  $\size{\TaB} \leq \frac{1}{2}\size{T}$
			\label{alg-line-select-small-bunch}
			\rightbrace[-1]{1}{$\leq m$ iterations}
		\STATE	$\Pi_t := (\Pi_t \setminus \{ T \}) \cup \{ \TaB, T \setminus \TaB \}$\label{alg:split_transitions}
		\rightbrace[-1]{9}{$\bigo{\size{\TaB}}$}%
		\FORALL {$B \in \splittableBlocks(\TaB)$\label{alg-line-markloop-begin}}
			\STATE	Add first $\TBaB$ and then $\TB \setminus \TBaB$ to the splitter list%
				\label{alg-line-find-splittable-block}. Label $\TBaB$ primary and $\TB \setminus \TBaB$ secondary%
	%		\STATE	Move source states of transitions in $\TBaB$ to the marked bottom/non bottom states in their blocks
					\STATE  Mark all transitions in $\TBaB$\label{alg-line-mark-all-transitions-in-primary}
			\STATE	For every state with both marked outgoing transitions and an outgoing transition in $\TB \setminus \TBaB$, mark one such transition
				\label{alg-line-mark-one-transition-in-secondary-splitter}
		\ENDFOR\label{alg-line-markloop-end}
		\FORALL	{$\TB'$ in the splitter list (in order)\label{alg-line-stabilize-begin}}
			\STATE	\label{alg:line:callsplit1}%
				$\langle R, U \rangle := \mathsf{split}(B, \TB')$%}
				\COMMENT{$R \supseteq B_{\pijl{T'}}$ can reach $\TB'$ and}%
			\item[]	\mbox{}\COMMENT{\makebox[0cm][l]{$U = B \setminus R$ cannot reach it}\phantom{$R \supseteq B_{\pijl{T'}}$ can reach $\TB'$ and}}%
			\STATE  Remove $\TB' = T'_{R \pijl{}}$ from the splitter list\label{alg:removeTBpijl}
				\rightbrace[-3]{1}{$\leq \size{\TaB}$ iterations}
			\STATE	$\Pi_s := (\Pi_s \setminus \{ B \}) \cup (\{ R, U \} \setminus \{ \emptyset \})$\label{alg-split1}
				% Replace $B$ in $\Pi_s$ by the blocks $R$ and $U$
				\rightbrace[-3]{10}{\parbox[l]{0.3\textwidth}{%
					$\bigO(\size{\Marked(\TB')} + \mbox{}$ \\
					$\phantom{\bigO(\!}\size{U_{\pijl{}}} + \size{U_{\lijp{}}} + \mbox{}$ \phantom{)}\\    % Phantom bracket for matching
					$\phantom{\bigO(\!}\size{\Bottom(N)_{\pijl{}}})$ \phantom{)}\\   % Phantom bracket for matching
					or $\bigo{\size{R_{\pijl{}}} + \size{R_{\lijp{}}}}$}}%
			\IF	{$\TB'$ is primary (note: $\TB'= \TBaB$)}
					\STATE	Remove $T_{U\pijl{}} \setminus T_{U\pijl{a} B'}$ from the splitter list\label{alg:line-remove-TU-after-primary-split}
					%\rightbrace[-1]{3}{$\bigo{1}$}%
			\ENDIF
			\IF	{$R \pijl{\tau} U \neq \emptyset$}\label{alg:inert_condition}
				\STATE	Create a new bunch containing exactly $R \mathbin{\pijl{\tau}} U$,
						add $R \mathbin{\pijl{\tau}} U = (R \pijl{\tau} U)_{R \pijl{}}$ to the splitter list,
						and mark all its transitions\label{alg:new_bunch}
						%\rightbrace[-2]{3}{$\bigo{\size{R_{\pijl{}}}}$ or $\bigo{\size{U_{\lijp{}}}}$}%
				\STATE	$\langle N, R' \rangle :=\mathsf{split}(R, R \pijl{\tau} U)$
							\label{alg:split-call2}
					\COMMENT{$N \supseteq R_{\pijl{R \pijl{\tau} U}}$ contains}%
				\item[]	\mbox{}\COMMENT{\makebox[0cm][l]{the new bottom states}\phantom{$N \supseteq R_{\pijl{R \pijl{\tau} U}}$ contains}}%
				\STATE	Remove $R \pijl{\tau} U = (R \pijl{\tau} U)_{N \pijl{}}$ from the splitter list
					\rightbrace[-2]{5}{\parbox[l]{0.3\textwidth}{%
						$\bigO(\size{R \pijl{\tau} U} + \mbox{}$ \\
						$\phantom{\bigO(\!}\size{R'_{\pijl{}}} + \size{R'_{\lijp{}}} + \mbox{}$ \phantom{)}\\   %Phantom closing bracket.
						$\phantom{\bigO(\!}\size{\Bottom(N)_{\pijl{}}})$ \phantom{)} \\    %Added phantom closing bracket.
						or $\bigo{\size{N_{\pijl{}}} + \size{N_{\lijp{}}}}$}}%
				\STATE	% Replace $R$ in $\Pi_s$ by the blocks $N$ and $R'$
									$\Pi_s := (\Pi_s \setminus \{ R \}) \cup (\{ N, R' \} \setminus \{ \emptyset \})$\label{alg-split2}
				\STATE	Add $N \pijl{\tau} R'$ to the bunch containing $R \pijl{\tau} U$\label{alg:extend_bunch}
				\STATE	Add all $T_{N \pijl{}}$ to the splitter list and label them secondary
						\label{alg:line-make-all-slices-unstable}
						\rightbrace{1}{$\bigo{\size{\Bottom^{*}(N)_{\pijl{}}}}$}%
				\STATE	For each bottom state, mark one of its outgoing transitions in every $T_{N \pijl{}}$ where it has one
						\label{alg:line-mark-bb-of-new-bottom}
						\rightbrace[-1]{2}{$\bigo{\size{\Bottom(N)_{\pijl{}}}}$}%
			\ENDIF
		\ENDFOR\label{alg-line-stabilize-end}
	\ENDWHILE\label{algo:main:outerloop-end}
	\end{algorithmic}
	}
	\end{algorithm}

The pseudocode of the algorithm is given in Algorithm~\ref{algo:main-algorithm-abstract}.
The algorithm works as follows, where we start with the initialisation.
First, at Line~\ref{alg-line-contract-SCCs},
all $\tau$-SCCs are contracted into a single state each (without $\tau$-loop).
All states in a $\tau$-SCC are branching bisimilar (as they can all reach the same states, possibly after a few inert transitions).
This preprocessing ensures that there are no $\tau$-cycles in the LTS (Invariant~\ref{inv:noloops} below)
and from every non-bottom state a bottom state can be reached via inert transitions (Lemma~\ref{lem::noloop} below).

%\dnj{Reformulated the conditions below -- the text is meant as a textual explanation of the algorithm, not as a repetition of the formulas in there.}
Second, at Lines~\ref{alg-line-initialize-partitions-begin}--\ref{alg-line-initialize-partitions-end},
we create the initial partitions $\Pi_s$ and $\Pi_t$ as follows.
Let $B_\text{vis}$ be the set of states from which a visible transition is inertly reachable,
and let $B_\text{invis}$ be the other states (i.e., the states from which only a deadlock state can be inertly reached).
Then $\Pi_s = \{ B_\text{vis}, B_\text{invis} \} \setminus \{ \emptyset \}$.
Initially, $\Pi_t$ contains one bunch consisting of all non-inert transitions.
All the block-bunch-slices induced by $\Pi_s$ and $\Pi_t$ are initially stable,
i.e., for all block-bunch-slices $\TB$, every bottom state of $B$, which must be $B_\text{vis}$, has a transition in $\TB$
(this is Invariant~\ref{inv:bunches} above).
%If every bunch in $\Pi_t$ consists of exactly one action-block-slice, in which case we call those bunches \textit{trivial,}
%then the partition of the states defines a branching bisimulation $\equiv_{\Pi_s}$.

If there are non-trivial bunches,
these bunches need to be split such that they ultimately become trivial.
The outer loop of the algorithm (Lines~\ref{algo:main:outerloop}--\ref{algo:main:outerloop-end}) takes a non-trivial bunch $T$ from $\Pi_t$,
and from this it moves a \emph{small} (containing at most half the number of transitions in the bunch) action-block-slice $\TaB$
into its own bunch in $\Pi_t$ (Line~\ref{alg:split_transitions}).
Hence, bunch $T$ is reduced to $T\setminus \TaB$.

The two new bunches $\TaB$ and $T\setminus \TaB$ can cause instability, violating Invariant~\ref{inv:bunches}.
This means there can be blocks with transitions in one new bunch,
but not all bottom states have transitions in that bunch
because some bottom states only have transitions in the other new bunch.
For such blocks, stability needs to be restored by splitting them.
The set $\splittableBlocks(\TaB)$ contains all blocks that have transitions in both new bunches $\TaB$ and $T\setminus \TaB$;
these blocks must be investigated.
All other blocks are stable with respect to the new bunches.

Earlier algorithms would now investigate all blocks to re-establish stability.
Instead, we investigate all block-bunch-slices in the smaller of the two new bunches. All blocks that do not have transitions in these block-bunch-slices are stable with respect to both bunches.
The first inner loop (Lines~\ref{alg-line-markloop-begin}--\ref{alg-line-markloop-end})
serves to insert all these block-bunch-slices into the splitter list.
Block-bunch-slices of the shape $\TBaB$ in the splitter list are labelled primary,
and all other block-bunch-slices in this list are labelled secondary.
%After a block has been split using a \emph{primary} splitter into blocks $\langle R,U\rangle$ (Line~\ref{alg:line:callsplit1}),
% block $U$ is stable w.r.t.\@ $\TB \setminus \TBaB$.
%At Line~\ref{alg:line-remove-TU-after-primary-split}, block-bunch-slice $T_{U\pijl{}} \setminus T_{U\pijl{a} B'}$ can be removed from the splitter list
%without investigating it. This is crucial for the complexity, and essentially it is the translation of the three-way split from~\cite{PT87}.

This loop also marks some transitions.
The function of this marking is similar to that of the counters in \cite{PT87}:
it serves to determine quickly whether a bottom state has a transition in a secondary splitter $\TB \setminus \TBaB$
(or slices that are the result of splitting this slice).
Invariant~\ref{inv:marked_transitions} below indicates
that a bottom state has transitions in some splitter block-bunch-slice
if and only if it has marked transitions in this slice.
The marked transitions are stored separately in a block-bunch-slice
and therefore can be visited without spending time on unmarked transitions of the block.
This is essential to obtain the time complexity of the algorithm,
as we are allowed to perform one unit of work per transition in $\TaB$
(the smaller of the two new bunches),
and since $\size{\Marked(\TB \setminus \TBaB)} \leq \size{\TBaB}$, we do not mark too many transitions.

In the second loop (Lines~\ref{alg-line-stabilize-begin}--\ref{alg-line-stabilize-end}), one splitter $\TB'$ from the splitter list is
taken at a time
and its source block is split into $R$ (the part that can inertly reach some transition in $\TB'$) and $U$ (the part that cannot inertly reach $\TB'$) to
re-establish stability.
Formally, the routine $\mathsf{split}(B,T)$ delivers the pair $\langle R,U\rangle$ defined by:
\begin{equation}\begin{array}{l}
\label{eqn:splitproperty}
R =\{ s\in B\mid s\pijl{\tau}s_1\pijl{\tau}\cdots\pijl{\tau}s_n\pijl{a}s'\textrm{ where } s_1,\ldots, s_{n} \in B, s_n\pijl{a}s'\in T\},\\
U = B\setminus R.
\end{array}\end{equation}
We detail its algorithm and argue for its correctness in Section~\ref{sec:splitting_blocks}.

If $\TB'$ was a primary splitter of the form $\TBaB$ added to the splitter list at Line~\ref{alg-line-find-splittable-block},
then we know that $U$ must be stable under $T_{U\pijl{}} \setminus T_{U\pijl{a} B'}$, as
every bottom state in $B$ has a transition in the former block-bunch-slice $\TB$,
and as the states in $U$ have no transition in $\TBaB$,
every bottom state in $U$ must have a transition in $\TB \setminus \TBaB$.
Therefore, at Line~\ref{alg:line-remove-TU-after-primary-split}, block-bunch-slice $T_{U\pijl{}} \setminus T_{U\pijl{a} B'}$ can be removed from the splitter list
without investigating it. This is crucial for the complexity, and essentially it is the translation of the three-way split from~\cite{PT87}.

Some invisible transitions may have become non-inert, namely the $\tau$-transitions that go from $R$ to $U$.
There cannot be $\tau$-transitions from $U$ to $R$ as
otherwise, from the source state of such a transition, $\TB'$ could be inertly reached,
so it should have been added to $R$ instead of $U$.
The new non-inert transitions were not yet part of a bunch in $\Pi_t$.
So, a new bunch $R\pijl{\tau}U$ is formed containing these transitions. All transitions in this new bunch leave block
$R$ and therefore $R$ is the only block that may not be stable under this new bunch. To re-establish stability
we split $R$ into blocks $\langle N, R'\rangle$ (Line~\ref{alg:split-call2}). Observe that there can be transitions $N\pijl{\tau}R'$ that also become
non-inert, and we add those to the new bunch $R\pijl{\tau}U$.

In $N$, i.e., the set of states that can inertly reach some transition in $R \pijl{\tau} U$,
some states are new bottom states, while $R'$ contains all original bottom states of $R$.
In accordance with the observations in \cite{GV90} blocks containing new bottom states can become unstable under any
block-bunch-slice. Therefore, stability under all those block-bunch-slices must be re-established and therefore all the block-bunch-slices
leaving block $N$ are put on the splitter list.
We mark one of the transitions in every new bottom state
such that we can find the bottom states with a transition in $T_{N\pijl{}}$ in time proportional to the number of such new bottom states.

This algorithm is repeated until all action-block-slices coincide with the bunches. In the next section we prove that the resulting
partition $\Pi_s$ exactly coincides with branching bisimilarity.

We illustrate the algorithm in the following example. Note that the example also illustrates some of the details of the $\mathsf{split}$ subroutine, which is discussed in detail in Section~\ref{sec:splitting_blocks}.

\begin{figure}[!htp]
	\centering
	\begin{subfigure}[t]{.3\linewidth}
	\centering
	\begin{tikzpicture}[>=Stealth,scale=0.9]
		% Blocks
		\draw[block]
			(0,0) rectangle (4.5,2.8);
		\node at(2.9,2.55){$B$};

		% Bunches
		\draw[bunch]
			(0,-1.2) rectangle (1.6,-0.2)
			(1.8,-1.2) rectangle (4.5,-0.2)
			(0,3) rectangle (4.5,4);

		\node at(1.1,-0.5) {$\TaB$};
		\node[align=left] at(2.3,-0.7) {$T \setminus \mbox{}$ \\ $\TaB$};
		\node at(0.3,3.7) {$T'$};

		%block

		\node     [state](sr0) at (3.95,0.55){\makebox[\statetextwidth]{}};
		\node     [state](sr1) at (2.65,0.55){\makebox[\statetextwidth]{}};
		\node     [state](sr2) at (3.55,2.25){\makebox[\statetextwidth]{}};
		\node     [state](sr3) at (0.55,0.55){\makebox[\statetextwidth]{}};
		\node     [state](sr4) at (2.25,2.25){\makebox[\statetextwidth]{}};
		\node     [state](sr5) at (0.55,2.25){\makebox[\statetextwidth]{}};

		\begin{scope}[->,line cap=rect]
			\draw (sr0) to (4.5,4.4);
			\draw (sr0) to (4.4,-1.8);
			\draw (sr0) to (4.2,-1.7);
			\draw (sr0) to (4,-1.6);
			\draw (sr1) to (3.1,-1.8);
			\draw (sr1) to (3.3,-1.6);
			\draw[out=200,in=90] (sr1) to (1.4,-1.6);
			%\draw (sr1) to (3.8,4.3);
			\draw (sr1) to (sr3);
			\draw (sr2) to (3.5,4.4);
			\draw (sr2) to (sr1);
			\draw (sr2) to (sr3);
			\draw (sr3) to (0.75,-1.6);
			\draw (sr3) to (sr5);
			\draw (sr4) to (sr2);
			\draw (sr4) to (sr3);
			\draw (sr4) to (sr5);
			\draw[out=250,in=110] (sr5) to (0.6,-1.8);
			\draw (sr5) to (1,4.4);

		\end{scope}
	\end{tikzpicture}
	\subcaption{\label{fig:exa:split-a}}
	\end{subfigure}
	\hfill\hfill
	\begin{subfigure}[t]{.3\linewidth}
		\centering
	\begin{tikzpicture}[>=Stealth,scale=0.9]
		% Blocks
		\draw[block]
			(0,0) rectangle (4.5,2.8);
		\node at(2.9,2.55){$B$};

		% Bunches
		\draw[bunch]
			(0,-1.2) rectangle (1.6,-0.2)
			(1.8,-1.2) rectangle (4.5,-0.2)
			(0,3) rectangle (4.5,4);

		\node at(1.1,-0.5) {$\TaB$};
		\node[align=left] at(2.3,-0.7) {$T \setminus \mbox{}$ \\ $\TaB$};
		\node at(0.3,3.7) {$T'$};

		%block

		\node     [blue state](sr0) at (3.95,0.55){\makebox[\statetextwidth]{}};
		\node     [red state](sr1) at (2.65,0.55){\makebox[\statetextwidth]{}};
		\node     [state](sr2) at (3.55,2.25){\makebox[\statetextwidth]{}};
		\node     [red state](sr3) at (0.55,0.55){\makebox[\statetextwidth]{}};
		\node     [state](sr4) at (2.25,2.25){\makebox[\statetextwidth]{}};
		\node     [red state](sr5) at (0.55,2.25){\makebox[\statetextwidth]{}};

		\begin{scope}[->,line cap=rect]
			\draw (sr0) to (4.5,4.4);
			\draw (sr0) to (4.4,-1.8);
			\draw (sr0) to (4.2,-1.7);
			\draw (sr0) to (4,-1.6);
			\draw (sr1) to node[gray,pos=0.615]{\small $m$} (3.1,-1.8);
			\draw (sr1) to (3.3,-1.6);
			\draw[out=200,in=90] (sr1) to node[red,pos=0.7723]{\small $m$} (1.4,-1.6);
			%\draw (sr1) to (3.8,4.3);
			\draw (sr1) to (sr3);
			\draw (sr2) to (3.5,4.4);
			\draw (sr2) to (sr1);
			\draw (sr2) to (sr3);
			\draw (sr3) to node[red,pos=0.68]{\small $m$} (0.75,-1.6);
			\draw (sr3) to (sr5);
			\draw (sr4) to (sr2);
			\draw (sr4) to (sr3);
			\draw (sr4) to (sr5);
			\draw[out=250,in=110] (sr5) to node[red,pos=0.8]{\small $m$} (0.6,-1.8);
			\draw (sr5) to (1,4.4);

		\end{scope}

	\end{tikzpicture}
	\subcaption{\label{fig:exa:split-b}}
	\end{subfigure}
	\hfill\hfill
	\begin{subfigure}[t]{.3\linewidth}
		\centering
	\begin{tikzpicture}[>=Stealth,scale=0.9]
		% Blocks
		\draw[block]
			(3.4,0) rectangle (4.5,1.5);
		\node at(3.75,1.1){$U$};

		\draw[block]
			(0,0) -- (0,2.8) [sharp corners] -- (3.55,2.8) arc[radius=0.55cm,start angle=90,end angle=-55.79454]
			-- ++(214.20546:0.4) arc[radius=0.75cm,start angle=124.20546,end angle=180] [rounded corners=4.95mm] -- (3.2,0) -- cycle;
		\node at(1.45,2.55){$R$};

		% Bunches
		\draw[bunch]
			(0,-1.2) rectangle (1.6,-0.2)
			(1.8,-1.2) rectangle (4.5,-0.2)
			(0,3) rectangle (4.5,4);

		\node at(1.1,-0.5) {$\TaB$};
		\node[align=left] at(2.3,-0.7) {$T \setminus \mbox{}$ \\ $\TaB$};
		\node at(0.3,3.7) {$T'$};

		%block

		\node     [state](sr0) at (3.95,0.55){\makebox[\statetextwidth]{}};
		\node     [red state](sr1) at (2.65,0.55){\makebox[\statetextwidth]{}};
		\node     [state](sr2) at (3.55,2.25){\makebox[\statetextwidth]{}};
		\node     [state](sr3) at (0.55,0.55){\makebox[\statetextwidth]{}};
		\node     [state](sr4) at (2.25,2.25){\makebox[\statetextwidth]{}};
		\node     [blue state](sr5) at (0.55,2.25){\makebox[\statetextwidth]{}};

		\begin{scope}[->,line cap=rect]
			\draw (sr0) to (4.5,4.4);
			\draw (sr0) to (4.4,-1.8);
			\draw (sr0) to (4.2,-1.7);
			\draw (sr0) to (4,-1.6);
			\draw (sr1) to node[red,pos=0.615]{\small $m$} (3.1,-1.8);
			\draw (sr1) to (3.3,-1.6);
			\draw[out=200,in=90] (sr1) to (1.4,-1.6);
			%\draw (sr1) to (3.8,4.3);
			\draw (sr1) to (sr3);
			\draw (sr2) to (3.5,4.4);
			\draw (sr2) to (sr1);
			\draw (sr2) to (sr3);
			\draw (sr3) to (0.75,-1.6);
			\draw (sr3) to (sr5);
			\draw (sr4) to (sr2);
			\draw (sr4) to (sr3);
			\draw (sr4) to (sr5);
			\draw[out=250,in=110] (sr5) to (0.6,-1.8);
			\draw (sr5) to (1,4.4);

		\end{scope}
	\end{tikzpicture}
	\subcaption{\label{fig:exa:split-c}}
	\end{subfigure}

	\bigskip

	\bigskip

	\bigskip

	\begin{subfigure}[t]{.3\linewidth}
		\centering
	\begin{tikzpicture}[>=Stealth,scale=0.9]
		% Blocks
		\draw[block]
			(3.4,0) rectangle (4.5,1.5);
		\node at(3.75,1.1){$U$};

		\draw[block]
			(0,0) -- (0,2.8) [sharp corners] -- (3.55,2.8) arc[radius=0.55cm,start angle=90,end angle=-55.79454]
			-- ++(214.20546:0.4) arc[radius=0.75cm,start angle=124.20546,end angle=180] [rounded corners=4.95mm] -- (3.2,0) -- cycle;
		\node at(1.45,2.55){$R$};

		% Bunches
		\draw[bunch]
			(0,-1.2) rectangle (1.6,-0.2)
			(1.8,-1.2) rectangle (4.5,-0.2)
			(0,3) rectangle (4.5,4);

		\node at(1.1,-0.5) {$\TaB$};
		\node[align=left] at(2.3,-0.7) {$T \setminus \mbox{}$ \\ $\TaB$};
		\node at(0.3,3.7) {$T'$};

		%block

		\node     [state](sr0) at (3.95,0.55){\makebox[\statetextwidth]{}};
		\node     [red state](sr1) at (2.65,0.55){\makebox[\statetextwidth]{}};
		\node     [state](sr2) at (3.55,2.25){\makebox[\statetextwidth]{}};
		\node     [state](sr3) at (0.55,0.55){\makebox[\statetextwidth]{}};
		\node     [state](sr4) at (2.25,2.25){\makebox[\statetextwidth]{\raisebox{-0.3ex}{2}}};
		\node     [blue state](sr5) at (0.55,2.25){\makebox[\statetextwidth]{}};

		\begin{scope}[->,line cap=rect]
			\draw (sr0) to (4.5,4.4);
			\draw (sr0) to (4.4,-1.8);
			\draw (sr0) to (4.2,-1.7);
			\draw (sr0) to (4,-1.6);
			\draw (sr1) to (3.1,-1.8);
			\draw (sr1) to node[red,pos=0.635]{\small\,\checkmark} (3.3,-1.6);
			\draw[out=200,in=90] (sr1) to (1.4,-1.6);
			%\draw (sr1) to (3.8,4.3);
			\draw (sr1) to (sr3);
			\draw (sr2) to (3.5,4.4);
			\draw (sr2) to (sr1);
			\draw (sr2) to (sr3);
			\draw (sr3) to (0.75,-1.6);
			\draw (sr3) to (sr5);
			\draw (sr4) to (sr2);
			\draw (sr4) to (sr3);
			\draw (sr4) to node[blue,pos=0.45]{\small\checkmark} (sr5);
			\draw[out=250,in=110] (sr5) to (0.6,-1.8);
			\draw (sr5) to (1,4.4);

		\end{scope}
	\end{tikzpicture}
	\subcaption{\label{fig:exa:split-d}}
	\end{subfigure}
	\hfill\hfill
	\begin{subfigure}[t]{.3\linewidth}
		\centering
	\begin{tikzpicture}[>=Stealth,scale=0.9]
		% Blocks
		\draw[block]
			(3.4,0) rectangle (4.5,1.5);
		\node at(3.75,1.1){$U$};

		\draw[block]
			(0,0) -- (0,2.8) [sharp corners] -- (3.55,2.8) arc[radius=0.55cm,start angle=90,end angle=-55.79454]
			-- ++(214.20546:0.4) arc[radius=0.75cm,start angle=124.20546,end angle=180] [rounded corners=4.95mm] -- (3.2,0) -- cycle;
		\node at(1.45,2.55){$R$};

		% Bunches
		\draw[bunch]
			(0,-1.2) rectangle (1.6,-0.2)
			(1.8,-1.2) rectangle (4.5,-0.2)
			(0,3) rectangle (4.5,4);

		\node at(1.1,-0.5) {$\TaB$};
		\node[align=left] at(2.3,-0.7) {$T \setminus \mbox{}$ \\ $\TaB$};
		\node at(0.3,3.7) {$T'$};

		%block

		\node     [state](sr0) at (3.95,0.55){\makebox[\statetextwidth]{}};
		\node     [red state](sr1) at (2.65,0.55){\makebox[\statetextwidth]{}};
		\node     [red state](sr2) at (3.55,2.25){\makebox[\statetextwidth]{}};
		\node     [blue state](sr3) at (0.55,0.55){\makebox[\statetextwidth]{\raisebox{-0.3ex}{0}}};
		\node     [red state](sr4) at (2.25,2.25){\makebox[\statetextwidth]{\raisebox{-0.3ex}{2}}};
		\node     [blue state](sr5) at (0.55,2.25){\makebox[\statetextwidth]{}};

		\begin{scope}[->,line cap=rect]
			\draw (sr0) to (4.5,4.4);
			\draw (sr0) to (4.4,-1.8);
			\draw (sr0) to (4.2,-1.7);
			\draw (sr0) to (4,-1.6);
			\draw (sr1) to (3.1,-1.8);
			\draw (sr1) to node[red,pos=0.635]{\small\,\checkmark} (3.3,-1.6);
			\draw[out=200,in=90] (sr1) to (1.4,-1.6);
			%\draw (sr1) to (3.8,4.3);
			\draw (sr1) to node[blue,pos=0.45]{\small\checkmark} (sr3);
			\draw (sr2) to (3.5,4.4);
			\draw (sr2) to node[red,pos=0.47]{\small\checkmark} (sr1);
			\draw (sr2) to (sr3);
			\draw (sr3) to (0.75,-1.6);
			\draw (sr3) to node[blue,pos=0.45]{\small\checkmark} (sr5);
			\draw (sr4) to node[red,pos=0.45]{\small\checkmark} (sr2);
			\draw (sr4) to (sr3);
			\draw (sr4) to node[blue,pos=0.45]{\small\checkmark} (sr5);
			\draw[out=250,in=110] (sr5) to (0.6,-1.8);
			\draw (sr5) to (1,4.4);

		\end{scope}
	\end{tikzpicture}
	\subcaption{\label{fig:exa:split-e}}
	\end{subfigure}
	\hfill\hfill
	\begin{subfigure}[t]{.3\linewidth}
		\centering
	\begin{tikzpicture}[>=Stealth,scale=0.9]
		% Blocks
		\draw[block]
			(0,0) rectangle (1.1,2.8)
			(3.4,0) rectangle (4.5,1.5);
		\node at(0.85,1.4){$U_1$};
		\node at(3.75,1.1){$U$};

		\draw[block]
			(1.7,0) -- (1.7,2.8) [sharp corners] -- (3.55,2.8) arc[radius=0.55cm,start angle=90,end angle=-55.79454]
			-- ++(214.20546:0.4) arc[radius=0.75cm,start angle=124.20546,end angle=180] [rounded corners=4.95mm] -- (3.2,0) -- cycle;
		\node at(2.9,2.55){$R_1$};

		% Bunches
		\draw[bunch]
			(1.2,0) rectangle (1.6,2.8)
			(0,-1.2) rectangle (1.6,-0.2)
			(1.8,-1.2) rectangle (4.5,-0.2)
			(0,3) rectangle (4.5,4);

		\node at(1.1,-0.5) {$\TaB$};
		\node[align=left] at(2.3,-0.7) {$T \setminus \mbox{}$ \\ $\TaB$};
		\node at(0.3,3.7) {$T'$};

		%block

		\node     [state](sr0) at (3.95,0.55){\makebox[\statetextwidth]{}};
		\node     [state](sr1) at (2.65,0.55){\makebox[\statetextwidth]{\raisebox{-0.3ex}{nb}}};
		\node     [state](sr2) at (3.55,2.25){\makebox[\statetextwidth]{}};
		\node     [state](sr3) at (0.55,0.55){\makebox[\statetextwidth]{}};
		\node     [state](sr4) at (2.25,2.25){\makebox[\statetextwidth]{}};
		\node     [state](sr5) at (0.55,2.25){\makebox[\statetextwidth]{}};

		\begin{scope}[->,line cap=rect]
			\draw (sr0) to (4.5,4.4);
			\draw (sr0) to (4.4,-1.8);
			\draw (sr0) to (4.2,-1.7);
			\draw (sr0) to (4,-1.6);
			\draw (sr1) to (3.1,-1.8);
			\draw (sr1) to (3.3,-1.6);
			\draw[out=200,in=90] (sr1) to (1.4,-1.6);
			%\draw (sr1) to (3.8,4.3);
			\draw (sr1) to node[red,pos=0.632]{\small $m$} (sr3);
			\draw (sr2) to (3.5,4.4);
			\draw (sr2) to (sr1);
			\draw (sr2) to node[red,pos=0.762]{\small $m$} (sr3);
			\draw (sr3) to (0.75,-1.6);
			\draw (sr3) to (sr5);
			\draw (sr4) to (sr2);
			\draw (sr4) to node[red,pos=0.5]{\small $m$} (sr3);
			\draw (sr4) to node[red,pos=0.5]{\small $m$} (sr5);
			\draw[out=250,in=110] (sr5) to (0.6,-1.8);
			\draw (sr5) to (1,4.4);

		\end{scope}
	\end{tikzpicture}
	\subcaption{\label{fig:exa:split-f}}
	\end{subfigure}

	\bigskip

	\bigskip

	\bigskip

	\begin{subfigure}[t]{.3\linewidth}
		\centering
	\begin{tikzpicture}[>=Stealth,scale=0.9]
		% Blocks
		\draw[block]
			(0,0) rectangle (1.1,2.8)
			(3.4,0) rectangle (4.5,1.5);
		\node at(0.85,1.4){$U_1$};
		\node at(3.75,1.1){$U$};

		\draw[block]
			(1.7,0) -- (1.7,2.8) [sharp corners] -- (3.55,2.8) arc[radius=0.55cm,start angle=90,end angle=-55.79454]
			-- ++(214.20546:0.4) arc[radius=0.75cm,start angle=124.20546,end angle=180] [rounded corners=4.95mm] -- (3.2,0) -- cycle;
		\node at(2.9,2.55){$R_1$};

		% Bunches
		\draw[bunch]
			(1.2,0) rectangle (1.6,2.8)
			(0,-1.2) rectangle (1.6,-0.2)
			(1.8,-1.2) rectangle (4.5,-0.2)
			(0,3) rectangle (4.5,4);

		\node at(1.1,-0.5) {$\TaB$};
		\node[align=left] at(2.3,-0.7) {$T \setminus \mbox{}$ \\ $\TaB$};
		\node at(0.3,3.7) {$T'$};

		%block

		\node     [state](sr0) at (3.95,0.55){\makebox[\statetextwidth]{}};
		\node     [state](sr1) at (2.65,0.55){\makebox[\statetextwidth]{\raisebox{-0.3ex}{nb}}};
		\node     [state](sr2) at (3.55,2.25){\makebox[\statetextwidth]{}};
		\node     [state](sr3) at (0.55,0.55){\makebox[\statetextwidth]{}};
		\node     [state](sr4) at (2.25,2.25){\makebox[\statetextwidth]{}};
		\node     [state](sr5) at (0.55,2.25){\makebox[\statetextwidth]{}};

		\begin{scope}[->,line cap=rect]
			\draw (sr0) to (4.5,4.4);
			\draw (sr0) to (4.4,-1.8);
			\draw (sr0) to (4.2,-1.7);
			\draw (sr0) to (4,-1.6);
			\draw (sr1) to node[gray,pos=0.615]{\small $m$} (3.1,-1.8);
			\draw (sr1) to (3.3,-1.6);
			\draw[out=200,in=90] (sr1) to node[gray,pos=0.7723]{\small $m$} (1.4,-1.6);
			%\draw (sr1) to (3.8,4.3);
			\draw (sr1) to (sr3);
			\draw (sr2) to (3.5,4.4);
			\draw (sr2) to (sr1);
			\draw (sr2) to (sr3);
			\draw (sr3) to (0.75,-1.6);
			\draw (sr3) to (sr5);
			\draw (sr4) to (sr2);
			\draw (sr4) to (sr3);
			\draw (sr4) to (sr5);
			\draw[out=250,in=110] (sr5) to (0.6,-1.8);
			\draw (sr5) to (1,4.4);

		\end{scope}
	\end{tikzpicture}
	\subcaption{\label{fig:exa:split-g}}
	\end{subfigure}
	\hfill\hfill
	\begin{subfigure}[t]{.3\linewidth}
		\centering
	\begin{tikzpicture}[>=Stealth,scale=0.9]
		% Blocks
		\draw[block]
			(0,0) rectangle (1.1,2.8)
			(3.4,0) rectangle (4.5,1.5);
		\node at(0.85,1.4){$U_1$};
		\node at(3.75,1.1){$U$};

		\draw[block]
			(1.7,0) -- (1.7,2.8) [sharp corners] -- (3.55,2.8) arc[radius=0.55cm,start angle=90,end angle=-55.79454]
			-- ++(214.20546:0.4) arc[radius=0.75cm,start angle=124.20546,end angle=180] [rounded corners=4.95mm] -- (3.2,0) -- cycle;
		\node at(2.9,2.55){$R_1$};

		% Bunches
		\draw[bunch]
			(1.2,0) rectangle (1.6,2.8)
			(0,-1.2) rectangle (1.6,-0.2)
			(1.8,-1.2) rectangle (4.5,-0.2)
			(0,3) rectangle (4.5,4);

		\node at(1.1,-0.5) {$\TaB$};
		\node[align=left] at(2.3,-0.7) {$T \setminus \mbox{}$ \\ $\TaB$};
		\node at(0.3,3.7) {$T'$};

		%block

		\node     [state](sr0) at (3.95,0.55){\makebox[\statetextwidth]{}};
		\node     [blue state](sr1) at (2.65,0.55){\makebox[\statetextwidth]{}};
		\node     [state](sr2) at (3.55,2.25){\makebox[\statetextwidth]{}};
		\node     [state](sr3) at (0.55,0.55){\makebox[\statetextwidth]{}};
		\node     [state](sr4) at (2.25,2.25){\makebox[\statetextwidth]{}};
		\node     [state](sr5) at (0.55,2.25){\makebox[\statetextwidth]{}};

		\begin{scope}[->,line cap=rect]
			\draw (sr0) to (4.5,4.4);
			\draw (sr0) to (4.4,-1.8);
			\draw (sr0) to (4.2,-1.7);
			\draw (sr0) to (4,-1.6);
			\draw (sr1) to (3.1,-1.8);
			\draw (sr1) to (3.3,-1.6);
			\draw[out=200,in=90] (sr1) to (1.4,-1.6);
			%\draw (sr1) to (3.8,4.3);
			\draw (sr1) to (sr3);
			\draw (sr2) to (3.5,4.4);
			\draw (sr2) to (sr1);
			\draw (sr2) to (sr3);
			\draw (sr3) to (0.75,-1.6);
			\draw (sr3) to (sr5);
			\draw (sr4) to (sr2);
			\draw (sr4) to (sr3);
			\draw (sr4) to (sr5);
			\draw[out=250,in=110] (sr5) to (0.6,-1.8);
			\draw (sr5) to (1,4.4);

		\end{scope}
	\end{tikzpicture}
	\subcaption{\label{fig:exa:split-h}}
	\end{subfigure}
	\hfill\hfill
	\begin{subfigure}[t]{.3\linewidth}
		\centering
	\begin{tikzpicture}[>=Stealth,scale=0.9]

		% Blocks
		\draw[block]
			(0,0) rectangle (1.1,2.8)
			(1.7,1.7) rectangle (4.1,2.8)
			(1.7,0) rectangle (3.2,1.1)
			(3.4,0) rectangle (4.5,1.5);
		\node at(0.85,1.4){$U_1$};
		\node at(2.15,0.8){$U_2$};
		\node at(2.9,2.55){$R_2$};
		\node at(3.75,1.1){$U$};

		% Bunches
		\draw[bunch]
			(1.2,0) rectangle (1.6,2.8)
			(2.5,1.2) rectangle (3.3,1.6)
			(0,-1.2) rectangle (1.6,-0.2)
			(1.8,-1.2) rectangle (4.5,-0.2)
			(0,3) rectangle (4.5,4);

		\node at(1.1,-0.5) {$\TaB$};
		\node[align=left] at(2.3,-0.7) {$T \setminus \mbox{}$ \\ $\TaB$};
		\node at(0.3,3.7) {$T'$};

		%block

		\node     [state](sr0) at (3.95,0.55){\makebox[\statetextwidth]{}};
		\node     [state](sr1) at (2.65,0.55){\makebox[\statetextwidth]{}};
		\node     [state](sr2) at (3.55,2.25){\makebox[\statetextwidth]{\raisebox{-0.3ex}{nb}}};
		\node     [state](sr3) at (0.55,0.55){\makebox[\statetextwidth]{}};
		\node     [state](sr4) at (2.25,2.25){\makebox[\statetextwidth]{}};
		\node     [state](sr5) at (0.55,2.25){\makebox[\statetextwidth]{}};

		\begin{scope}[->,line cap=rect]
			\draw (sr0) to (4.5,4.4);
			\draw (sr0) to (4.4,-1.8);
			\draw (sr0) to (4.2,-1.7);
			\draw (sr0) to (4,-1.6);
			\draw (sr1) to (3.1,-1.8);
			\draw (sr1) to (3.3,-1.6);
			\draw[out=200,in=90] (sr1) to (1.4,-1.6);
			%\draw (sr1) to (3.8,4.3);
			\draw (sr1) to (sr3);
			\draw (sr2) to (3.5,4.4);
			\draw (sr2) to node[red]{\small $m$} (sr1);
			\draw (sr2) to (sr3);
			\draw (sr3) to (0.75,-1.6);
			\draw (sr3) to (sr5);
			\draw (sr4) to (sr2);
			\draw (sr4) to (sr3);
			\draw (sr4) to (sr5);
			\draw[out=250,in=110] (sr5) to (0.6,-1.8);
			\draw (sr5) to (1,4.4);

		\end{scope}
	\end{tikzpicture}
	\subcaption{\label{fig:exa:split-i}}
	\end{subfigure}

	\caption{Illustration of splitting of a small block from $T$ and stabilising block $B$ with respect to the new bunches $\TaB$ and $T \setminus \TaB$, as explained in Example~\ref{exa:split}.\label{fig:exa:split}}
	\end{figure}
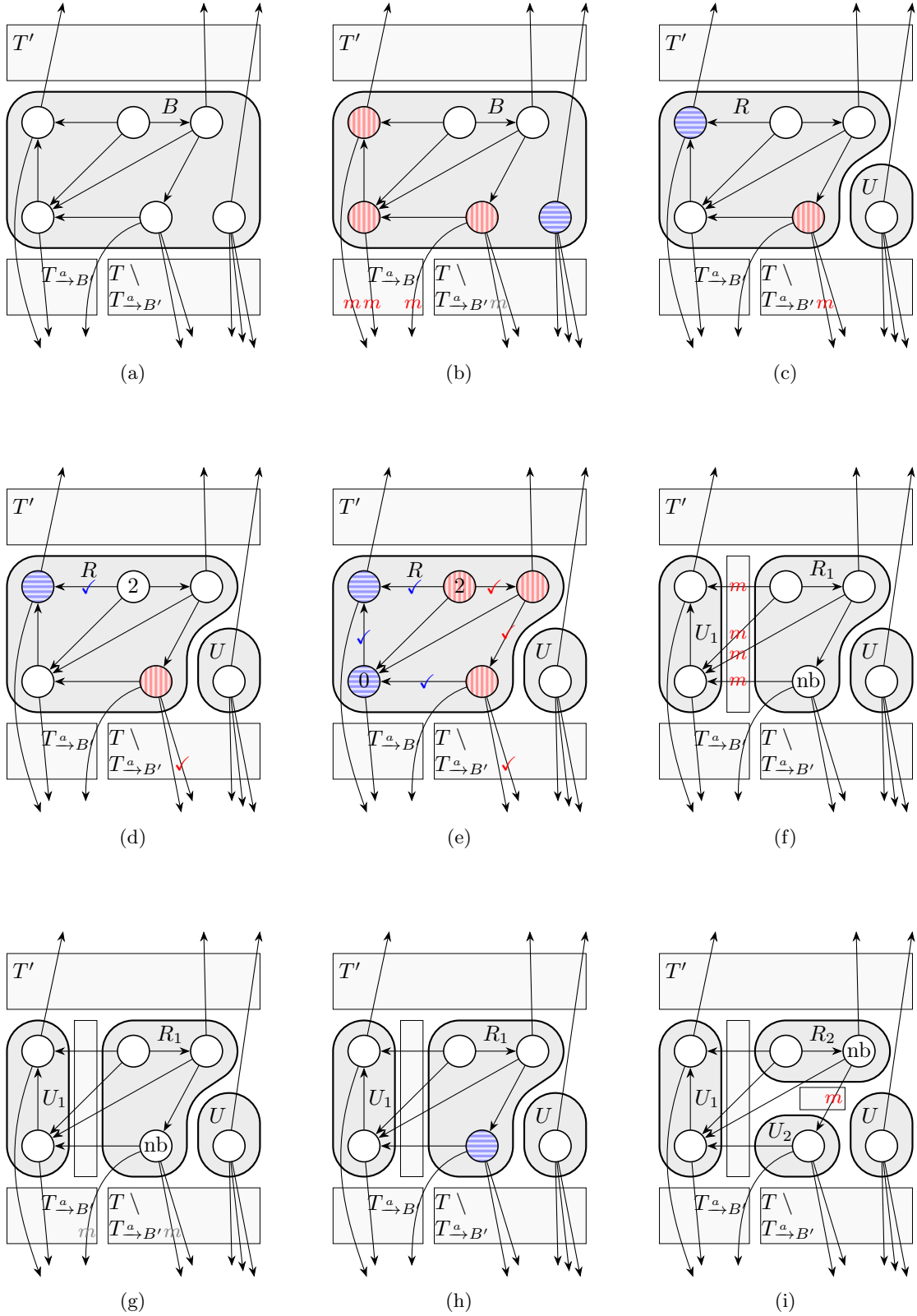

\begin{example}\label{exa:split}
Consider the situation in Figure~\ref{fig:exa:split-a}.
Observe that block $B$ is stable w.r.t.\@ the bunches $T$ and $T'$.
We have split off a small bunch $\TaB$ from $T$, and as a consequence, $B$ needs to be restabilised.
The bunches put on the splitter list initially are $\TaB$ and $T \setminus \TaB$.
When putting these bunches on the splitter list, all transitions in $\TBaB$ are marked, see the {\color{red}$m$}'s in Figure~\ref{fig:exa:split-b}.
Also, for states that have transitions both in $\TaB$ and in $T \setminus \TaB$,
one such transition in the latter bunch is marked, see the {\color{gray}$m$}'s in Figure~\ref{fig:exa:split-b}.

We now first split $B$ w.r.t.\@ the primary splitter $\TaB$ into $R$, the states that can inertly reach $\TaB$, and $U$, the states that cannot. In Figure~\ref{fig:exa:split-b}, the states known to be destined for $R$ are indicated \redstatetext, the states known to be destined for $U$ are indicated \bluestatetext.
Initially, all states with a marked outgoing transition are destined for $R$, the remaining bottom state of $B$ is destined for $U$.
The $\mathsf{split}$ subroutine proceeds to extend sets $R$ and $U$ in a backwards fashion using two coroutines, marking a state destined for $R$ if one of its successors is already in $R$, and marking a state destined for $U$ if all its successors are in $U$.
In this case, the state in $U$ does not have any incoming inert transitions, so its coroutine immediately terminates and all other states belong to $R$. Block $B$ is split into $R$ and $U$.
The resulting block $U$ is stable w.r.t.\@ both $\TaB$ and $T \setminus \TaB$. The resulting sets $R$ and $U$ are shown in Figure~\ref{fig:exa:split-c}.

We still need to split $R$ w.r.t.\@ $T \setminus \TaB$, into $R_1$ and $U_1$, say.
For this, we use the marked transitions in $T \setminus \TaB$ as a starting point to compute all bottom states that can reach a transition in $T \setminus \TaB$.
This guarantees that the time we use is proportional to $\size{\TaB}$.
Initially, there is one state destined for $R_1$, marked \redstatetext\ in Figure~\ref{fig:exa:split-c}, and one state destined for $U_1$, marked \bluestatetext\ in the same figure.
We now perform both coroutines in $\mathsf{split}$ simultaneously.
Figure~\ref{fig:exa:split-d} shows the situation after both coroutines have considered one transition:
The $U_1$-coroutine (the coroutine that calculates the states that cannot inertly reach a transition in $T \setminus \TaB$) has initialised the counter $\mathit{untested}$ of one state to $2$ at Line~\ref{alg:line-define-untested}$\ell$ (i.e., in the left column) of Algorithm~\ref{alg:split}
because two of its outgoing inert transitions have not yet been considered.
The $R_1$-coroutine (the coroutine that calculates the states that can inertly reach a transition in $T \setminus \TaB$) has checked the unmarked transition in the splitter $T_{R\pijl{}} \setminus T_{R \pijl{a} B'}$.
As the latter coroutine has finished visiting unmarked transitions in the splitter,
the $U_1$-coroutine no longer needs to run the slow test loop at Lines~\ref{alg:slow-test}$\ell$--\ref{alg:slow-test-end}$\ell$ of Algorithm~\ref{alg:split}.
In Figure~\ref{fig:exa:split-e} the situation is shown after two more steps in the coroutines.
Each has visited two extra transitions.
There are two extra states destined for $R_1$, marked \redstatetext,
and there is one state destined for $U_1$ with $0$ remaining inert transitions,
for which we know immediately that it has no transition in $T \setminus \TaB$, this is marked \bluestatetext.
Now, the $R_1$-coroutine is terminated, since it contains more that $\frac{1}{2}\size{R}$ states,
and the remaining incoming transitions of states in $U_1$ are visited.
This will not further extend $U_1$ since all states are already either destined for $R_1$ or for $U_1$.

The result of splitting is shown in Figure~\ref{fig:exa:split-f}.
Note that some inert transitions become non-inert,
so a new bunch with the transitions $R_1 \pijl{\tau} U_1$ is created,
and all those transitions are marked {\color{red}$m$}.

We next have to split $R_1$ with respect to this new bunch into the set of states $N_1$ that can inertly reach a transition in the new bunch, and the set $R'_1$ that cannot inertly reach this bunch.
In this case, all states in $R_1$ have a marked outgoing transition, hence $N_1 = R_1$, and $R'_1 = \emptyset$.
The coroutine that calculates the set of states that cannot inertly reach a transition in the bunch will immediately terminate because there are no transitions to be considered.

Observe that $R_1$ ($=N_1$) has a new bottom state.
This means that stability of $R_1$ with respect to any bunch is not guaranteed any more, and this stability needs to be re-established.
We therefore consider all bunches in which $R_1$ has an outgoing transition.
We add $T_{R_1\pijl{a}B'}$, $T_{R_1 \pijl{}} \setminus T_{R_1 \pijl{a} B'}$ and $T'_{R_1\pijl{}}$ to the splitter list as secondary splitters,
and mark one of the outgoing transitions from each bottom state in each of those bunches using {\color{gray}$m$}.
This situation is shown in Figure~\ref{fig:exa:split-g}.

In this case, $R_1$ is stable w.r.t.\@ $T_{R_1\pijl{a}B'}$ and $T_{R_1 \pijl{}} \setminus T_{R_1 \pijl{a} B'}$, i.e., all states in $R_1$ can inertly reach a transition in both bunches.
In both cases this is observed immediately after initialisation in split,
since the set of states that cannot inertly reach a transition in these bunches is initially empty, and the corresponding coroutine terminates immediately.

Therefore, consider splitting $R_1$ with respect to $T'_{R_1\pijl{}}$. This leads to $R_2$, the set of states that can inertly reach a transition in $T'$, and $U_2$, the set of states that cannot inertly reach a transition in $T'$.
Note there are no marked transitions in $T'_{R_1\pijl{}}$,
so initially all bottom states of $R_1$ are destined for $U_2$ (marked \bluestatetext\ in Figure~\ref{fig:exa:split-h}), and there are no states destined for $R_2$.
Then we start splitting $R_1$.
In the $R_2$-coroutine, we first add the states with an unmarked transition in $T'_{R_1\pijl{}}$ to $R_2$ at Line~\ref{alg:line-R-union-unmarked-T}$r$
(i.e., in the right column of Algorithm~\ref{alg:split})
and then all predecessors of the new bottom state need to be considered.
When $\mathsf{split}$ terminates, there will be no additional states in $U_2$, and the remaining states end up in $R_2$.

The situation after splitting $R_1$ into $R_2$ and $U_2$ is shown in Figure~\ref{fig:exa:split-i}.
One of the inert transitions becomes non-inert, this is marked {\color{red}$m$}.
Furthermore, $R_2$ contains a new bottom state. This is the state in $R_2$ with a transition to $T'$, and it is marked nb.
Note that, as each block necessarily has a bottom state (see Lemma~\ref{lem::noloop}), a non-bottom state had to become a bottom state in this case.

We need to continue stabilising $R_2$ w.r.t.\@ the bunch $R_2 \pijl{\tau} U_2$,
which does not lead to a new split,
and we need to restabilise $R_2$ w.r.t.\@ all bunches in which it has an outgoing transition.
This also does not lead to new splits, so the situation in Figure~\ref{fig:exa:split-i}
after removing the marking of transitions in $R_2 \pijl{\tau} U_2$
is the final result of splitting.
\end{example}

\subsection{Correctness}
The validity of the algorithm follows from a number of major invariants.

The main invariant is Invariant~\ref{inv:bunches}, which was stated before.
The following invariant
indicates that two non-inert transitions starting in the same block, having the same label and
ending in the same block, will always reside in the same bunch.
\begin{invariant}[Bunches are not unnecessarily split]
\label{inv:no_split_bunches}
For any pair of non-inert transitions $s\pijl{a}s'$ and $t\pijl{a}t'$,
if $s,t\in B$ and $s',t'\in B'$ then $s\pijl{a}s' \in T$ and $t\pijl{a}t'\in T$ for some bunch $T\in\Pi_t$.
\end{invariant}
\begin{proof}
Initially, $\Pi_t$ contains one bunch with all non-inert transitions.
Therefore, the invariant is valid.

There are two places in the algorithm where validity of this invariant can be jeopardised.
At Line~\ref{alg:split_transitions} $\Pi_t$ is refined,
and at Lines~\ref{alg:new_bunch} and \ref{alg:extend_bunch} a new bunch is created and extended.

We first look at Line~\ref{alg:split_transitions}.
We replace bunch $T$ with the two bunches $ \TaB$ and $ T \setminus \TaB$.
As all the transitions in $\TaB$ have label $a$ and go to block $B'$, the invariant is maintained for $\TaB$.
All transitions in $T \setminus \TaB$ have a label different from $a$ or go to a different block than $B'$,
so the invariant is maintained for these transitions as well.

At Lines~\ref{alg:new_bunch} and \ref{alg:extend_bunch} all new non-inert transitions are put in a new bunch.
As these are the only $\tau$-transitions between blocks $R$ and $U$,
the invariant remains valid under the creation of this bunch.
\end{proof}
The following invariant says that states that are branching bisimilar will never end up in separate blocks.
\begin{invariant}[Preservation of branching bisimilarity]
\label{inv:bbisim}
For all states $s,t \in S$, if $s \bbis t$, then there is some block $B \in \Pi_s$ such that $s,t \in B$.
\end{invariant}
\begin{proof}
Initially, this invariant is valid. We prove this by contraposition.
Consider two states $s,t$ in different blocks.
Then $s\in B_\text{vis}$ and $t\in B_\text{invis}$, or vice versa.
But then $s$ and $t$ cannot be branching bisimilar, because from $s$ a visible action is inertly reachable, whereas that is not the case from $t$ (or vice versa).

The preservation of this invariant can be seen as follows.
There are two places where blocks in $\Pi_s$ are split, namely at Lines~\ref{alg-split1} and \ref{alg-split2}.
We first concentrate on the splitting at Line~\ref{alg-split1}.
In this case a block $B$ is split into $R$ and $U$ where $\langle R, U\rangle$ is the result of an invocation of $\mathsf{split}(B, \TB')$.
Assume that the invariant would be invalidated.
This means that there are states $s\in R$ and $t\in U$ with $s\bbis t$.
As $s \in R$, by Equation~\eqref{eqn:splitproperty} on page~\pageref{eqn:splitproperty} we
know that $s\pijl{\tau}s_1\pijl{\tau}\cdots\pijl{\tau}s_n\pijl{a}s'$ with $\{ s_1,\ldots,s_n \} \subseteq B$, $s_n\pijl{a}s'\in \TB'$ and $s'\in B'$.
As $s_n\pijl{a}s'$ is non-inert and $s\bbis t$, we must have $t\pijl{\tau}t_1\pijl{\tau}\cdots\pijl{\tau}t_m\pijl{a}t'$, $t_j\in B$ for all $1\leq j\leq m$ and $t'\in B'$.
Hence, by Invariant~\ref{inv:no_split_bunches},
$t_m\pijl{a}t'\in T'$ where $T'$ is the bunch such that $\TB'\subseteq T'$.
But as $\TB'$ is a block-bunch-slice, $t_m\pijl{a}t'\in \TB'$.
Hence, by \eqref{eqn:splitproperty} on page~\pageref{eqn:splitproperty} it follows that $t\in R$,
contradicting the assumption that the invariant could be invalidated.

At Line~\ref{alg-split2} splitting takes place with regard to a splitter $R \pijl{\tau} U$.
As this is a bunch satisfying Invariant~\ref{inv:no_split_bunches},
the reasoning is completely analogous to the previous case.
\end{proof}
The following invariant says that there are no $\tau$-loops in any block. Actually, a stronger property holds, namely that there
are no $\tau$-loops at all, after they have been removed during the initialisation of the algorithm.
%\jk{Opmerking van Tim: Over welke transitierelatie hebben we het in de invariant? Het argument is toch gewoon dat de transitierelatie niet meer wordt gewijzigd na de SCC contractie? In het bijzonder kan het splitsen van een blok ook niet de transitierelatie wijzigen, lijkt me...}
\begin{invariant}[No inert loops]
\label{inv:noloops}
There is no inert loop in a block, i.e., for every sequence $s_1\pijl{\tau}s_2\pijl{\tau}\cdots\pijl{\tau}s_n$ with $s_i\in B\in \Pi_s$ it holds
that $s_i\not=s_j$ for all $1\leq i<j\leq n$.
\end{invariant}
\begin{proof}
Initially, this invariant holds
because every strongly connected component consisting of states connected via $\tau$-transitions is contracted and merged into a single state.
The only operation that influences this invariant is splitting a block (Lines~\ref{alg-split1} and \ref{alg-split2}).
Splitting blocks cannot introduce new loops.
\end{proof}

As a consequence of the previous invariant, every block has at least one bottom state, and from every
state in a block a bottom state can be inertly reached.
\begin{lemma}
\label{lem::noloop}
Invariant~\ref{inv:noloops} implies that for all partitions $\Pi_s$ of $S$, and all blocks $B$ in $\Pi_s$, we have
\begin{enumerate}
	\item $\Bottom(B) \neq \emptyset$.
	\item For every state $s \in B$, there is a path of inert transitions leading to a bottom state in $B$.
\end{enumerate}
\end{lemma}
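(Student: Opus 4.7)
The plan is to derive both parts from a single observation: in any finite block without inert loops, every inert path must eventually hit a state with no outgoing inert transition inside the block. I will prove part~2 first and then deduce part~1.

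For part~2, fix a state $s \in B$ and I would build, step by step, an inert path $s = s_1 \pijl{\tau} s_2 \pijl{\tau} \cdots$ staying inside $B$. At each $s_i$, if $s_i \in \Bottom(B)$ we stop and we are done. Otherwise, by the definition of $\Bottom(B)$, there exists some $s_{i+1} \in B$ with $s_i \pijl{\tau} s_{i+1}$, and we extend the path. The key observation is that this process must terminate: by Invariant~\ref{inv:noloops}, all states $s_1, \ldots, s_i$ along the constructed path are pairwise distinct, so the length of the path is bounded by $\size{B}$, which is finite since $S$ is finite. Hence the construction halts, necessarily at a bottom state of $B$, giving the desired inert path from $s$ to a state in $\Bottom(B)$.

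For part~1, note that $B \in \Pi_s$ is nonempty by the definition of a partition, so pick any $s \in B$. By part~2, there is an inert path from $s$ to some bottom state of $B$; in particular, $\Bottom(B) \neq \emptyset$.

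I do not expect any serious obstacle here: the argument is essentially a finite pigeonhole combined with Invariant~\ref{inv:noloops}. The only point that requires a bit of care is to make explicit that the iterative extension of the path cannot repeat a state (which is exactly what Invariant~\ref{inv:noloops} rules out for sequences of inert transitions within a single block), and that finiteness of $B$ therefore forces termination at a bottom state rather than continuing indefinitely.
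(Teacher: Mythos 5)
Your proof is correct and rests on the same core argument as the paper's: finiteness of the block combined with Invariant~\ref{inv:noloops} (no repeated states along inert paths within a block) forces every maximal inert path to terminate at a state of $\Bottom(B)$. The only difference is organisational — the paper proves both parts separately by contradiction, deriving a $\tau$-cycle from the negation, whereas you construct the path directly and obtain part~1 as a corollary of part~2 — so this is essentially the same proof.
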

\begin{proof}
Let $\Pi_s$ be an arbitrary partition of $S$, and $B \in \Pi_s$ be a block in $S$ such that $B \neq \emptyset$.
\begin{enumerate}
	\item	Towards a contradiction, assume that $\Bottom(B) = \emptyset$.
		Then every state $s$ in $B$ has an outgoing transition $s \pijl{\tau} s'$ for some $s' \in B$.
		Since $B$ is finite, there must be a $\tau$-cycle in $B$.
		This contradicts Invariant~\ref{inv:noloops}.
	\item	Let $s \in B$ be arbitrary, and assume that $s$ does not have a path of inert transitions leading to a bottom state in $B$.
		Then, there must, again, be a $\tau$-cycle in $B$, which contradicts Invariant~\ref{inv:noloops}.
		\qedhere
\end{enumerate}
\end{proof}

%------------------------------------
%
%
%We re-establish Invariant~\ref{inv:bunches} in two steps:
%first we collect a splitter list at Lines~\ref{alg-line-markloop-begin}--\ref{alg-line-markloop-end}
%and then we refine $\Pi_s$ according to these block-bunch-slices at Lines~\ref{alg-line-stabilize-begin}--\ref{alg-line-stabilize-end}
%until it is completely stable under the bunches again.
%
%For the first inner loop, we have the following invariant:
%
%\begin{invariant}[Inner loop at Lines~\ref{alg-line-markloop-begin}--\ref{alg-line-markloop-end}]
%$\Pi_s$ is stable under $\Pi_t$ except $\TaB$ and $T \setminus \TaB$.
%
%For every $B \in \splittableBlocks(\TaB)$ that has been handled,
%$\TBaB$ and $\TB \setminus \TBaB$ are on the to-do list.
%\end{invariant}

%Upon termination of the first inner loop, the above invariant guarantees that all unstable block-bunch-slices are on the splitter list,
%so the invariant of the second inner loop follows immediately.
%$\Pi_s$ is stable under the block-bunch-slices of $\Pi_t$ that are not on the list of stable block-bunch-slices.

Invariant~\ref{inv:second_inner_loop} is a technical invariant required to prove the main Invariant~\ref{inv:bunches}.
Invariant~\ref{inv:second_inner_loop} holds for the second inner for loop,
and says that Invariant~\ref{inv:bunches} holds,
except for the block-bunch-slices on the splitter list,
for which the invariant has to be re-established by splitting blocks.

\begin{invariant}[Inner loop at Lines~\ref{alg-line-stabilize-begin}--\ref{alg-line-stabilize-end}]
\label{inv:second_inner_loop}
	If a non-empty block-bunch-slice $\TB$ is not in the splitter list,
	then every bottom state in its source block $B$ has a transition in $\TB$.
\end{invariant}
\begin{proof}
First it is shown that the invariant holds when arriving at the second for loop at Line~\ref{alg-line-stabilize-begin}.
Consider a non-empty block-bunch-slice $\hat{T}_{\smash[t]{\hat{B}}\pijl{}}$ that is not in the splitter list.
If $\hat{T}_{\smash[t]{\hat{B}}\pijl{}}$ is not a subset of $T$ at Line~\ref{algo:main:outerloop},
it follows from Invariant~\ref{inv:bunches} that all $t\in \Bottom(\hat{B})$ have a transition in $\hat{T}$, so the invariant holds.

Now assume $\hat{T}_{\smash[t]{\hat{B}}\pijl{}}$ is a subset of $T$. Note that $\hat{T} = T$.
Then one of the following cases apply:
\begin{itemize}
\item
	$\hat{T}_{\smash[t]{\hat{B}}\pijl{}} = T_{\smash[t]{\hat{B}}\pijl{a}B'} \subseteq \TaB$.
	As $T_{\smash[t]{\hat{B}}\pijl{a}B'}$ is not empty,
	there is a transition $s\pijl{a}s'\in T_{\smash[t]{\hat{B}}\pijl{a}B'} \subseteq T$.
	Towards a contradiction, suppose there is some state $t\in\Bottom(\hat{B})$ that does not have an outgoing transition in $\TaB$.
	Then, $t$ must have a transition in $T \setminus \TaB$ by Invariant~\ref{inv:bunches},
	thus $\hat{B}\in \splittableBlocks(\TaB)$
	and hence $T_{\smash[t]{\hat{B}}\pijl{a}B'}$ is on the splitter list.
	This contradicts the assumption that $\hat{T}_{\smash[t]{\hat{B}}\pijl{}}$ is not on the splitter list.
\item
	$\hat{T}_{\smash[t]{\hat{B}}\pijl{}} = T_{\smash[t]{\hat{B}}\pijl{}}\subseteq T\setminus \TaB$.
	It cannot happen that $\hat{B} \in \splittableBlocks(\TaB)$, since
	then $T_{\smash[t]{\hat{B}}\pijl{}}$ is on the splitter list.
	Therefore, $\hat{B} \not \in \splittableBlocks(\TaB)$.
	Since $T_{\smash[t]{\hat{B}}\pijl{}} \neq \emptyset$, we have $T_{\smash[t]{\hat{B}}\pijl{}} = T$,
	and it immediately follows from Invariant~\ref{inv:bunches}
	that all $t\in \Bottom(\hat{B})$ have a transition in $T_{\smash[t]{\hat{B}}\pijl{}}$.
\end{itemize}

We now show that the loop starting at Line~\ref{alg-line-stabilize-begin} maintains this invariant.
Concretely, we consider some block-bunch-slice $\hat{T}_{\smash[t]{\hat{B}}\pijl{}}$
that is not empty and does not occur in the splitter list at Line~\ref{alg-line-stabilize-end}.
We make a case distinction on the block $\hat{B}$.

\begin{itemize}
\item
	Let us first assume that $\hat{B}$ is not the result of splitting a block at Lines~\ref{alg-split1} or \ref{alg-split2}.
	So, $\hat{B} \cap B = \emptyset$.
	This means that $\hat{T}_{\smash[t]{\hat{B}}\pijl{}}$ was not split during the last iteration of the for loop,
	and it is not a subset of the bunch with new non-inert transitions created at Lines~\ref{alg:new_bunch} and \ref{alg:extend_bunch}.
	Hence, the invariant remains valid for $\hat{T}_{\smash[t]{\hat{B}}\pijl{}}$ during this last iteration.
\item
	Assume $\hat{B}$ is a subset of $R$ and the condition at Line~\ref{alg:inert_condition} is not valid,
	i.e., $R\pijl{\tau} U=\emptyset$.
	This means $\hat{B}$ is not split further at Line~\ref{alg-split2}, i.e., $\hat{B} = R$.
	We treat the cases where $\hat{T}_{R\pijl{}}$ is or is not a subset of $T'_{B\pijl{}}$ separately.
	\begin{itemize}
	\item	If $\hat{T}_{R\pijl{}}\subseteq T'_{B\pijl{}}$ then we see that due to the splitting at Line~\ref{alg-split1}
		all states in $R$ can reach a transition in $T'_{B\pijl{}} = T'_{R\pijl{}}$.
		In particular, all bottom states of $\hat{B}$ have a transition in $T'_{B\pijl{}}$
		and this is also the case for $\hat{T}_{R\pijl{}}$ by construction.
		Hence, the invariant is valid for $\hat{T}_{R\pijl{}}$.
	\item	$\hat{T}_{R\pijl{}}\cap T'_{B\pijl{}}=\emptyset$.
		In this case, $\hat{T}_{R\pijl{}}$ is the result of splitting a block-bunch-slice $\hat{T}_{B\pijl{}}$
		that was not on the splitter list,
		as splitting $\hat{T}_{B\pijl{}}$ results in keeping the splitted block-bunch-slices on this list
		in case the original was already there.
		This means that all bottom states in $B$ have transitions in $\hat{T}_{B\pijl{}}$ by Invariant~\ref{inv:bunches},
		and by construction, those bottom states in $R$ have transitions in $\hat{T}_{R\pijl{}}$.
		As $R\pijl{\tau} U=\emptyset$, there are no additional bottom states in $\Bottom(R) \setminus \Bottom(B)$
		and the invariant is established for $\hat{T}_{R\pijl{}}$.
	\end{itemize}
\item
	Assume $\hat{B}$ is a subset of $R$ and the condition at Line~\ref{alg:inert_condition} is valid.
	This means that $\hat{B}$ is either equal to $N$ or $R'$.
	\begin{itemize}
	\item	Assume $\hat{B}$ equals $N$.
		This means that $\hat{T}_{\smash[t]{\hat{B}}\pijl{}}$ has the shape $T_{N\pijl{}}$ (Line~\ref{alg:line-make-all-slices-unstable})
		and is added to the splitter list,
		contradicting our assumption
		that $\hat{T}_{\smash[t]{\hat{B}}\pijl{}}$ is not on the splitter list at Line~\ref{alg-line-stabilize-end}.
	\item	Assume $\hat{B}$ is equal to $R'$.
		In this case the block-bunch-slice $\hat{T}_{R'\pijl{}}$ cannot be a subset of the new bunch
		created at Lines~\ref{alg:new_bunch} and \ref{alg:extend_bunch}
		as there are no transitions in $\mbox{$(R\pijl{\tau}U)$} \cup \mbox{$(N \pijl{\tau} R')$}$ from $R'$.

		If the original block-bunch-slice $\hat{T}_{B\pijl{}}$
		that existed at the beginning of the second for loop satisfying $\hat{T}_{R'\pijl{}}\subseteq \hat{T}_{B\pijl{}}$
		was in the splitter list,
		then also $\hat{T}_{R'\pijl{}}$ would be on the list.
		Contradiction.

		Therefore, $\hat{T}_{B\pijl{}}$ was not in the splitter list.
		As $\hat{T}_{B\pijl{}}$ was not empty, all bottom states in $B$ would have transitions in $\hat{T}_{B\pijl{}}$.
		As all new bottom states are moved to $N$, all bottom states in $R'$ are also bottom states in $B$,
		and it follows that all bottom states in $R'$ have transitions in $\hat{T}_{R'\pijl{}}$.
		Hence, the invariant holds.
	\end{itemize}
\item
	Here we consider the situation where $\hat{B}=U$.
	There are three cases to consider.%, as we do below.
	\begin{itemize}
	\item	If $\hat{T}_{U\pijl{}}=T'_{U\pijl{}}$,
		we see that $T'_{U\pijl{}}$ is empty,
		because it contains all transitions in $T'_{B\pijl{}}$ reachable from states in $U$,
		from which, by construction, transitions in $\hat{T}_{B\pijl{}}$ cannot be reached.
		As $T'_{U\pijl{}}$ is empty, the invariant holds trivially.
	\item	Suppose $\hat{T}_{U\pijl{}}=T_{U\pijl{}}\setminus T_{U\pijl{a}B'}$ and $\TB' = \TBaB$ is primary.
		We know that $B$ is stable w.r.t.\ $\TB$ by Invariant~\ref{inv:bunches}.
		That means that every bottom state of $B$ has a transition in $\TB$.
		Bottom states in $U$ have no transitions in $\TaB$.
		Hence, they all have transitions in $T\setminus \TaB$.
		If we restrict this set of transitions to those starting in $U$,
		we see that all bottom states in $U$ also have a transition in $T_{U\pijl{}}\setminus T_{U\pijl{a}B'}$
		and the invariant holds.
	\item	In this last case we investigate the remaining situations.
		So either $\hat{T}_{U\pijl{}} \not= T'_{U\pijl{}}$,
		or $\hat{T}_{U\pijl{}} \not= T_{U\pijl{}}\setminus T_{U\pijl{a}B'}$
		or $\TB'$ is not primary.

		If the original block-bunch-slice $\hat{T}_{B\pijl{}}$
		that existed at the beginning of the second for loop satisfying $\hat{T}_{U\pijl{}} \subseteq \hat{T}_{B\pijl{}}$
		was in the splitter list,
		then also $\hat{T}_{U\pijl{}}$ would be on the list.
		Contradiction.

		Therefore, $\hat{T}_{B\pijl{}}$ was not in the splitter list.
		As $\hat{T}_{U\pijl{}}$ is not empty,
		$\hat{T}_{B\pijl{}}$ is also not empty
		and so, all bottom states in $B$ have a transition in $\hat{T}_{B\pijl{}}$.
		All bottom states of $U$ are bottom states from $B$.
		(Otherwise there must have been a transition $s\pijl{\tau}s'$ from a state $s\in U$ to a state $s'\in R$;
		but then $s$ would be part of $R$.)
		Hence, all bottom states in $U$ have a transition in $\hat{T}_{U\pijl{}}$, showing that the invariant holds.
		\qedhere
	\end{itemize}
\end{itemize}
\end{proof}
Invariant~\ref{inv:bunches} is the main invariant of the algorithm. It is valid at Line~$\ref{algo:main:outerloop}$.
It is an adaptation to branching bisimulation of a similar invariant in \cite{Valmari2009}. It says that the
partition is always stable w.r.t.\ the bunches.
Stability refers to the presence of a transition in a bunch, and hence does not relate to actions or target blocks.
We finally prove its invariance.
The proof relies on Invariant~\ref{inv:second_inner_loop}.

\begin{proof}[Proof of Invariant~\ref{inv:bunches}]
We show that 	$\Pi_s$ is stable under $\Pi_t$, i.e.,
if a bunch $T \in \Pi_t$ contains a transition with its source state in a block $B$ of $\Pi_s$, then every bottom state in block $B$ has a transition in bunch
$T$ (in fact, in block-bunch-slice $\TB$).

Initially, this invariant is valid:
The initial blocks in $\Pi_s$ are $B_\text{vis}$ and $B_\text{invis}$.
Furthermore, there is only one bunch.
From the states in $B_\text{invis}$ no visible transition is reachable, and this means that all transitions from states in $B_\text{invis}$ are inert.
Therefore, they do not occur in the initial bunch and Invariant~\ref{inv:bunches} holds trivially.
Each bottom state in $B_\text{vis}$ has an outgoing visible transition, which must occur in the initial bunch.
This also makes the invariant valid in a trivial way, albeit for a different reason.

The invariant is invalidated when $\Pi_t$ is split at Line~\ref{alg:split_transitions}.
At the end of the second for loop (Line~\ref{alg-line-stabilize-end}),
emptiness of the splitter list and
Invariant~\ref{inv:second_inner_loop} imply
that all block-bunch-slices are stable.
This implies the invariant as follows.
Consider a bunch $T\in \Pi_t$.
Assume there is a transition $s\pijl{a}s'\in T$ with $s\in B$.
The transition $s\pijl{a}s'$ occurs in the block-bunch-slice $T_{B\pijl{}}$.
As $T_{B\pijl{}}$ is stable, it holds that every  state $t\in \Bottom(B)$ has an outgoing transition in $T_{B\pijl{}}$
and therefore the invariant holds at Line~\ref{alg-line-stabilize-end}.
\end{proof}

The invariants given above allow us to prove that the algorithm works correctly.
When the algorithm terminates (and this always happens, see Section~\ref{sec:complexity}),
branching bisimilar states are perfectly grouped in blocks.
\begin{theorem}
From the Invariants~\ref{inv:bbisim}, \ref{inv:noloops} and \ref{inv:bunches}, after the algorithm terminates, it holds that $\mathord{\equiv_{\Pi_s}}\! = \mathord{\bbis}$\,.
\end{theorem}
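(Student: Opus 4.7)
The plan is to show the two inclusions $\mathord{\bbis} \subseteq \mathord{\equiv_{\Pi_s}}$ and $\mathord{\equiv_{\Pi_s}} \subseteq \mathord{\bbis}$ separately. The first direction follows immediately from Invariant~\ref{inv:bbisim}: if $s \bbis t$ then some block $B \in \Pi_s$ contains both $s$ and $t$, hence $s \equiv_{\Pi_s} t$.

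For the reverse inclusion I would show that $\equiv_{\Pi_s}$ itself is a branching bisimulation relation, and then apply Definition~\ref{def:branching} to conclude $s \equiv_{\Pi_s} t$ implies $s \bbis t$. Symmetry of $\equiv_{\Pi_s}$ is obvious, so it remains to check the transfer condition. Let $s, t$ be in the same block $B$ and consider a transition $s \pijl{a} s'$. I would split on whether this transition is inert or not. If it is inert, then $a = \tau$ and $s' \in B$, so $s' \equiv_{\Pi_s} t$ and clause~(1) of Definition~\ref{def:branching} is satisfied immediately.

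If $s \pijl{a} s'$ is non-inert, then it lies in some bunch $T \in \Pi_t$, and $s' \in B'$ for some block $B'$. The key ingredient is that, on termination of the outer loop at Line~\ref{algo:main:outerloop}, every bunch $T$ satisfies $\nraB(T) \leq 1$; that is, $T$ consists entirely of transitions with a single label and a single target block, namely $a$ and $B'$. By Invariant~\ref{inv:bunches}, since $B$ has a transition in $T$, every bottom state of $B$ has a transition in $\TB$, and by the previous observation this transition must be labelled $a$ and lead to $B'$.

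To handle $t$, which need not be a bottom state, I would invoke Lemma~\ref{lem::noloop} (which rests on Invariant~\ref{inv:noloops}): from $t$ there is an inert path $t \pijl{\tau} t_1 \pijl{\tau} \cdots \pijl{\tau} t_n$ inside $B$ with $t_n \in \Bottom(B)$. Applying the previous observation to $t_n$ yields a transition $t_n \pijl{a} t''$ with $t'' \in B'$. Since all $t_i$ and $t_n$ lie in $B$, we have $s \equiv_{\Pi_s} t_i$ and $s \equiv_{\Pi_s} t_n$, and since $s', t'' \in B'$ we have $s' \equiv_{\Pi_s} t''$; this is exactly clause~(2) of Definition~\ref{def:branching}. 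Hence $\equiv_{\Pi_s}$ is a branching bisimulation, so $\equiv_{\Pi_s} \subseteq \bbis$, which completes the proof. The main subtlety is the non-bottom case for $t$: one must notice that termination forces $T$ to be a single action-block-slice, so the bottom state $t_n$ reached via Lemma~\ref{lem::noloop} automatically yields a transition of the correct label to the correct target block, rather than merely some transition in $T$.
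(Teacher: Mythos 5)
Your proposal is correct and follows essentially the same route as the paper's proof: both directions are handled identically, with Invariant~\ref{inv:bbisim} giving $\mathord{\bbis} \subseteq \mathord{\equiv_{\Pi_s}}$, and the converse established by showing $\equiv_{\Pi_s}$ is a branching bisimulation, using Lemma~\ref{lem::noloop} to reach a bottom state, Invariant~\ref{inv:bunches} to obtain a transition in the bunch, and the termination condition $\nraB(T)=1$ to force the matching label and target block. The only difference is cosmetic ordering of these observations, so there is nothing to add.
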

\begin{proof}
By Invariant~\ref{inv:bbisim} it follows that $\mathord{\bbis} \subseteq \mathord{\equiv_{\Pi_s}}$.
Hence, we only need to show that $\mathord{\equiv_{\Pi_s}}\subseteq \mathord{\bbis}$\,.
We do this by showing that $\equiv_{\Pi_s}$ is a branching bisimulation relation.

Consider states $s,t\in S$ such that $s,t\in B$ for some block $B\in \Pi_s$. Assume $s\pijl{a}s'$.
\begin{itemize}
\item
	If $a = \tau$ and $s'\in B$, i.e., the
	transition is inert, then $s'\equiv_{\Pi_s} t$ and we have fulfilled the proof obligation for branching bisimulation in this case.
\item
	If the transition $s\pijl{a}s'$ is not inert, it is part of some bunch $T\in \Pi_t$.
	By Invariant~\ref{inv:noloops} and Lemma~\ref{lem::noloop} there is a path of inert transitions $t\pijl{\tau}\cdots\pijl{\tau} t'$ with $t'\in \Bottom(B)$,
	and by Invariant~\ref{inv:bunches} there is a transition $t'\pijl{b}t''\in T$.
	As the algorithm terminated, the condition at Line~\ref{algo:main:outerloop} is false,
	which means that $\nraB(T)=1$.
	In other words, $T$ is equal to some action-block-slice. This must be $\TaB$, as $s\pijl{a}s'\in T$ (where $s' \in B'$).
	Hence, $b=a$ and $t''\in B'$, and the proof obligation for branching bisimulation has been fulfilled.
\end{itemize}
Concluding, $\equiv_{\Pi_s}$ is a branching bisimulation, and therefore $\mathord{\equiv_{\Pi_s}} = \mathord{\bbis}$\,.
\end{proof}

We provide the following invariant as a precondition for splitting blocks in Section~\ref{sec:splitting_blocks}.
It says that whenever $\mathsf{split}(\hat{B}, \hat{T}_{\smash[t]{\hat{B}}\pijl{}})$ is called, the bottom states in block $\hat{B}$ with transitions in bunch $\hat{T}$ can be found by only looking at the marked
transitions. Checking whether a state has a marked outgoing transition can be done in constant time, which is essential
for the algorithm.

\begin{invariant}[Marked transitions]
\label{inv:marked_transitions}
Whenever $\mathsf{split}(\hat{B}, \hat{T}_{\smash[t]{\hat{B}}\pijl{}})$ is called,
it holds that \[\Bottom(\hat{B})_{\pijl{\hat{T}_{\smash[t]{\hat{B}}\pijl{}}}} = \Bottom(\hat{B})_{\pijl{\Marked(\hat{T}_{\smash[t]{\hat{B}}\pijl{}})}}.\]
\end{invariant}
\begin{proof}
Whenever $\mathsf{split}(\hat{B}, \hat{T}_{\smash[t]{\hat{B}}\pijl{}})$ is called,
the block-bunch-slice $\hat{T}_{\smash[t]{\hat{B}}\pijl{}}$ is in the splitter list.
There are four cases when a block-bunch-slice is inserted into the splitter list:
\begin{itemize}
\item	$\hat{T}_{\smash[t]{\hat{B}}\pijl{}} = T_{\smash[t]{\hat{B}}\pijl{a}B'}$ is a primary splitter of $\hat{B}$.
	Then, all transitions in $T_{\smash[t]{\hat{B}}\pijl{a}B'}$ are marked at Line~\ref{alg-line-mark-all-transitions-in-primary},
	so the invariant obviously holds.
\item	$\hat{T}_{\smash[t]{\hat{B}}\pijl{}}$ is a secondary splitter of $\hat{B}$ that has been added at Line~\ref{alg-line-find-splittable-block}.
	Note that for every block, we first split under its primary splitter $T_{\smash[t]{\hat{B}}\pijl{a}B'}$
	and then apply what remains of the secondary splitter only to $R$
	(or only to $R'$ in case $R \pijl{\tau} U$ is not empty,
	but for brevity, we only mention $R$ below).
	So, the call is $\mathsf{split}(R, T_{R\pijl{}} \setminus T_{R\pijl{a}B'})$.
	As every bottom state of $R$ has a transition in $T_{\smash[t]{\hat{B}}\pijl{}}$,
	Line~\ref{alg-line-mark-one-transition-in-secondary-splitter} ensures
	that every bottom state in $R$
	which already was a bottom state in $\hat{B}$
	and which has a transition in $T_{R\pijl{}} \setminus T_{R\pijl{a}B'}$,
	also has a marked transition in it.
	So, the invariant holds.
\item	$\hat{T}_{\smash[t]{\hat{B}}\pijl{}} = R \pijl{\tau} U$ at Line~\ref{alg:split-call2}.
	All transitions of $R \pijl{\tau} U$ are marked at Line~\ref{alg:new_bunch},
	so the invariant holds.
\item	$\hat{T}_{\smash[t]{\hat{B}}\pijl{}} = \hat{T}_{N\pijl{}}$ is a splitter of $N$ added at Line~\ref{alg:line-make-all-slices-unstable}.
	In that case, Line~\ref{alg:line-mark-bb-of-new-bottom} ensures
	that every bottom state with a transition in $\hat{T}_{N\pijl{}}$ has a marked transition in it.
	Hence, also in this last case, the invariant holds.
\end{itemize}
Additionally, it can happen that a splitter itself is split.
Then, the two new subslices keep their markings.
If new bottom states result from the split,
they are handled in a similar way to the last case above:
in all relevant slices, each new bottom state with an outgoing transition in that slice also has a marked outgoing transition in that slice.
\end{proof}

\subsection{Complexity}\label{sec:complexity}

%\dnj{In this section, not all implementation details have been revealed.
%Therefore, the argument here should be:
%given some time budgets per line in the pseudocode,
%we prove that the overall algorithm runs in $\bigo{m \log n}$.
%In Section~\ref{sect:implementation}, we will explain
%that the implementation of each line in the pseudocode fits its budget.}

To simplify the complexity notations
we assume that $n \leq m + 1$. This is not a significant restriction,
since it is satisfied by any labelled transition system in which every non-initial state has an incoming transition.
This can easily be achieved by preprocessing the graph to remove unreachable states.
Furthermore, we assume that we can access action labels fast enough to bucket sort the transitions
in time $\bigo{m}$, which is for instance the case if the action labels are consecutively numbered.

We show that the algorithm runs in time $\bigo{m \log n}$,
using the time budgets
printed in grey at the right-hand side of the pseudocode,
which indicate how much time each piece of code is allowed to spend.
In Section~\ref{sect:implementation} it is explained how the data structures meet these time budgets.

The initialisation (Lines~\ref{alg-line-contract-SCCs}--\ref{algo:main:outerloop}) can be performed in $\bigo{m}$ time, where for the calculation of the sets of states, the assumption that $n \leq m+1$ is used.
The calculation of the time complexity of the while loop
is split into three separate parts. The first part regards splitting bunches, putting block-bunch-slices on the splitter list
and marking transitions,
which is attributed to the transitions in a new small bunch.
The second part deals with splitting blocks, which is attributed to the transitions of the smaller subblock.
The third
part handles the calculations that are required when states become new bottom states.

When splitting bunches, we apply the ``Process the smaller half'' principle to new bunches.
Every transition is an element of a new bunch $\TaB$ at Line~\ref{alg-line-find-splittable-block} at most $\lfloor \log_2 n^2 \rfloor + 1$ times,
because the first time $\TaB$ is investigated, it has at most $n^2$ elements,
and every subsequent time a new bunch containing the same transition is investigated,
it has at most half the size of the previous bunch.
Whenever a transition is an element of a new bunch $\TaB$,
it is processed in constant time.
This is indicated with the time budget $\bigo{\size{\TaB}}$ for Lines~\ref{alg-line-select-small-bunch}--\ref{alg-line-stabilize-begin}.
Also, processing block-bunch-slices of the form $\TBaB$ and $\TB \setminus \TBaB$
(as added to the splitter list at Line~\ref{alg-line-find-splittable-block})
requires time in $\bigo{\size{\Marked(\TB')}}$ at Lines~\ref{alg:line:callsplit1}--\ref{alg:new_bunch},
corresponding to at most the number of transitions in $\TBaB$.
Summing over all transitions gives runtime $\bigo{m \log n}$
attributed to new bunches.

In the next section we explain in detail how long $\mathsf{split}$ can take.
In short, its runtime depends on the \emph{smaller} of the two resulting subblocks---%
so we apply ``Process the smaller half'' to that block.
Every state can be part of such a smaller subblock  at most $\lfloor \log_2 n \rfloor$ times,
because the first time it is part of a subblock of at most $n/2$ states,
and at every subsequent time the same state becomes part of another smaller subblock,
the latter has at most half the size of the previous subblock.
Whenever a state is in the smaller subblock of $\mathsf{split}$,
we are allowed to attribute time proportional to the number of incoming and outgoing transitions of that state.
More precisely, provided each source state of $T$ is in $B$, the complexity of calculating $\mathsf{split}(B,T)$ is the following.
If $\size{R} \leq \size{U}$, then the time spent is $\bigo{\size{R_{\pijl{}}} + \size{R_{\lijp{}}}}$,
and if $\size{U} \leq \size{R}$, the time spent is $\bigo{\size{\Marked(T)} +\size{U_{\pijl{}}} + \size{U_{\lijp{}}} + \size{(\Bottom(R)\setminus \Bottom(B))_{\pijl{}}}}$.

In Algorithm~\ref{algo:main-algorithm-abstract} this is indicated with the time budget ``$\bigo{\size{U_{\pijl{}}} + \size{U_{\lijp{}}}}$ or $\bigo{\size{R_{\pijl{}}} + \size{R_{\lijp{}}}}$'' for Lines~\ref{alg:line:callsplit1}--\ref{alg:new_bunch}
and with ``$\bigo{\size{R'_{\pijl{}}} + \size{R'_{\lijp{}}}}$ or $\bigo{\size{N_{\pijl{}}} + \size{N_{\lijp{}}}}$'' for Lines~\ref{alg:split-call2}--\ref{alg:extend_bunch}.
Also, as $R \pijl{\tau} U \subseteq U_{\lijp{}} \cap R_{\pijl{}}$,
we attribute $\bigo{\size{R \pijl{\tau} U}}$ at Lines~\ref{alg:split-call2}--\ref{alg:extend_bunch} to $U$ or $R$, whichever is smaller.
Summing over all states gives rise to a cumulative time complexity of $\bigo{m \log n}$.

Finally, some work is attributed to new bottom states.
Every non-bottom state can become a bottom state at most once during the whole execution.
When this happens, we attribute time proportional to its outgoing transitions to it.
This is indicated with several time budgets $\bigo{\size{\Bottom(N)_{\pijl{}}}}$ at Lines~\ref{alg:line:callsplit1}--\ref{alg:line-mark-bb-of-new-bottom}.
At Line~\ref{alg:line-make-all-slices-unstable} we need to include not only the current new bottom states but also the future ones
because there may be block-bunch-slices that only have transitions from non-bottom states.
When $N$ is split under such a block-bunch-slice, at least one of these non-bottom states will become a bottom state
but we cannot yet say which one(s) right now.
Also, for block-bunch-slices of the form $T_{N\pijl{}}$ at Line~\ref{alg:line-mark-bb-of-new-bottom},
we budget $\bigo{\size{\Marked(\TB')}}$ at Lines~\ref{alg:line:callsplit1}--\ref{alg:new_bunch} corresponding to the new bottom states in $N$.
Summing over all states gives runtime $\bigo{m}$
attributed to new bottom states.

Adding up these three time budgets shows that the grand total of all work is  $\bigo{m \log n}$.

\section{Splitting blocks}
\label{sec:splitting_blocks}

The function $\mathsf{split}(B,T)$, presented in Algorithm~\ref{alg:split}, refines the block $B$ into two subblocks, $R$ and $U$,
where $R$ contains those states in $B$ that can inertly reach a transition in $T$,
and $U$ contains the states that cannot,
as formally specified in Equation~\eqref{eqn:splitproperty}.

These two sets are computed by two coroutines executing in lockstep:
the two coroutines start the same number of loop iterations,
so that the overhead is at most proportional to the faster of the two,
and all work done in both coroutines can be attributed to the smaller of the two subblocks $R$ and $U$.

As a precondition, guaranteed by Invariant~\ref{inv:marked_transitions},
the function requires that the marking of transitions in $\TB$ is such
that bottom states of $B$ that have an outgoing transition in $\TB$ also have a marked outgoing transition in $\TB$.
Formally,
\begin{equation*}
\Bottom(B)_{\pijl{\Marked(\TB)}} = \Bottom(B)_{\pijl{\TB}}.
\end{equation*}
The initial sets are computed as follows.
Initially, all states in $B_{\pijl{\Marked(T)}}$,
i.e., all states that are the source of a marked transition in $T$, are put in $R$.
All bottom states that are not in $R$ initially are put in $U$.
Observe that these initial sets can be computed in $\bigo{\size{\Marked(T)}}$ time
by grouping bottom states with marked transitions.

The sets are extended as follows in the coroutines.
For the states in $R$, first the states in $B_{\pijl{T \setminus \Marked(T)}}$ are added that were not yet in $R$.
These are all the sources of an unmarked transitions in $T$.
Using backward reachability along inert transitions,
$R$ is extended until either $R$ is stable (no states can be added), or $R$ contains more than half the states in $B$.

To identify the states in $U$, observe that a state $t$ is in $U$
if all its inert successors are in $U$
and it does not have a transition in $\TB$.
The first condition is vacuously satisfied for bottom states.
To compute $U$, we let at counter $\mathit{untested}[t]$ for every non-bottom state $t$ record the number of outgoing inert transitions to states that are not yet known to be in $U$.
If $\mathit{untested}[t] = 0$, this means all inert successors of $t$ are guaranteed to be in $U$, so, provided $t$ does not
have a transition in $\TB$, $t$ is also added to $U$.
To take care of the possibility that all inert transitions of $t$ have been visited before all states that are the source of a transition in
$\TB$ are added to $R$, we check all non-inert transitions of $t$ to determine
whether they are not in $\TB$ at Lines~\ref{alg:slow-test}$\ell$--\ref{alg:slow-test-end}$\ell$, i.e., in the left column of Algorithm~\ref{alg:split}.
Note that checking all successors of such a state is balanced with marking the states in $R$.
In the next section, we explain how to initialize the array $\mathit{untested}$ in constant time.

The coroutine that finishes first,
provided that its number of states does not exceed $\frac{1}{2}\size{B}$,
has completely computed the smaller subblock resulting from the refinement,
and the other coroutine can be aborted.

% COMMENT: {This function refines block $B$ into a red subblock
% 	(containing states with transitions in $\mathit{splitter}~T$ and their $\tau$-predecessors)
% 	and a blue subblock (containing all other states).
% 	Most work in this function is ascribed to the smaller of the two subblocks.}

\begin{algorithm}[tb]
	\caption{Refinement of a block under a splitter}
	\label{alg:split}
	\algsetup{indent=0.45cm}
	\small\begin{algorithmic}[1]
	\STATE	\textbf{function} $\mathsf{split}(\text{block } B, \text{block-bunch-slice } T$)
	\STATE \label{alg:line-initRMarked}  $R:=B_{\pijl{\Marked(T)}}$; \quad
	       \label{alg:line-initU} $U := \Bottom(B) \setminus R$
	\STATE	\textbf{begin coroutines} \label{alg:line-after-initU}
		\\[0.18\baselineskip]\setlength{\itemsep}{-0.7\baselineskip}%
	\begin{tabular}{@{}p{5.95cm}||p{4.92cm}@{}}
		\STATE	\hspace*{\algorithmicindent}Set $\mathit{untested}[t]$ to undefined for all $t \mathbin{\in} B$                           & $R:=R \cup B_{\pijl{T \setminus \Marked(T)}}$\label{alg:line-R-union-unmarked-T}\rightbrace[-2]{1}{$\bigo{\size{\Marked(T)}}$}\label{alg:line-initR}\label{alg:line-init-untested}\\
%Line below is outcommented, because there are no references in the text to red and blue. JFG.
%		\item[]	\COMMENT{find predecessors of blue states\hfill}                                                                          & \COMMENT{find predecessors of red states}\\
		\STATE	\hspace*{\algorithmicindent}\algorithmicforall\ $s\in U$ \algorithmicwhile\ $\size{U}\leq\frac{1}{2}\size{B}$ \algorithmicdo & \algorithmicforall\,$s\,{\in}\,R$ \algorithmicwhile\,$\size{R}\,{\leq}\,\frac{1}{2}\size{B}$\,\algorithmicdo\rightbrace[-1]{1}{$\bigo{1}$ or $\bigo{\size{R_{\pijl{}}}}$}\\
		\STATE	\hspace*{2\algorithmicindent}\algorithmicforall\ inert transitions $t \pijl{\tau} s$ \algorithmicdo                      & \hspace*{\algorithmicindent}\algorithmicforall\,inert\,transitions\,$t\,{\pijl{\tau}}\,s$ \algorithmicdo\label{alg:line-for-inert-predecessors}\rightbrace[-1]{8}{\parbox[l]{0.3\textwidth}{$\bigo{\size{U_{\lijp{}}}}$ or \\ $\bigo{\size{R_{\lijp{}}}}$}}\\
		\STATE	\hspace*{3\algorithmicindent}\algorithmicif\ $t\in R$ \algorithmicthen\ Skip state $t$\label{alg:line-skip-R-states}      & \\
		\STATE	\hspace*{3\algorithmicindent}\algorithmicif\ $\mathit{untested}[t]$ is undefined \algorithmicthen                         & \\
		\STATE	\hspace*{4\algorithmicindent}$\mathit{untested}[t]\,{:=}\,\size{\{ t\,{\pijl{\tau}}\,u\,{\mid}\,u\,{\in}\,B \}}$          & \\ \label{alg:line-define-untested}
		\STATE	\hspace*{3\algorithmicindent}\algorithmicendif                                                                            & \\
		\STATE	\hspace*{3\algorithmicindent}$\mathit{untested}[t] := \mathit{untested}[t] - 1$                                           & \\
		\STATE	\hspace*{3\algorithmicindent}\algorithmicif\,$\mathit{untested}[t]\,{>}\,0$\,\algorithmicthen\,Skip\,state\,$t$           & \\
		\STATE	\hspace*{3\algorithmicindent}\algorithmicif\ $\BT\not \subseteq R$\ \algorithmicthen \label{alg:slow-test}       & \rightbrace{5}{\parbox[l]{0.3\textwidth}{%
																			$\bigO(\size{U_{\pijl{}}} + \mbox{}$ \\
																			$\phantom{\bigO (}\lvert(\Bottom(R) \setminus \mbox{}$ \\
																			$\phantom{\bigO (\lvert(}\Bottom(B))_{\pijl{}}\rvert)$}}\\
		\STATE	\hspace*{4\algorithmicindent}\algorithmicforall\ non-inert $t \pijl{\alpha} u$ \algorithmicdo                            & \\
		\STATE	\hspace*{5\algorithmicindent}\algorithmicif\ $t \pijl{\alpha} u \in T$  \algorithmicthen\ Skip $t$                        & \\
		\STATE	\hspace*{4\algorithmicindent}\algorithmicendfor                                                                           & \\
		\STATE	\hspace*{3\algorithmicindent}\algorithmicendif \label{alg:slow-test-end}                                                  & \\
		\STATE	\hspace*{3\algorithmicindent}Add $t$ to $U$                                                                               & \hspace*{2\algorithmicindent}Add $t$ to $R$\rightbrace{3}{\parbox[l]{0.3\textwidth}{$\bigo{\size{U_{\lijp{}}}}$ or \\ $\bigo{\size{R_{\lijp{}}}}$}}\\
		\STATE	\hspace*{2\algorithmicindent}\algorithmicendfor                                                                           & \hspace*{\algorithmicindent}\algorithmicendfor\\
		\STATE	\hspace*{\algorithmicindent}\algorithmicendfor\label{alg:line-handle-state-loopend}                                       & \algorithmicendfor\\
	%	\item[]	\COMMENT{------ split off blue block ------}                                                                              & \COMMENT{------ split off red block ------}\\
		\STATE	\hspace*{\algorithmicindent}\algorithmicif\ $\size{U}>\frac{1}{2}\size{B}$ \algorithmicthen                               & \algorithmicif\ $\size{R}>\frac{1}{2}\size{B}$ \algorithmicthen\rightbrace{5}{$\bigo{1}$}\\
		\STATE	\hspace*{2\algorithmicindent}Abort this coroutine                                                                         & \hspace*{\algorithmicindent}Abort this coroutine\\
		\STATE	\hspace*{\algorithmicindent}\algorithmicendif                                                                             & \algorithmicendif\\
		\STATE	\hspace*{\algorithmicindent}Abort the other coroutine                                                                     & Abort the other coroutine \\
		\STATE	\hspace*{\algorithmicindent}\textbf{return} $(B\setminus U,U)$\label{alg:line-return}                                     & \textbf{return} $(R,B\setminus R)$\\
	\end{tabular}\setlength{\itemsep}{0cm}%
	\STATE	\textbf{end coroutines}\\
	\end{algorithmic}
	\end{algorithm}

In detail, the runtime complexity of $\langle R, U \rangle := \mathsf{split}(B, T)$ is:
\begin{itemize}
\item $\bigo{\size{R_{\pijl{}}} + \size{R_{\lijp{}}}}$, if $\size{R} \leq \size{U}$,
and
\item $\bigo{\size{\Marked(T)} + \size{U_{\pijl{}}} + \size{U_{\lijp{}}} + \size{(\Bottom(R) \setminus \Bottom(B))_{\pijl{}}}}$, if $\size{U} \leq \size{R}$.
\end{itemize}

This complexity can be inferred as follows.
First, note that we execute the coroutines in lockstep.
That means that running them will incur an overhead that is at most proportional to the faster of the two.
As soon as one coroutine has found more than $\frac{1}{2}\size{B}$ states,
its subblock is known to be too large,
so this coroutine is aborted.
This will only reduce the overhead.
Therefore, it is enough to show that the runtime bound for the smaller subblock is satisfied.
Note that if both subblocks have the same size, $\size{R} = \size{U} = \frac{1}{2} \size{B}$,
the faster of the two finishes first, so both runtime bounds apply.

Consider the case $\size{R} \leq \size{U}$.
Observe $\size{\Marked(T)} \leq \size{R_{\pijl{}}}$, and all work is attributed to the coroutine at the right of Algorithm~\ref{alg:split},
so we get $\bigo{\size{R_{\pijl{}}} + \size{R_{\lijp{}}}}$ directly from the $R$-coroutine.

Now consider the case $\size{U} \leq \size{R}$.
Then we use time in $\bigo{\size{\Marked(T)}}$ for Line~\ref{alg:line-initRMarked},
and we use time in $\bigo{\size{U_{\lijp{}}}}$ for everything else except Lines~\ref{alg:slow-test}$\ell$--\ref{alg:slow-test-end}$\ell$.
For these latter lines,
we distinguish two cases.
If it turns out that $t$ has no transition $t \pijl{\alpha} u \in T$,
it is a $U$-state, so we can attribute the time to $\bigo{\size{U_{\pijl{}}}}$.
Otherwise, i.e., if there is some $t \pijl{\alpha} u \in T$,
it is an $R$-state.
How to account for such an $R$-state in the coroutine that is supposed to find $U$?
The solution is that $t$ is a new bottom state.
It had some inert transitions in $B$,
but they all are now in $R \pijl{\tau} U$.
So we attribute the time to the outgoing transitions of new bottom states:
$\bigo{\size{(\Bottom(R) \setminus \Bottom(B))_{\pijl{}}}}$.
Note that eventually we reach the situation that all states in $\BT$ are in $R$ by Line~\ref{alg:line-R-union-unmarked-T}$r$,
and this expensive check is skipped.

It can also happen that $U$ is empty. In that case time in $\bigo{\size{\Marked(T)}}$ is spent in $\mathsf{split}$. The initialisation takes $\bigo{\size{\Marked(T)}}$ time, and since $U$ is empty after initialisation, the left coroutine terminates immediately, and the red coroutine is aborted. The latter takes $\bigo{1}$ time.

\section{Implementation details}
\label{sect:implementation}

This section shows how the operations of the abstract algorithm can be implemented
in a way that fits all time bounds.

\subsection{Implementation of data types}

\paragraph{States}
are stored as a \emph{refinable partition} data structure \cite{ValmariL08},
grouped per block,
i.e., in an array such that states in the same block are adjacent to each other.
Then, a block of states can be described as a slice in this array. An example of this is shown in Figure~\ref{fig:examplepartition}.
Within each such slice, we separate bottom from non-bottom states and states known to be destined for $R$ from other states. Note that initially, $R$ contains exactly the states with marked outgoing transitions. This is illustrated in the first two lines of Figure~\ref{fig:sort-then-swap}.
This makes it possible to visit all states in a block $B$ in time $\bigo{\size{B}}$,
to find its bottom states that are not in $R$ in constant time (cf.\@ Line~\ref{alg:line-initU} of Algorithm~\ref{alg:split}) and visit these bottom states in time $\bigo{\size{\Bottom(B) \setminus R}}$.

\begin{figure}[!htp]
	\centering
	\begin{subfigure}{.3\linewidth}
		\begin{tikzpicture}[>=Stealth,scale=0.9]

			% Blocks
			\draw[block]
				(0,0) rectangle (1.1,2.8)
				(3.4,0) rectangle (4.5,1.5)
				(0,3.8) rectangle (4.5,4.9)
				(0,-1) rectangle (4.5,-2.1)
				;

			\node at(0.4,4.35){$B_1$};
			\node at(0.4,-1.55){$B_2$};
			\node at(3.75,1.1){$B_3$};
			\node at(0.85,1.4){$B_4$};

			\draw[block]
			(1.7,0) -- (1.7,2.8) [sharp corners] -- (3.55,2.8) arc[radius=0.55cm,start angle=90,end angle=-55.79454]
			-- ++(214.20546:0.4) arc[radius=0.75cm,start angle=124.20546,end angle=180] [rounded corners=4.95mm] -- (3.2,0) -- cycle;
		  \node at(2.9,2.55){$B_0$};

			% Bunches
			\draw[bunch]
				(0,3) rectangle (4.5,3.6) %T'
				(0,-0.8) rectangle (2.15,-0.2) %T''
				(2.35,-0.8) rectangle (4.5,-0.2) %T
				(1.2,0) rectangle (1.6,2.8) %Ttau
			;

			\node at(0.3,3.3) {$T'$};
			\node at(0.3,-0.5) {$T''$};
			\node[align=left] at(2.65,-0.5) {$T$};
			\node at(1.4,2.5) {$T_\tau$};

			%block
			\node     [state](sr0) at (3.95,0.55){\makebox[\statetextwidth]{$s_0$}};
			\node     [state](sr1) at (2.65,0.55){\makebox[\statetextwidth]{$s_1$}};
			\node     [state](sr2) at (3.55,2.25){\makebox[\statetextwidth]{$s_2$}};
			\node     [state](sr3) at (0.55,0.55){\makebox[\statetextwidth]{$s_3$}};
			\node     [state](sr4) at (2.25,2.25){\makebox[\statetextwidth]{$s_4$}};
			\node     [state](sr5) at (0.55,2.25){\makebox[\statetextwidth]{$s_5$}};
			\node			[state](sr6) at (3.55,4.35){\makebox[\statetextwidth]{$s_6$}};
			\node			[state](sr7) at (1.08, -1.55){\makebox[\statetextwidth]{$s_7$}};
			\node			[state](sr8) at (3.43,-1.55){\makebox[\statetextwidth]{$s_8$}};

			\begin{scope}[->,line cap=rect]
				\draw (sr0) edge[bend right] node[right] {$d$} (sr6);
				\draw (sr0) edge[bend left] node[right,yshift=2pt] {$c$} (sr8);
				\draw (sr1) to node[right,yshift=1pt] {$b$} (sr8);
				\draw (sr1) to node[left,xshift=1pt] {$a$} (sr7);
				\draw (sr1) to (sr3);
				\draw (sr2) to node[right] {$d$} (sr6);
				\draw (sr2) to (sr1);
				\draw (sr2) to (sr3);
				\draw (sr3) to node[right,xshift=-2pt] {$a$} (sr7);
				\draw (sr3) to (sr5);
				\draw (sr4) to (sr2);
				\draw (sr4) to (sr3);
				\draw (sr4) to (sr5);
				\draw (sr5) edge[out=235,in=110] node[yshift=20pt] {$a$} (sr7);
				\draw (sr5) edge[bend left] node[below,xshift=-5pt,yshift=-5pt] {$d$} (sr6);
				\draw (sr7) edge[out=50,in=17,looseness=40] node[below,yshift=-20pt,xshift=-20pt] {$b$} (sr7);
				\draw (sr8) edge[out=60,in=90,looseness=15] node[above,yshift=-2pt] {$b$} (sr8);

			\end{scope}
		\end{tikzpicture}
		\subcaption{An example LTS and partition. Unlabelled transitions are assumed to be $\tau$-transitions.\label{fig:examplepartition-lts}}
		\end{subfigure}
		\hfill
		\begin{subfigure}{.6\linewidth}
		\tikzset{ground/.pic={\draw[arrows=-] (-0.15,0)--(0.15,0) (-0.1125,-0.05) -- (0.1125,-0.05) (-0.075,-0.1) -- (0.075,-0.1);}}
		\begin{tikzpicture}[scale=0.9,>=Stealth]
			% partition of the states (with blocks and constellations)
			% states
			\begin{scope}[every node/.style={rectangle,draw,minimum width=9mm,minimum height=4.5mm}]
				\node[alias=s6] (pos6) at (0,0){$s_6$};

				\node[alias=s7] (pos7) at (1,0){$s_7$};
				\node[alias=s8] (pos8) at (2,0){$s_8$};

				\node[alias=s0] (pos5) at (3,0){$s_0$};

				\node[alias=s5] (pos0) at (4,0){$s_5$};
				\node[alias=s3] (pos1) at (5,0){$s_3$};

				\node[alias=s1] (pos2) at (6,0){$s_1$};
				\node[alias=s4] (pos3) at (7,0){$s_4$};
				\node[alias=s2] (pos4) at (8,0){$s_2$};

			\end{scope}
			\begin{scope}[color=gray,line cap=round,decoration={brace,amplitude=2.75pt},below,rectangle,text=black]
			% Bottom states
				\draw[decorate] (s0) ++(0.485,-0.3) -- node{\footnotesize bot.} ++(-0.97,0);
				\draw[decorate] (s1) ++(0.485,-0.3) -- node{\footnotesize bot.} ++(-0.97,0);
				\draw[decorate] (s5) ++(0.485,-0.3) -- node{\footnotesize bot.} ++(-0.97,0);
				\draw[decorate] (s6) ++(0.485,-0.3) -- node{\footnotesize bot.} ++(-0.97,0);
				\draw[decorate] (s8) ++(0.485,-0.3) -- node{\footnotesize bot.} ++(-1.97,0);
			\end{scope}
			\begin{scope}[every node/.style={rounded rectangle,minimum height=6mm,draw,inner ysep=0pt,inner xsep=1pt,minimum width=6mm}]
			% Blocks
				\draw (pos0) ++(0.5,1.2) node (B4){$B_4$};
				\draw (pos2) ++(1, 1.2) node (B0){$B_0$};
				\draw (pos5) ++(0  ,1.2) node (B3){$B_3$};
				\draw (pos6) ++(0  ,1.2) node (B1){$B_1$};
				\draw (pos7) ++(0.5,1.2) node (B2){$B_2$};
			\end{scope}

			\begin{scope}[line cap=rect]
			\draw[->] (pos0.north) to (B4);
			\draw[->] (pos1.north) to (B4);
			\draw[->] (pos2.north) to (B0);
			\draw[->] (pos3.north) to (B0);
			\draw[->] (pos4.north) to (B0);
			\draw[->] (pos5.north) to (B3);
			\draw[->] (pos6.north) to (B1);
			\draw[->] (pos7.north) to (B2);
			\draw[->] (pos8.north) to (B2);
			\end{scope}
		\end{tikzpicture}
		\subcaption{Refinable partition data structure.\label{fig:examplepartition-partition}}
		\end{subfigure}
	\caption{Snapshot of an LTS with its partitions, and the corresponding refinable partition data structure.\label{fig:examplepartition}}
	\end{figure}
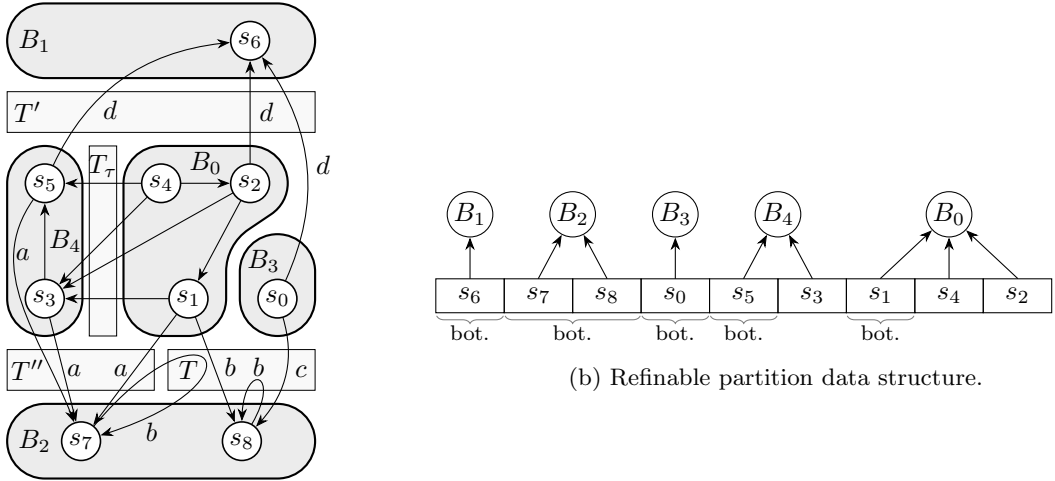

\begin{example}
	Figure~\ref{fig:examplepartition-lts} shows an LTS and its current partition.
	The corresponding refinable partition data structure is shown in Figure~\ref{fig:examplepartition-partition}. The history of the splitting is as follows. We start with a block consisting of all states. From this, the states that cannot inertly reach a visible transition are split off as $B_1$. Subsequently, $B_2$, $B_3$ and $B_4$ are split off from the remaining states, resulting in the partition shown in the figure.
	Note that in Figure~\ref{fig:examplepartition-partition} it is also illustrated that in blocks $B_0$ and $B_4$ the non-bottom states are grouped together, and appear after the bottom states.
	\end{example}
	% JFG: Hier stond "etc." in de tekst, maar dat is slechte stijl. We moeten voor een lezer alles expliciet benoemen, en hem niet laten gokken.

\begin{figure}[!hbt]
\begin{center}
\begin{tikzpicture}
	\draw	( 0  ,5) node[draw,right,rectangle,minimum width=5cm,fill=black!10](bottom){\vphantom{Gg}\smash{\makebox[0pt]{bottom}}}
		( 5  ,5) node[draw,right,rectangle,minimum width=8cm](non-bottom){\vphantom{Gg}\smash{\makebox[0pt]{non-bottom}}}

		( 0  ,4) node[draw,right,rectangle,minimum width=3.4cm,fill=black!10](bottomunmarked){\vphantom{Gg}\smash{\makebox[0pt]{bottom}}}
		( 3.4,4) node[draw,right,rectangle,minimum width=1.6cm,fill=black!10](bottommarked){\vphantom{Gg}\smash{\makebox[0pt]{$R$}}}
		( 5  ,4) node[draw,right,rectangle,minimum width=6.7cm](non-bottomunmarked){\vphantom{Gg}\smash{\makebox[0pt]{non-bottom}}}
		(11.7,4) node[draw,right,rectangle,minimum width=1.3cm](non-bottommarked){\vphantom{Gg}\smash{\makebox[0pt]{$R$}}}

		( 0  ,3) node[draw,right,rectangle,minimum width=2.7cm,fill=blue!20](bluebottomtemp){\vphantom{Gg}\smash{\makebox[0pt]{$U$ bottom}}}
		( 2.7,3) node[draw,right,rectangle,minimum width=2.3cm,fill=red!20](redbottomtemp){\vphantom{Gg}\smash{\makebox[0pt]{$R$ bottom}}}
		( 5  ,3) node[draw,right,rectangle,minimum width=1.2cm,fill=blue!5](bluenon-bottomtemp){\vphantom{Gg}\smash{\makebox[0pt]{$U$}}}
		( 6.2,3) node[draw,right,rectangle,minimum width=2.8cm](notbluenonzero){\vphantom{Gg}\smash{\makebox[0pt]{$\mathit{untested} > 0$}}}
		( 9  ,3) node[draw,right,rectangle,minimum width=2.5cm](notblueundefined){\vphantom{Gg}\smash{\makebox[0pt]{$\mathit{untested} = \text{?}$}}}
		(11.5,3) node[draw,right,rectangle,minimum width=1.5cm,fill=red!5](rednon-bottomtemp){\vphantom{Gg}\smash{\makebox[0pt]{$R$}}}

		( 0  ,2) node[draw,right,rectangle,minimum width=2.7cm,fill=blue!20](bluebottom){\vphantom{Gg}\smash{\makebox[0pt]{$U$ bottom}}}
		( 2.7,2) node[draw,right,rectangle,minimum width=2.3cm,fill=red!20](redbottom){\vphantom{Gg}\smash{\makebox[0pt]{$R$ bottom}}}
		( 5  ,2) node[draw,right,rectangle,minimum width=3.1cm,fill=blue!5](bluenon-bottom){\vphantom{Gg}\smash{\makebox[0pt]{$U$ non-bottom}}}
		( 8.1,2) node[draw,right,rectangle,minimum width=4.9cm,fill=red!5](rednon-bottom){\vphantom{Gg}\smash{\makebox[0pt]{$R$ non-bottom}}}

		( 0  ,1) node[draw,right,rectangle,minimum width=2.7cm,fill=blue!20](bluebottom2){\vphantom{Gg}\smash{\makebox[0pt]{$U$ bottom}}}
		( 2.7,1) node[draw,right,rectangle,minimum width=3.05cm,fill=blue!5](bluenon-bottom2){\vphantom{Gg}\smash{\makebox[0pt]{$U$ non-bottom}}}
		( 5.85,1) node[draw,right,rectangle,minimum width=2.25cm,fill=red!20](redbottom2){\vphantom{Gg}\smash{\makebox[0pt]{$R$ bottom}}}
		( 8.1,1) node[draw,right,rectangle,minimum width=4.9cm,fill=red!5](rednon-bottom2){\vphantom{Gg}\smash{\makebox[0pt]{$R$ non-bottom}}}

		( 0  ,0) node[draw,right,rectangle,minimum width=2.7cm,fill=black!10](bluebottom3){\vphantom{Gg}\smash{\makebox[0pt]{bottom}}}
		( 2.7,0) node[draw,right,rectangle,minimum width=3.05cm](bluenon-bottom3){\vphantom{Gg}\smash{\makebox[0pt]{non-bottom}}}
		( 5.85,0) node[draw,right,rectangle,minimum width=2.25cm,fill=black!10](redbottom3){\vphantom{Gg}\smash{\makebox[0pt]{bottom}}}
		( 8.1,0) node[draw,right,rectangle,minimum width=1.8cm,fill=black!10](newbottom3){\vphantom{Gg}\smash{\makebox[0pt]{new b.}}}
		( 9.9,0) node[draw,right,rectangle,minimum width=3.1cm](rednon-bottom3){\vphantom{Gg}\smash{\makebox[0pt]{non-bottom}}}
	;
	\draw[-Stealth]  (non-bottommarked.west) -- ++(-0.25,0);
	\draw[-Stealth]      (bottommarked.west) -- ++(-0.25,0);
	\draw[-Stealth](bluenon-bottomtemp.east) -- ++( 0.25,0);
	\draw[-Stealth]  (notblueundefined.west) -- ++(-0.25,0);
	\draw[-Stealth]    (notbluenonzero.east) -- ++( 0.25,0);
	\draw[-Stealth] (rednon-bottomtemp.west) -- ++(-0.25,0);
	\draw[-Stealth]        (newbottom3.east) -- ++( 0.25,0);
	\begin{scope}[color=gray,line cap=round,decoration={brace,amplitude=2.75pt},below,rectangle,text=black]
		\draw[decorate] (bottom) ++(-2.485,0.3) -- node[above]{$B$ before refinement\vphantom{Gg}} ++(12.97,0);
		\draw[decorate] (bluenon-bottom3) ++(1.51,-0.3) -- node{new block} ++(-5.72,0);
		\draw[decorate] (rednon-bottom3) ++(1.535,-0.3) -- node{$B$ after refinement} ++(-7.12,0);
	\end{scope}
	\draw[-Stealth] (bluenon-bottom) to (bluenon-bottom2);
	\draw[-Stealth] (redbottom) to (redbottom2);

	\draw	(-1.7,4) node[anchor=west]{\ref{alg:line-initRMarked}}
		(-1.7,3) node[anchor=west]{\ref{alg:line-after-initU}--\ref{alg:line-handle-state-loopend}}
		(-1.7,2) node[anchor=west]{\ref{alg:line-return}}
		(-1.7,1) node[anchor=west]{\ref{alg-split1}/\ref{alg-split2}}
	;
\end{tikzpicture}
\end{center}
\caption{Internal structure of a block during and shortly after $\mathsf{split}$.
	In this example, the $U$-subblock is smaller, so it will become the new block.
	\label{fig:sort-then-swap}}
\end{figure}
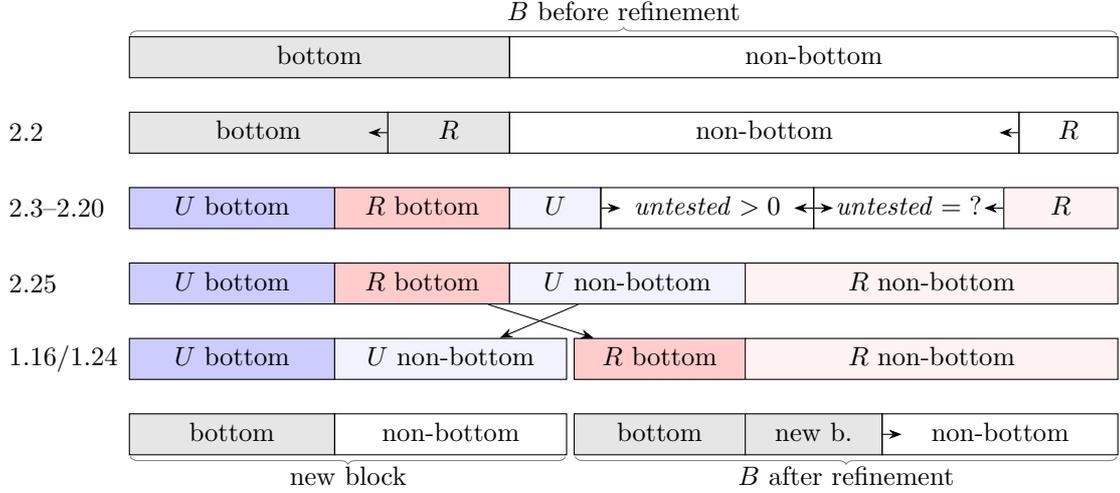

When a block is split, we need to update the data structure.
This can be done in time proportional to the smaller subblock at Lines~\ref{alg-split1} and~\ref{alg-split2}.
Figure~\ref{fig:sort-then-swap} illustrates how $U$ and $R$ are located in the slice of $B$.
In the third line in the figure, each state with $\mathit{untested}[t] \neq 0$ is stored in a specific slice of non-bottom states. To initialise $\mathit{untested}[t]$ to ``undefined'' at Line~\ref{alg:line-init-untested}$\ell$ of Algorithm~\ref{alg:split}, it suffices to set this slice to the empty slice.
After $\mathsf{split}$ has finished, the bottom states of $R$ and the non-bottom states of $U$ exchange places;
note that this can be done in time proportional to the smaller of the two subblocks.
In the example, a part of the non-bottom states of $U$ can even stay where they are.
At the end, new bottom states are searched and added to the bottom states of $R$.

\paragraph{Transitions}
are stored in four linked \emph{refinable partitions} \cite{ValmariL08}.
While not all four are essential for the concepts of the algorithm, we need them to ensure the time complexity bounds.
In Figure~\ref{fig:exampletransitions} we illustrate each of the refinable partitions for the transitions corresponding to the example in Figure~\ref{fig:examplepartition}.

\begin{itemize}
\item	Transitions are stored in an array \emph{grouped per bunch,}
	i.e., transitions in the same bunch are adjacent to each other, see Figure~\ref{fig:exampletransitions-action-block-slices}.
	Then, each bunch can be described as a slice in the array.

	Within a bunch, transitions are grouped further per action-block-slice.
	As a consequence, when a small action-block-slice needs to be split off a bunch,
	one can easily select either the first or the last action-block-slice in the bunch
	and split it off in constant time.

	When a block is split, we need to split its action-block-slices,
	which can be done in time proportional to the incoming transitions of the smaller subblock at Lines~\ref{alg-split1} and \ref{alg-split2}.
	This operation fits into the time budget.
\item	Transitions are stored in an array \emph{grouped per block-bunch-slice,} see Figure~\ref{fig:exampletransitions-block-bunch-slices}.
	Within each slice, the marked transitions are separated from the unmarked ones
	when the block-bunch-slice is a splitter.

	When a bunch is split, we need to split its block-bunch-slices,
	which can be done in time proportional to the smaller new bunch at Line~\ref{alg:split_transitions}.
	When a block is split, we need to split its block-bunch-slices,
	which can be done in time proportional to the outgoing transitions of the smaller subblock at Lines~\ref{alg-split1} and \ref{alg-split2}.
	Both operations fit into the allowed time budget.

	We need this partition to mark transitions quickly, namely in constant time per transition,
	to visit all marked transitions at Line~\ref{alg:line-initRMarked},
	and to visit all other transitions at Line~\ref{alg:line-initR}$r$.
\item	Transitions are stored \emph{grouped per source state,} see Figure~\ref{fig:exampletransitions-source-state}.
	Within each slice, transitions are further grouped into non-inert and inert transitions,
	and the non-inert transitions are grouped per bunch.

	When a bunch is split, we need to regroup the transitions in that bunch as well.
	This is done in time proportional to the smaller new bunch at Line~\ref{alg:split_transitions}.

	We need this partition to visit all outgoing transitions of a state at Line~\ref{alg:slow-test}$\ell$.
	We also use this partition to decide whether a state with a transition in $\TaB$ also has a transition in $\TB \setminus \TaB$:
	While regrouping the transitions in $\TB$ leaving from the same source state,
	we can recognize whether all transitions move to $\TaB$ or some remain in $\TB \setminus \TaB$.
\item	Transitions are stored \emph{grouped per target state,} see Figure~\ref{fig:exampletransitions-target-state}.
	Within each slice, transitions are further grouped into non-inert and inert transitions.
	This partition hardly ever changes.
	We need this partition to visit all incoming (inert) transitions of a state at Line~\ref{alg:line-for-inert-predecessors}.
\end{itemize}
When a transition becomes non-inert, we have to change all four partitions:
Create a new bunch, create a new block-bunch-slice, and move the transition from the inert to the non-inert ones in the last two partitions.
We do this by running over all outgoing (formerly) inert transitions of $R$ or all incoming (formerly) inert transitions of $U$,
depending on which subblock is smaller.
This requires either time $\bigo{\size{R_{\pijl{}}}}$ or $\bigo{\size{U_{\lijp{}}}}$, respectively,
which fits into the time budget at Lines~\ref{alg-split1} and~\ref{alg-split2}.

In our implementation, the four transition partitions are linked together via pointers.
When source and target state and (pointers to) the relevant slices mentioned above are only stored once,
we need nine pointers or \texttt{size\_t} integers per transition.

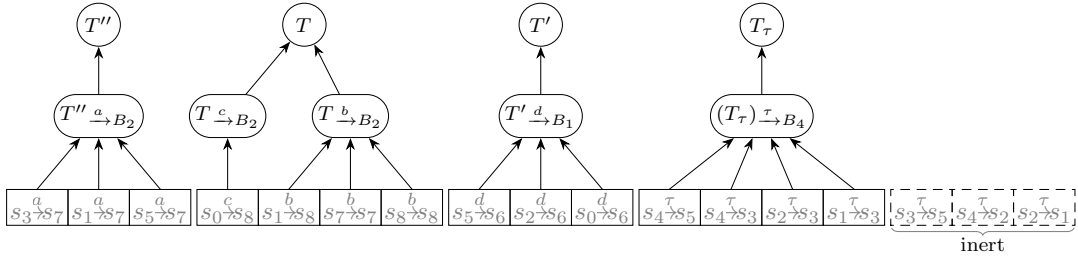
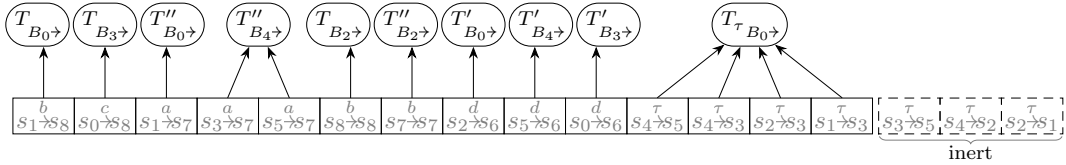
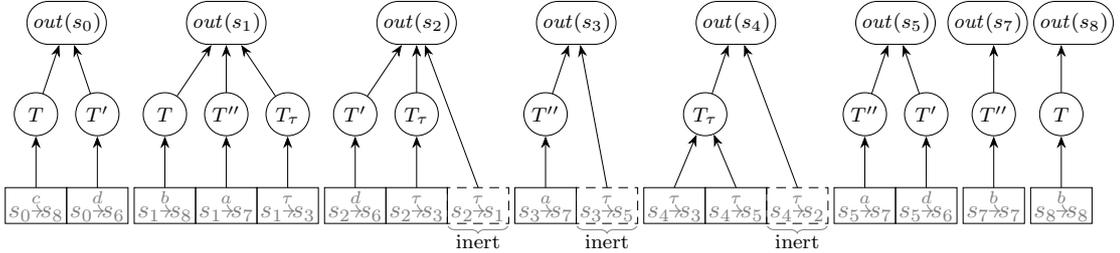
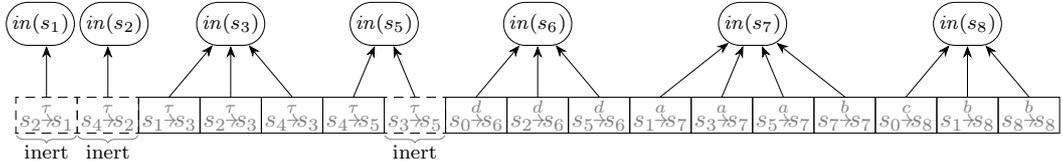
\begin{figure}[htp]
	\begin{subfigure}{\linewidth}
	\centering
	\scalebox{.95}{
	\begin{tikzpicture}[>=Stealth,scale=0.85,line cap=rect]
		% partition of the transitions
		\begin{scope}[every node/.style={rectangle,draw,minimum width=8.5mm,minimum height=5mm,anchor=mid,inner sep=0pt},
				inert/.style={line cap=rect,dash pattern=on 1.2mm off 1.05mm,dash phase=0.85mm}]
			\draw (0,5)
				++(0  ,0) node[text=gray,alias=T''start](s3tos7){$s_3\xshortto{a} s_7$}
				++(1  ,0) node[text=gray](s1tos7){$s_1\xshortto{a} s_7$}
				++(1  ,0) node[text=gray](s5tos7){$s_5\xshortto{a} s_7$}

				++(1.1,0) node[text=gray,alias=T'start](s0tos8){$s_0\xshortto{c} s_8$}
				++(1  ,0) node[text=gray](s1tos8){$s_1\xshortto{b} s_8$}
				++(1  ,0) node[text=gray](s7tos7){$s_7\xshortto{b} s_7$}
				++(1  ,0) node[text=gray](s8tos8){$s_8\xshortto{b} s_8$}

				++(1.1,0) node[text=gray,alias=Tstart](s5tos6){$s_5\xshortto{d} s_6$}
				++(1  ,0) node[text=gray](s2tos6){$s_2\xshortto{d} s_6$}
				++(1  ,0) node[text=gray](s0tos6){$s_0\xshortto{d} s_6$}

				++(1.1,0) node[text=gray,alias=Ttaustart](s4tos5){$s_4\xshortto{\tau} s_5$}
				++(1  ,0) node[text=gray](s4tos3){$s_4\xshortto{\tau} s_3$}
				++(1  ,0) node[text=gray](s2tos3){$s_2\xshortto{\tau} s_3$}
				++(1  ,0) node[text=gray](s1tos3){$s_1\xshortto{\tau} s_3$}

				++(1.1,0) node[text=gray,inert](s3tos5){$s_3\xshortto{\tau} s_5$}
				++(1  ,0) node[text=gray,inert](s4tos2){$s_4\xshortto{\tau} s_2$}
				++(1  ,0) node[text=gray,inert](s2tos1){$s_2\xshortto{\tau} s_1$}
			;
		\end{scope}

		\begin{scope}[every node/.style={rounded rectangle,minimum height=6mm,draw,inner ysep=0pt,inner xsep=1pt,minimum width=6mm}]
			\footnotesize
			 \draw (T''start)  ++( 1,1.5) node(T''aB2){${T''}_{\pijl{a}{B_2}}$};
			 \draw (Tstart)    ++( 1,1.5)   node(T'dB1){${T'}_{\pijl{d}{B_1}}$};
			 \draw (T'start)   ++( 0,1.5)   node(TcB2){${T^{\vphantom{\prime}}}_{\pijl{c}{B_2}}$};
			 \draw (T'start)   ++( 2,1.5) node(TbB2){${T^{\vphantom{\prime}}}_{\pijl{b}{B_2}}$};
			 \draw (Ttaustart) ++( 1.5,1.5) node(TtautauB4){$(T_\tau){\vphantom{T'}}_{\pijl{\tau}{B_4}}$};
		\end{scope}

		\begin{scope}[color=gray,line cap=round,decoration={brace,amplitude=2.75pt},below,rectangle,text=black]
		 	\draw[decorate] (s2tos1)       ++(0.485,-0.34) -- node{\footnotesize inert} ++(-2.97,0);
		\end{scope}
		\begin{scope}[every node/.style={rounded rectangle,minimum height=6mm,draw,inner ysep=0pt,inner xsep=1pt,minimum width=6mm}]
		\draw (T''aB2)    ++(0   ,1.5) node(T''){\footnotesize $T''$};
		\draw (TcB2)      ++(1.25,1.5) node(T){\footnotesize $T^{\vphantom{\prime}}$};
		\draw (T'dB1)     ++(0   ,1.5) node(T'){\footnotesize $T'$};
		\draw (TtautauB4) ++(0   ,1.5) node(Ttau){\footnotesize $\smash[b]{T_\tau}\vphantom{T'}$};
		\end{scope}
		\draw[->] (s3tos7.north) to (T''aB2);
		\draw[->] (s1tos7.north) to (T''aB2);
		\draw[->] (s5tos7.north) to (T''aB2);

		\draw[->] (s0tos8.north) to (TcB2);
		\draw[->] (s1tos8.north) to (TbB2);
		\draw[->] (s7tos7.north) to (TbB2);
		\draw[->] (s8tos8.north) to (TbB2);

		\draw[->] (s5tos6.north) to (T'dB1);
		\draw[->] (s2tos6.north) to (T'dB1);
		\draw[->] (s0tos6.north) to (T'dB1);

		\draw[->] (s4tos5.north) to (TtautauB4);
		\draw[->] (s4tos3.north) to (TtautauB4);
		\draw[->] (s2tos3.north) to (TtautauB4);
		\draw[->] (s1tos3.north) to (TtautauB4);

		\draw[->] (T''aB2) 		to (T'');
		\draw[->] (TbB2) 			to (T);
		\draw[->] (TcB2) 			to (T);
		\draw[->] (T'dB1) 		to (T');
		\draw[->] (TtautauB4) to (Ttau);
	\end{tikzpicture}
	} % end scalebox

	\subcaption{
		The forest of transitions per bunch, grouped by action-block-slice, with the bunches.\label{fig:exampletransitions-action-block-slices}
	}
	\end{subfigure}

	\medskip
	\begin{subfigure}{\linewidth}
	\centering
	\scalebox{.95}{
	\begin{tikzpicture}[>=Stealth,scale=0.85,line cap=rect]
		% partition of the transitions
		\begin{scope}[every node/.style={rectangle,draw,minimum width=8.5mm,minimum height=5mm,anchor=mid,inner sep=0pt},
				inert/.style={line cap=rect,dash pattern=on 1.2mm off 1.05mm,dash phase=0.85mm}]
			\draw (-0.1,5)
				++(0  ,0) node[text=gray,alias=TB0start](s1tos8){$s_1\xshortto{b} s_8$}
				++(1  ,0) node[text=gray,alias=TB3start](s0tos8){$s_0\xshortto{c} s_8$}
				++(1  ,0) node[text=gray,alias=T''B0start](s1tos7){$s_1\xshortto{a} s_7$}
				++(1  ,0) node[text=gray,alias=T''B4start](s3tos7){$s_3\xshortto{a} s_7$}
				++(1  ,0) node[text=gray](s5tos7){$s_5\xshortto{a} s_7$}

				++(1  ,0) node[text=gray,alias=TB2start](s8tos8){$s_8\xshortto{b} s_8$}

				++(1  ,0) node[text=gray,alias=T''B2start](s7tos7){$s_7\xshortto{b} s_7$}

				++(1  ,0) node[text=gray,alias=T'B0start](s2tos6){$s_2\xshortto{d} s_6$}
				++(1  ,0) node[text=gray,alias=T'B4start](s5tos6){$s_5\xshortto{d} s_6$}
				++(1  ,0) node[text=gray,alias=T'B3start](s0tos6){$s_0\xshortto{d} s_6$}

				++(1  ,0) node[text=gray,alias=TtauB0start](s4tos5){$s_4\xshortto{\tau} s_5$}
				++(1  ,0) node[text=gray](s4tos3){$s_4\xshortto{\tau} s_3$}
				++(1  ,0) node[text=gray](s2tos3){$s_2\xshortto{\tau} s_3$}
				++(1  ,0) node[text=gray](s1tos3){$s_1\xshortto{\tau} s_3$}

				++(1.1,0) node[text=gray,inert](s3tos5){$s_3\xshortto{\tau} s_5$}
				++(1  ,0) node[text=gray,inert](s4tos2){$s_4\xshortto{\tau} s_2$}
				++(1  ,0) node[text=gray,inert](s2tos1){$s_2\xshortto{\tau} s_1$}
			;
		\end{scope}

		\begin{scope}[every node/.style={rounded rectangle,minimum height=6mm,draw,inner ysep=0pt,inner xsep=1pt,minimum width=6mm}]
			\footnotesize
				\draw[overlay] (TB0start) 		++(-0.1 ,1.5) node(TB0){$T_{B_0\shortto}^{\phantom{\prime}}$};
				\draw (TB3start) 		++( 0   ,1.5) node(TB3){$T_{B_3\shortto}^{\phantom{\prime}}$};
			 	\draw (T''B0start) 	++( 0.1 ,1.5) node(T''B0){$T''_{B_0\shortto}$};
			 	\draw (T''B4start) 	++( 0.5 ,1.5) node(T''B4){$T''_{B_4\shortto}$};
			 	\draw (TB2start) 		++(-0.2 ,1.5) node(TB2){$T_{B_2\shortto}^{\phantom{\prime}}$};
			 	\draw (T''B2start) 	++(-0.1 ,1.5) node(T''B2){$T''_{B_2\shortto}$};
			 	\draw (T'B0start) 	++( 0   ,1.5) node(T'B0){$T'_{B_0\shortto}$};
			 	\draw (T'B4start) 	++( 0.1 ,1.5) node(T'B4){$T'_{B_4\shortto}$};
			 	\draw (T'B3start) 	++( 0.2 ,1.5) node(T'B3){$T'_{B_3\shortto}$};
			 	\draw (TtauB0start) ++( 1.5 ,1.5) node(TtauB0){$T_\tau^{\phantom{\prime}}{}_{B_0\shortto}^{\phantom{\prime}}$};
		\end{scope}

		\begin{scope}[color=gray,line cap=round,decoration={brace,amplitude=2.75pt},below,rectangle,text=black]
		 	\draw[decorate] (s2tos1)       ++(0.485,-0.34) -- node{\footnotesize inert} ++(-2.97,0);
		\end{scope}

		\draw[->] (s1tos8.north) to (s1tos8 |- TB0.south);
		\draw[->] (s0tos8.north) to (s0tos8 |- TB3.south);
		\draw[->] (s1tos7.north) to (s1tos7 |- T''B0.south);
		\draw[->] (s3tos7.north) to (T''B4);
		\draw[->] (s5tos7.north) to (T''B4);
		\draw[->] (s8tos8.north) to (s8tos8 |- TB2.south);
		\draw[->] (s7tos7.north) to (s7tos7 |- T''B2.south);
		\draw[->] (s2tos6.north) to (s2tos6 |- T'B0.south);
		\draw[->] (s5tos6.north) to (s5tos6 |- T'B4.south);
		\draw[->] (s0tos6.north) to (s0tos6 |- T'B3.south);
		\draw[->] (s4tos5.north) to (TtauB0);
		\draw[->] (s4tos3.north) to (TtauB0);
		\draw[->] (s2tos3.north) to (TtauB0);
		\draw[->] (s1tos3.north) to (TtauB0);

	\end{tikzpicture}
	}
	\subcaption{The forest of transitions per block-bunch-slice.\label{fig:exampletransitions-block-bunch-slices}}
	\end{subfigure}

	\medskip
	\begin{subfigure}{\linewidth}
	\centering
	\scalebox{.95}{
	\begin{tikzpicture}[>=Stealth,scale=0.85,line cap=rect]
		% partition of the transitions
		\begin{scope}[every node/.style={rectangle,draw,minimum width=8.5mm,minimum height=5mm,anchor=mid,inner sep=0pt},
				inert/.style={line cap=rect,dash pattern=on 1.2mm off 1.05mm,dash phase=0.85mm}]
			\draw (-0.1,5)
				++(0   ,0) node[text=gray,alias=s0start](s0tos8){$s_0\xshortto{c} s_8$}
				++(1   ,0) node[text=gray](s0tos6){$s_0\xshortto{d} s_6$}

				++(1.1 ,0) node[text=gray,alias=s1start](s1tos8){$s_1\xshortto{b} s_8$}
				++(1   ,0) node[text=gray](s1tos7){$s_1\xshortto{a} s_7$}
				++(1   ,0) node[text=gray](s1tos3){$s_1\xshortto{\tau} s_3$}

				++(1.1 ,0) node[text=gray,alias=s2start](s2tos6){$s_2\xshortto{d} s_6$}
				++(1   ,0) node[text=gray](s2tos3){$s_2\xshortto{\tau} s_3$}
				++(1   ,0) node[text=gray,inert](s2tos1){$s_2\xshortto{\tau} s_1$}

				++(1.1 ,0) node[text=gray,alias=s3start](s3tos7){$s_3\xshortto{a} s_7$}
				++(1   ,0) node[text=gray,inert](s3tos5){$s_3\xshortto{\tau} s_5$}

				++(1.1 ,0) node[text=gray,alias=s4start](s4tos3){$s_4\xshortto{\tau} s_3$}
				++(1   ,0) node[text=gray](s4tos5){$s_4\xshortto{\tau} s_5$}
				++(1   ,0) node[text=gray,inert](s4tos2){$s_4\xshortto{\tau} s_2$}

				++(1.1 ,0) node[text=gray,alias=s5start](s5tos7){$s_5\xshortto{a} s_7$}
				++(1   ,0) node[text=gray](s5tos6){$s_5\xshortto{d} s_6$}

				++(1.1 ,0) node[text=gray,alias=s7start](s7tos7){$s_7\xshortto{b} s_7$}

				++(1.1 ,0) node[text=gray,alias=s8start](s8tos8){$s_8\xshortto{b} s_8$}
			;
		\end{scope}

		\begin{scope}[every node/.style={rounded rectangle,minimum height=6mm,draw,inner ysep=0pt,inner xsep=1pt,minimum width=6mm}]
			\footnotesize
			 \draw (s0start) ++( 0  ,1.5) node(s0T){$T$};
			 \draw (s0start) ++( 1  ,1.5) node(s0T'){$T\smash{'}$};
			 \draw (s1start) ++( 0  ,1.5) node(s1T){$T$};
			 \draw (s1start) ++( 1  ,1.5) node(s1T''){$T\smash{''}$};
			 \draw (s1start) ++( 2  ,1.5) node(s1Ttau){$\smash[b]{T_\tau}$};
			 \draw (s2start) ++( 0  ,1.5) node(s2T'){$T\smash{'}$};
			 \draw (s2start) ++( 1  ,1.5) node(s2Ttau){$\smash[b]{T_\tau}$};
			 \draw (s3start) ++( 0  ,1.5) node(s3T''){$T\smash{''}$};
			 \draw (s4start) ++( 0.5,1.5) node(s4Ttau){$\smash[b]{T_\tau}$};
			 \draw (s5start) ++( 0  ,1.5) node(s5T''){$T\smash{''}$};
			 \draw (s5start) ++( 1  ,1.5) node(s5T'){$T\smash{'}$};
			 \draw (s7start) ++( 0  ,1.5) node(s7T''){$T\smash{''}$};
			 \draw (s8start) ++( 0  ,1.5) node(s8T){$T$};
		\end{scope}

		\begin{scope}[color=gray,line cap=round,decoration={brace,amplitude=2.75pt},below,rectangle,text=black]
			 \draw[decorate] (s2tos1) ++(0.485,-0.34) -- node{\footnotesize inert} ++(-0.97,0);
			 \draw[decorate] (s3tos5) ++(0.485,-0.34) -- node{\footnotesize inert} ++(-0.97,0);
			 \draw[decorate] (s4tos2) ++(0.485,-0.34) -- node{\footnotesize inert} ++(-0.97,0);
		\end{scope}
		\begin{scope}[every node/.style={rounded rectangle,minimum height=6mm,draw,inner ysep=0pt,inner xsep=1pt,minimum width=6mm}]
		 \draw (s0T) 		++( 0.5,1.5) node(outs0){\footnotesize $\mathit{out}(s_0)$};
		 \draw (s1T) 		++( 1  ,1.5) node(outs1){\footnotesize $\mathit{out}(s_1)$};
		 \draw (s2T') 	++( 1  ,1.5) node(outs2){\footnotesize $\mathit{out}(s_2)$};
		 \draw (s3T'') 	++( 0.5,1.5) node(outs3){\footnotesize $\mathit{out}(s_3)$};
		 \draw (s4Ttau) ++( 0.5,1.5) node(outs4){\footnotesize $\mathit{out}(s_4)$};
		 \draw (s5T'') 	++( 0.5,1.5) node(outs5){\footnotesize $\mathit{out}(s_5)$};
		 \draw (s7T'') 	++(-0.1,1.5) node(outs7){\footnotesize $\mathit{out}(s_7)$};
		 \draw[overlay] (s8T) 		++( 0.2,1.5) node(outs8){\footnotesize $\mathit{out}(s_8)$};
		\end{scope}

		\draw[->] (s0tos8.north) to (s0T);
		\draw[->] (s0tos6.north) to (s0T');

		\draw[->] (s1tos8.north) to (s1T);
		\draw[->] (s1tos7.north) to (s1T'');
		\draw[->] (s1tos3.north) to (s1Ttau);

		\draw[->] (s2tos6.north) to (s2T');
		\draw[->] (s2tos3.north) to (s2Ttau);
		\draw[->] (s2tos1.north) to (outs2);

		\draw[->] (s3tos7.north) to (s3T'');
		\draw[->] (s3tos5.north) to (outs3);

		\draw[->] (s4tos3.north) to (s4Ttau);
		\draw[->] (s4tos5.north) to (s4Ttau);
		\draw[->] (s4tos2.north) to (outs4);

		\draw[->] (s5tos7.north) to (s5T'');
		\draw[->] (s5tos6.north) to (s5T');

		\draw[->] (s7tos7.north) to (s7T'');

		\draw[->] (s8tos8.north) to (s8T);

		\draw[->] (s0T) 		to (outs0);
		\draw[->] (s0T') 		to (outs0);
		\draw[->] (s1T) 		to (outs1);
		\draw[->] (s1T'') 	to (outs1);
		\draw[->] (s1Ttau) 	to (outs1);
		\draw[->] (s2T') 		to (outs2);
		\draw[->] (s2Ttau) 	to (outs2);
		\draw[->] (s3T'') 	to (outs3);
		\draw[->] (s4Ttau) 	to (outs4);
		\draw[->] (s5T'') 	to (outs5);
		\draw[->] (s5T') 		to (outs5);
		\draw[->] (s7T'') 	to (s7T'' |- outs7.south);
		\draw[->] (s8T) 		to (s8T |- outs8.south);
	\end{tikzpicture}
	}
	\subcaption{The forest of transitions per source state, grouped into non-inert and inert transitions, and non-inert transitions grouped per bunch.\label{fig:exampletransitions-source-state}}
	\end{subfigure}

	\medskip
	\begin{subfigure}{\linewidth}
	\centering
	\scalebox{.95}{
	\begin{tikzpicture}[>=Stealth,scale=0.85,line cap=rect]
		% partition of the transitions
		\begin{scope}[every node/.style={rectangle,draw,minimum width=8.5mm,minimum height=5mm,anchor=mid,inner sep=0pt},
				inert/.style={line cap=rect,dash pattern=on 1.2mm off 1.05mm,dash phase=0.85mm}]
			\draw (-0.1,5)
				++(0   ,0) node[text=gray,inert,alias=s1start](s2tos1){$s_2\xshortto{\tau} s_1$}

				++(1   ,0) node[text=gray,inert,alias=s2start](s4tos2){$s_4\xshortto{\tau} s_2$}

				++(1   ,0) node[text=gray,alias=s3start](s1tos3){$s_1\xshortto{\tau} s_3$}
				++(1   ,0) node[text=gray](s2tos3){$s_2\xshortto{\tau} s_3$}
				++(1   ,0) node[text=gray](s4tos3){$s_4\xshortto{\tau} s_3$}

				++(1   ,0) node[text=gray,alias=s5start](s4tos5){$s_4\xshortto{\tau} s_5$}
				++(1   ,0) node[text=gray,inert](s3tos5){$s_3\xshortto{\tau} s_5$}

				++(1   ,0) node[text=gray,alias=s6start](s0tos6){$s_0\xshortto{d} s_6$}
				++(1   ,0) node[text=gray](s2tos6){$s_2\xshortto{d} s_6$}
				++(1   ,0) node[text=gray](s5tos6){$s_5\xshortto{d} s_6$}

				++(1   ,0) node[text=gray,alias=s7start](s1tos7){$s_1\xshortto{a} s_7$}
				++(1   ,0) node[text=gray](s3tos7){$s_3\xshortto{a} s_7$}
				++(1   ,0) node[text=gray](s5tos7){$s_5\xshortto{a} s_7$}
				++(1   ,0) node[text=gray](s7tos7){$s_7\xshortto{b} s_7$}

				++(1   ,0) node[text=gray,alias=s8start](s0tos8){$s_0\xshortto{c} s_8$}
  			++(1   ,0) node[text=gray](s1tos8){$s_1\xshortto{b} s_8$}
				++(1   ,0) node[text=gray](s8tos8){$s_8\xshortto{b} s_8$}
			;
		\end{scope}

		\begin{scope}[color=gray,line cap=round,decoration={brace,amplitude=2.75pt},below,rectangle,text=black]
			 \draw[decorate] (s2tos1)       ++(0.485,-0.34) -- node{\footnotesize inert} ++(-0.97,0);
			 \draw[decorate] (s3tos5)       ++(0.485,-0.34) -- node{\footnotesize inert} ++(-0.97,0);
			 \draw[decorate] (s4tos2)       ++(0.485,-0.34) -- node{\footnotesize inert} ++(-0.97,0);
		\end{scope}
		\begin{scope}[every node/.style={rounded rectangle,minimum height=6mm,draw,inner ysep=0pt,inner xsep=1pt,minimum width=6mm}]
		 \draw[overlay] (s1start) ++( -0.1  ,1.5) node(ins1){\footnotesize $\mathit{in}(s_1)$};
		 \draw (s2start) ++( 0.1  ,1.5) node(ins2){\footnotesize $\mathit{in}(s_2)$};
		 \draw (s3start) ++( 1  ,1.5) node(ins3){\footnotesize $\mathit{in}(s_3)$};
		 \draw (s5start) ++( 0.5,1.5) node(ins5){\footnotesize $\mathit{in}(s_5)$};
		 \draw (s6start) ++( 1,1.5) node(ins6){\footnotesize $\mathit{in}(s_6)$};
		 \draw (s7start) ++( 1.5  ,1.5) node(ins7){\footnotesize $\mathit{in}(s_7)$};
		 \draw (s8start) ++( 1,1.5) node(ins8){\footnotesize $\mathit{in}(s_8)$};
		\end{scope}
		\draw[->] (s3tos7.north) to (ins7);
		\draw[->] (s1tos7.north) to (ins7);
		\draw[->] (s5tos7.north) to (ins7);
		\draw[->] (s7tos7.north) to (ins7);

		\draw[->] (s1tos8.north) to (ins8);
		\draw[->] (s0tos8.north) to (ins8);
		\draw[->] (s8tos8.north) to (ins8);

		\draw[->] (s5tos6.north) to (ins6);
		\draw[->] (s2tos6.north) to (ins6);
		\draw[->] (s0tos6.north) to (ins6);

		\draw[->] (s4tos5.north) to (ins5);
		\draw[->] (s4tos3.north) to (ins3);
		\draw[->] (s2tos3.north) to (ins3);
		\draw[->] (s1tos3.north) to (ins3);

		\draw[->] (s2tos1.north) to (s2tos1 |- ins1.south);
		\draw[->] (s3tos5.north) to (ins5);
		\draw[->] (s4tos2.north) to (s4tos2 |- ins2.south);
	\end{tikzpicture}
	}

	\subcaption{The forest of transitions per goal state, grouped into non-inert and inert transitions.\label{fig:exampletransitions-target-state}}
	\end{subfigure}

	\caption{Four refinable partition instances for the transitions of the example in Figure~\ref{fig:examplepartition}.\label{fig:exampletransitions}}
\end{figure}

\paragraph{Block-bunch-slices}
are also stored in lists.
We store, per block, a list of its stable block-bunch-slices,
and additionally one global list containing all unstable block-bunch-slices, called the splitter list.

When a block is split, its list of stable block-bunch-slices needs to be distributed over the two blocks.
This does not require additional time complexity over splitting the block-bunch-slices themselves.
New stable block-bunch-slices are inserted into the list of the new block.
New unstable block-bunch-slices are inserted into the splitter list.

We obviously need the unstable block-bunch-slices at Line~\ref{alg-line-stabilize-begin},
and we need the stable block-bunch-slices of block $N$ at Line~\ref{alg:line-make-all-slices-unstable}.
We store a stability flag with each block-bunch-slice
to decide whether a split-off new block-bunch-slice should go into the stable or the unstable list.
When executing Line~\ref{alg:line-make-all-slices-unstable}, we now need to clear the stability flag of every stable block-bunch-slice of $N$.
As every block-bunch-slice of $N$ either already contains a transition from a new bottom state,
or will contain a transition from a new bottom state after it has been used as a splitter,
we assign the runtime needed to clear this flag to the present and future new bottom states of~$N$.

Care needs to be taken that $T_{U\pijl{}} \setminus T_{U\pijl{a} B'}$ can be found at Line~\ref{alg:line-remove-TU-after-primary-split}.
We ensure this as follows.
At Line~\ref{alg-line-find-splittable-block}, the primary splitter $\TBaB$ and the secondary splitter $\TB \setminus \TBaB$ are added to the splitter list in this order.
After $\TBaB$ has been removed from the list (Line~\ref{alg:removeTBpijl}),
$\TB \setminus \TBaB$ is the first element of the remaining list.
At Line~\ref{alg-split1}, this is split into $T_{R\pijl{}} \setminus T_{R\pijl{a} B'}$ and $T_{U\pijl{}} \setminus T_{U\pijl{a} B'}$,
in an order that depends on which subblock is the smaller.
So either the first or the second element of the splitter list is the required slice at Line~\ref{alg:line-remove-TU-after-primary-split}.

\subsection{Several small optimisations}

We mention a few additional optimisations that our implementation uses, which are not essential for the complexity,
but speed up the implementation.

In cases when we mark all transitions in a block-bunch-slice (Lines~\ref{alg-line-mark-all-transitions-in-primary} and~\ref{alg:new_bunch}),
we instead add their source states to $R$ immediately.

In Line~\ref{alg:line-make-all-slices-unstable},
we actually know that $N$ is stable under $R \pijl{\tau} U$ because that was the splitter applied last,
so we do not make $R \pijl{\tau} U$ unstable.
Also, if $\Bottom(N) \setminus \Bottom(R) = \emptyset$, $N$ is stable under $T'_{N \pijl{}} \subseteq \TB'$
(because $R$ was stable under $\TB'$,
and stability is preserved
if no more new bottom states are found at Line~\ref{alg:split-call2}),
so we do not make $T'_{N \pijl{}}$ unstable.

%We only need to execute the test loop at Lines~\ref{alg:slow-test}$\ell$--\ref{alg:slow-test-end}$\ell$
%if the right coroutine has not yet finished Line~\ref{alg:line-R-union-unmarked-T}$r$;
%after that, all states with an outgoing (strong) transition in block-bunch-slice $T$ are in $R$
%and will be filtered out by Line~\ref{alg:line-skip-R-states}$\ell$.

\section{Benchmarks}
\label{sec:benchmarks}

The new algorithm (JGKW19) has been implemented in the mCRL2 toolset~\cite{DBLP:conf/tacas/BunteGKLNVWWW19}, and is available in its 201908.0 release.
This toolset also contains implementations of various other algorithms, such as the algorithm by Groote and Vaandrager (GV)~\cite{GV90}
and the GJKW algorithm of~\cite{GJKW2017}, which we refer to as GJKW17.
In addition, it offers an implementation of the partition-refinement algorithm using state signatures by Blom and Orzan (BO)~\cite{BO2003}.
For each state, a signature is maintained describing which blocks the state can reach directly via its outgoing transitions.
Although its time complexity is $O(mn^2)$, in some cases, it is known to outperform GV.

In this section, we report on the experiments we have conducted to compare GV, BO, GJKW17 and JGKW19 when applied to practical examples.
All experiments involve the branching bisimulation minimisation of a given LTS, which GJKW17 first transforms into a Kripke structure. Note that for an LTS of $n$ states and $m$ transitions, this transformation results in a Kripke structure consisting of $n + m$ states and $2m$ transitions in the worst case.

The set of benchmarks consists of all LTSs offered by the VLTS benchmark set\footnote{\url{http://cadp.inria.fr/resources/vlts}.} with at least 60,000 transitions, plus
three cases that have been derived from models distributed with the mCRL2 toolset. These models are:
\begin{enumerate}
\item \textbf{lift6-final}: this model is based on an elevator model, extended to six elevators;
\item \textbf{dining\_14}: this is the dining philosophers model with 14 philosophers;
\item \textbf{1394-fin3}: this model is an altered version of the 1394-fin model, extended to three processes and two data elements.
\end{enumerate}

Table~\ref{tab:chars} presents the structural characteristics for each benchmark: the number of states ($n$), the number of transitions ($m$), the number of $\tau$-transitions ($m_\tau$), the number of actions ($\mid$\textit{Act}$\mid$), and the number of states and transitions after branching bisimulation reduction (min.\ $n$ and min. $m$, respectively).

% !TEX root = LTS-bisimulation_abstractversion.tex
\jk{Opmerking Tim: waarom geen min. $m_\tau$ in de tabel?}
\begin{table}[t]
\scriptsize
\centering
\caption{Structural characteristics of the benchmark LTSs.}
\begin{tabular}{lrrrrrr}
\hline
\textbf{model} & \textbf{\textit{n}} & \textbf{\textit{m}} & \textbf{\textit{m}}$_\tau$ &  \textbf{\textit{$\mid$Act$\mid$}} & \textbf{min.}\ \textbf{\textit{n}} & \textbf{min.}\ \textbf{\textit{m}} \\
\hline
\hline
vasy\_40\_60 & 40,006 & 60,007 & 20,003 & 4 & 20,003 & 40,004 \\ \hline
vasy\_18\_73 & 18,746 & 73,043 & 39,217 & 18 & 2,326 & 9,751 \\ \hline
vasy\_157\_297 & 157,604 & 297,000 & 31,798 & 236 & 3,038 & 12,095 \\ \hline
vasy\_52\_318 & 52,268 & 318,126 & 130,752 & 18 & 66 & 333 \\ \hline
vasy\_83\_325 & 83,436 & 325,584 & 45,696 & 212 & 42,195 & 197,200 \\ \hline
vasy\_116\_368 & 116,456 & 368,569 & 263,296 & 22 & 22,398 & 87,674 \\ \hline
vasy\_720\_390 & 720,247 & 390,999 & 1 & 50 & 3,278 & 116,537 \\ \hline
vasy\_69\_520 & 69,754 & 520,633 & 1 & 136 & 69,753 & 520,632 \\ \hline
cwi\_371\_641 & 371,804 & 641,565 & 445,600 & 62 & 2,134 & 5,634 \\ \hline
vasy\_166\_651 & 166,464 & 651,168 & 91,392 & 212 & 42,195 & 197,200 \\ \hline
cwi\_214\_684 & 214,202 & 684,419 & 550,611 & 6 & 478 & 1,612 \\ \hline
cwi\_142\_925 & 142,472 & 925,429 & 862,298 & 8 & 23 & 49 \\ \hline
vasy\_386\_1171 & 386,496 & 1,171,872 & 122,976 & 74 & 71 & 108 \\ \hline
vasy\_66\_1302 & 66,929 & 1,302,664 & 117,866 & 82 & 51,128 & 1,018,692 \\ \hline
vasy\_164\_1619 & 164,865 & 1,619,204 & 109,910 & 38 & 992 & 3,456 \\ \hline
vasy\_65\_2621 & 65,537 & 2,621,480 & 0 & 72 & 65,536 & 2,621,440 \\ \hline
cwi\_566\_3984 & 566,640 & 3,984,157 & 3,666,614 & 12 & 198 & 791 \\ \hline
vasy\_1112\_5290 & 1,112,490 & 5,290,860 & 0 & 23 & 265 & 1,300 \\ \hline
cwi\_2165\_8723 & 2,165,446 & 8,723,465 & 3,830,225 & 27 & 4,256 & 20,880 \\ \hline
vasy\_6120\_11031 & 6,120,718 & 11,031,292 & 3,152,976 & 126 & 2,505 & 5,358 \\ \hline
vasy\_2581\_11442 & 2,581,374 & 11,442,382 & 2,508,518 & 224 & 704,737 & 3,972,600 \\ \hline
vasy\_574\_13561 & 574,057 & 13,561,040 & 0 & 141 & 3,577 & 16,168 \\ \hline
vasy\_4220\_13944 & 4,220,790 & 13,944,372 & 2,546,649 & 224 & 1,186,266 & 6,863,329 \\ \hline
vasy\_4338\_15666 & 4,338,672 & 15,666,588 & 3,127,116 & 224 & 704,737 & 3,972,600 \\ \hline
cwi\_2416\_17605 & 2,416,632 & 17,605,592 & 17,490,904 & 16 & 730 & 2,899 \\ \hline
vasy\_6020\_19353 & 6,020,550 & 19,353,474 & 17,526,144 & 512 & 256 & 510 \\ \hline
vasy\_11026\_24660 & 11,026,932 & 24,660,513 & 2,748,559 & 120 & 775,618 & 2,454,834 \\ \hline
lift6-final & 6,047,527 & 26,539,368 & 12,668,580 & 31 & 1,699 & 9,870 \\ \hline
vasy\_12323\_27667 & 12,323,703 & 27,667,803 & 3,153,502 & 120 & 876,944 & 2,780,022 \\ \hline
vasy\_8082\_42933 & 8,082,905 & 42,933,110 & 2,535,944 & 212 & 290 & 680 \\ \hline
cwi\_7838\_59101 & 7,838,608 & 59,101,007 & 22,842,122 & 21 & 62,031 & 470,230 \\ \hline
dining\_14 & 18,378,370 & 164,329,284 & 142,722,790 & 15 & 228,486 & 2,067,856 \\ \hline
cwi\_33949\_165318 & 33,949,609 & 165,318,222 & 74,133,306 & 32 & 12,463 & 71,466 \\ \hline
1394-fin3 & 126,713,623 & 276,426,688 & 172,900,987 & 104 & 160,258 & 538,936 \\ \hline
\end{tabular}
\label{tab:chars}
\end{table}

All experiments have been conducted on individual nodes of the DAS-5 cluster~\cite{DAS5}. Each of these nodes was running \textsc{CentOS Linux}
7.4, had an \textsc{Intel Xeon} E5-2698-v3 2.3GHz CPU, and was equipped with 256 GB RAM.
The experiments were performed using development version 201808.0.c59cfd413f of mCRL2.\footnote{\url{https://github.com/mCRL2org/mCRL2/commit/c59cfd413f}}

% !TEX root = LTS-bisimulation-report.tex

\begin{table}[tbp]
\scriptsize
\centering
\rotatebox{90}{\begin{varwidth}{\textheight}\centering
\parbox{16cm}{\caption{Running time and memory use results for GV, BO, GJKW17 and JGKW19. \og and \ob: significantly better (resp.\ worse) than all three other algorithms.\label{tab:results}
}}
\begin{tabular}{lrrrrrrrr}
\hline
\multicolumn{1}{c}{\multirow{2}{*}{\textbf{model}}} & \multicolumn{4}{c}{\textbf{time}} & \multicolumn{4}{c}{\textbf{space}} \\
\multicolumn{1}{c}{} & \textbf{GV} & \textbf{BO} & \textbf{GJKW17} & \textbf{JGKW19} & \textbf{GV} & \textbf{BO} & \textbf{GJKW17} & \textbf{JGKW19} \\
\hline
\hline
vasy\_40\_60 & 24.\phantom{0}\phantom{0} s \ns & 138.\phantom{0}\phantom{0} s \ob & .1\phantom{0} s \ns & .05 s \og &          65.5\phantom{G}MB \ns &          60.6\phantom{G}MB \ns &          70\phantom{.}\phantom{0}\phantom{G}MB \ns &          60\phantom{.}\phantom{0}\phantom{G}MB \ns \\ \hline
vasy\_18\_73 & .21 s \ns & .37 s \ob & .11 s \ns & .07 s \og &          55.6\phantom{G}MB \ns &          56.7\phantom{G}MB \ns &          50\phantom{.}\phantom{0}\phantom{G}MB \ns &          50\phantom{.}\phantom{0}\phantom{G}MB \ns \\ \hline
vasy\_157\_297 & 1.7\phantom{0} s \ns & 2.\phantom{0}\phantom{0} s \ns & .4\phantom{0} s \ns & .2\phantom{0} s \og &          97.3\phantom{G}MB \ns &          94.3\phantom{G}MB \ns &        127.2\phantom{G}MB \ob &          90\phantom{.}\phantom{0}\phantom{G}MB \ns \\ \hline
vasy\_52\_318 & .31 s \ns & .9\phantom{0} s \ob & .2\phantom{0} s \ns & .2\phantom{0} s \ns &          73.4\phantom{G}MB \ns &          90.4\phantom{G}MB \ns &          90.6\phantom{G}MB \ob &          73.4\phantom{G}MB \ns \\ \hline
vasy\_83\_325 & 2.6\phantom{0} s \ob & 1.0\phantom{0} s \ns & .9\phantom{0} s \ns & .3\phantom{0} s \og &        116.2\phantom{G}MB \ns &       .11\phantom{.}\phantom{0}\phantom{0}\phantom{M}GB \ns &        230.5\phantom{G}MB \ob &       .10\phantom{.}\phantom{0}\phantom{0}\phantom{M}GB \ns \\ \hline
vasy\_116\_368 & .9\phantom{0} s \ns & 5.\phantom{0}\phantom{0} s \ob & .6\phantom{0} s \ns & .4\phantom{0} s \og &          92.8\phantom{G}MB \ns &        110.6\phantom{G}MB \ns &       .13\phantom{.}\phantom{0}\phantom{0}\phantom{M}GB \ob &          90\phantom{.}\phantom{0}\phantom{G}MB \ns \\ \hline
vasy\_720\_390 & .4\phantom{0} s \ns & .9\phantom{0} s \ns & .6\phantom{0} s \ns & .4\phantom{0} s \ns &        105.2\phantom{G}MB \ns &        103.2\phantom{G}MB \ns &       .19\phantom{.}\phantom{0}\phantom{0}\phantom{M}GB \ob &          95.9\phantom{G}MB \og \\ \hline
vasy\_69\_520 & 1.5\phantom{0} s \ns & 4.\phantom{0}\phantom{0} s \ob & 2.4\phantom{0} s \ns & .8\phantom{0} s \og &       .15\phantom{.}\phantom{0}\phantom{0}\phantom{M}GB \ns &       .15\phantom{.}\phantom{0}\phantom{0}\phantom{M}GB \ns &        358.1\phantom{G}MB \ob &        162.0\phantom{G}MB \ns \\ \hline
cwi\_371\_641 & 7.4\phantom{0} s \ob & 5.9\phantom{0} s \ns & 1.\phantom{0}\phantom{0} s \ns & .7\phantom{0} s \ns &       .17\phantom{.}\phantom{0}\phantom{0}\phantom{M}GB \ns &        229.0\phantom{G}MB \ob &        185.4\phantom{G}MB \ns &       .14\phantom{.}\phantom{0}\phantom{0}\phantom{M}GB \og \\ \hline
vasy\_166\_651 & 4.9\phantom{0} s \ob & 1.9\phantom{0} s \ns & 2.\phantom{0}\phantom{0} s \ns & .7\phantom{0} s \og &        157.5\phantom{G}MB \ns &        141.8\phantom{G}MB \ns &        342.9\phantom{G}MB \ob &        139.5\phantom{G}MB \og \\ \hline
cwi\_214\_684 & 1.4\phantom{0} s \ns & 9.\phantom{0}\phantom{0} s \ob & .5\phantom{0} s \ns & .5\phantom{0} s \ns &        140.7\phantom{G}MB \ns &        162.1\phantom{G}MB \ob &        152.0\phantom{G}MB \ns &       .13\phantom{.}\phantom{0}\phantom{0}\phantom{M}GB \ns \\ \hline
cwi\_142\_925 & 1.4\phantom{0} s \ob & .8\phantom{0} s \ns & 1.0\phantom{0} s \ns & .9\phantom{0} s \ns &        152.5\phantom{G}MB \ns &        117.9\phantom{G}MB \og &        156.6\phantom{G}MB \ob &        152.5\phantom{G}MB \ns \\ \hline
vasy\_386\_1171 & 1.4\phantom{0} s \ns & 2.\phantom{0}\phantom{0} s \ob & 1.3\phantom{0} s \ns & .9\phantom{0} s \og &        229.2\phantom{G}MB \ns &        210.1\phantom{G}MB \og &        273.4\phantom{G}MB \ob &        228.7\phantom{G}MB \ns \\ \hline
vasy\_66\_1302 & 3.\phantom{0}\phantom{0} s \ns & 4.7\phantom{0} s \ns & 5.\phantom{0}\phantom{0} s \ns & 2.2\phantom{0} s \og &       .23\phantom{.}\phantom{0}\phantom{0}\phantom{M}GB \og &        283.1\phantom{G}MB \ns &        618.1\phantom{G}MB \ob &        268.0\phantom{G}MB \ns \\ \hline
vasy\_164\_1619 & 2.0\phantom{0} s \ns & 5.\phantom{0}\phantom{0} s \ob & 3.\phantom{0}\phantom{0} s \ns & - \phantom{0}\phantom{0} \ns &       .25\phantom{.}\phantom{0}\phantom{0}\phantom{M}GB \ns &        235.4\phantom{G}MB \ns &        262.4\phantom{G}MB \ns &        245.0\phantom{G}MB \ns \\ \hline
vasy\_65\_2621 & 90\phantom{.}\phantom{0}\phantom{0} s \ob & 11.\phantom{0}\phantom{0} s \ns & 20\phantom{.}\phantom{0}\phantom{0} s \ns & 4.7\phantom{0} s \og &       .5\phantom{.}\phantom{0}\phantom{0}\phantom{0}\phantom{M}GB \ns &        534.7\phantom{G}MB \ns &     1.8\phantom{.}\phantom{0}\phantom{0}\phantom{0}\phantom{M}GB \ob &       .5\phantom{.}\phantom{0}\phantom{0}\phantom{0}\phantom{M}GB \ns \\ \hline
cwi\_566\_3984 & 8.\phantom{0}\phantom{0} s \ns & 7.\phantom{0}\phantom{0} s \ns & 8.\phantom{0}\phantom{0} s \ns & 6.\phantom{0}\phantom{0} s \ns &       .5\phantom{.}\phantom{0}\phantom{0}\phantom{0}\phantom{M}GB \ns &        351.5\phantom{G}MB \og &        514.0\phantom{G}MB \ob &       .5\phantom{.}\phantom{0}\phantom{0}\phantom{0}\phantom{M}GB \ns \\ \hline
vasy\_1112\_5290 & 10.\phantom{0}\phantom{0} s \ns & 17.\phantom{0}\phantom{0} s \ob & 10\phantom{.}\phantom{0}\phantom{0} s \ns & 10\phantom{.}\phantom{0}\phantom{0} s \ns &       .8\phantom{.}\phantom{0}\phantom{0}\phantom{0}\phantom{M}GB \ns &        720.9\phantom{G}MB \ns &        931.5\phantom{G}MB \ns &       .7\phantom{.}\phantom{0}\phantom{0}\phantom{0}\phantom{M}GB \ns \\ \hline
cwi\_2165\_8723 & .4 min\phantom{0} \ns & 3.\phantom{0} min\phantom{0} \ob & - \phantom{0}\phantom{0} & .3 min\phantom{0} \ns &     1.3\phantom{.}\phantom{0}\phantom{0}\phantom{0}\phantom{M}GB \ns &     1.8726\phantom{.}\phantom{M}GB \ns &     2.1321\phantom{.}\phantom{M}GB \ob &     1.2\phantom{.}\phantom{0}\phantom{0}\phantom{0}\phantom{M}GB \ns \\ \hline
vasy\_6120\_11031 & 2.\phantom{0} min\phantom{0} \ob & 1.7 min\phantom{0} \ns & - \phantom{0}\phantom{0} & .4 min\phantom{0} \ns &     1.8\phantom{.}\phantom{0}\phantom{0}\phantom{0}\phantom{M}GB \ns &     1.7379\phantom{.}\phantom{M}GB \ns &     3.6596\phantom{.}\phantom{M}GB \ob &     1.5960\phantom{.}\phantom{M}GB \og \\ \hline
vasy\_2581\_11442 & 10\phantom{.}\phantom{0} min\phantom{0} \ob & 3.\phantom{0} min\phantom{0} \ns & - \phantom{0}\phantom{0} & - \phantom{0}\phantom{0} \ns &     1.5999\phantom{.}\phantom{M}GB \ns &     1.7434\phantom{.}\phantom{M}GB \ns &     4.1299\phantom{.}\phantom{M}GB \ob &     1.4\phantom{.}\phantom{0}\phantom{0}\phantom{0}\phantom{M}GB \og \\ \hline
vasy\_574\_13561 & 50\phantom{.}\phantom{0}\phantom{0} s \ns & 56.\phantom{0}\phantom{0} s \ns & - \phantom{0}\phantom{0} & - \phantom{0}\phantom{0} \ns &     1.8835\phantom{.}\phantom{M}GB \ob &     1.5217\phantom{.}\phantom{M}GB \ns &     1.5\phantom{.}\phantom{0}\phantom{0}\phantom{0}\phantom{M}GB \ns &     1.5\phantom{.}\phantom{0}\phantom{0}\phantom{0}\phantom{M}GB \ns \\ \hline
vasy\_4220\_13944 & 30\phantom{.}\phantom{0} min\phantom{0} \ob & 5.\phantom{0} min\phantom{0} \ns & - \phantom{0}\phantom{0} & .6 min\phantom{0} \ns &     2.0965\phantom{.}\phantom{M}GB \ns &     2.3188\phantom{.}\phantom{M}GB \ns &     5.8661\phantom{.}\phantom{M}GB \ob &     2.0\phantom{.}\phantom{0}\phantom{0}\phantom{0}\phantom{M}GB \og \\ \hline
vasy\_4338\_15666 & 34.\phantom{0} min\phantom{0} \ob & 3.\phantom{0} min\phantom{0} \ns & 2.\phantom{0} min\phantom{0} \ns & .8 min\phantom{0} \og &     2.4043\phantom{.}\phantom{M}GB \ns &     2.3559\phantom{.}\phantom{M}GB \ns &     5.9888\phantom{.}\phantom{M}GB \ob &     1.8535\phantom{.}\phantom{M}GB \og \\ \hline
cwi\_2416\_17605 & 30\phantom{.}\phantom{0}\phantom{0} s \ns & 19.\phantom{0}\phantom{0} s \ns & 20\phantom{.}\phantom{0}\phantom{0} s \ns & 20\phantom{.}\phantom{0}\phantom{0} s \ns &     1.6\phantom{.}\phantom{0}\phantom{0}\phantom{0}\phantom{M}GB \ns &     1.5157\phantom{.}\phantom{M}GB \og &     1.6638\phantom{.}\phantom{M}GB \ns &     1.6748\phantom{.}\phantom{M}GB \ob \\ \hline
vasy\_6020\_19353 & 25.\phantom{0}\phantom{0} s \ns & 40\phantom{.}\phantom{0}\phantom{0} s \ob & 6.\phantom{0}\phantom{0} s \ns & 5.\phantom{0}\phantom{0} s \ns &        870.\phantom{0}\phantom{G}MB \ns &     2.3442\phantom{.}\phantom{M}GB \ob &        870.\phantom{0}\phantom{G}MB \ns &        870.\phantom{0}\phantom{G}MB \ns \\ \hline
vasy\_11026\_24660 & 50\phantom{.}\phantom{0} min\phantom{0} \ob & 20\phantom{.}\phantom{0} min\phantom{0} \ns & 3.\phantom{0} min\phantom{0} \ns & 1.\phantom{0} min\phantom{0} \og &     3.6475\phantom{.}\phantom{M}GB \ns &     4.0513\phantom{.}\phantom{M}GB \ns &     9.6425\phantom{.}\phantom{M}GB \ob &     3.4412\phantom{.}\phantom{M}GB \og \\ \hline
lift6-final & 1.0 min\phantom{0} \ns & 3.\phantom{0} min\phantom{0} \ob & - \phantom{0}\phantom{0} & .9 min\phantom{0} \ns &     3.3846\phantom{.}\phantom{M}GB \ns &     8.1984\phantom{.}\phantom{M}GB \ob &     6.2971\phantom{.}\phantom{M}GB \ns &     3.2125\phantom{.}\phantom{M}GB \og \\ \hline
vasy\_12323\_27667 & 40\phantom{.}\phantom{0} min\phantom{0} \ob & 10\phantom{.}\phantom{0} min\phantom{0} \ns & - \phantom{0}\phantom{0} & 1.\phantom{0} min\phantom{0} \ns &     4.0091\phantom{.}\phantom{M}GB \ns &     4.5371\phantom{.}\phantom{M}GB \ns &   10.6743\phantom{.}\phantom{M}GB \ob &     3.7298\phantom{.}\phantom{M}GB \og \\ \hline
vasy\_8082\_42933 & 2.\phantom{0} min\phantom{0} \ns & 5.\phantom{0} min\phantom{0} \ob & - \phantom{0}\phantom{0} & 2.\phantom{0} min\phantom{0} \ns &     6.1231\phantom{.}\phantom{M}GB \ns &     5.4358\phantom{.}\phantom{M}GB \og &     6.6896\phantom{.}\phantom{M}GB \ob &     5.4600\phantom{.}\phantom{M}GB \ns \\ \hline
cwi\_7838\_59101 & 5.\phantom{0} min\phantom{0} \ns & 100\phantom{.}\phantom{0} min\phantom{0} \ob & 6.\phantom{0} min\phantom{0} \ns & 3.\phantom{0} min\phantom{0} \ns &     6.5283\phantom{.}\phantom{M}GB \og &     8.3266\phantom{.}\phantom{M}GB \ns &   13.7899\phantom{.}\phantom{M}GB \ob &     6.7646\phantom{.}\phantom{M}GB \ns \\ \hline
dining\_14 & 17.\phantom{0} min\phantom{0} \ns & 20\phantom{.}\phantom{0} min\phantom{0} \ns & 20\phantom{.}\phantom{0} min\phantom{0} \ns & 10\phantom{.}\phantom{0} min\phantom{0} \ns &   20.4826\phantom{.}\phantom{M}GB \og &   21.7156\phantom{.}\phantom{M}GB \ns &   23.7810\phantom{.}\phantom{M}GB \ob &   20.9756\phantom{.}\phantom{M}GB \ns \\ \hline
cwi\_33949\_165318 & 11.\phantom{0} min\phantom{0} \ns & 80\phantom{.}\phantom{0} min\phantom{0} \ob & 20\phantom{.}\phantom{0} min\phantom{0} \ns & 8.\phantom{0} min\phantom{0} \ns &   22.7204\phantom{.}\phantom{M}GB \ns &   33.0351\phantom{.}\phantom{M}GB \ns &   37.8606\phantom{.}\phantom{M}GB \ob &   21.0611\phantom{.}\phantom{M}GB \og \\ \hline
1394-fin3 & 25. h\phantom{000000} \ob & 3. h\phantom{000000} \ns & .5 h\phantom{000000} \ns & .3 h\phantom{000000} \og &   37.4893\phantom{.}\phantom{M}GB \ns &   71.8698\phantom{.}\phantom{M}GB \ob &   53.2166\phantom{.}\phantom{M}GB \ns &   31.5132\phantom{.}\phantom{M}GB \og \\ \hline\hline
\textbf{Total} & \textbf{28. h\phantom{00000}} \ob & \textbf{8. h\phantom{00000}} \ns & \textbf{1.4 h\phantom{00000}} \ns & \textbf{.8 h\phantom{00000}} \og & \textbf{121.8\phantom{.}\phantom{.}\phantom{0}\phantom{M}GB} \ns & \textbf{176.33\phantom{.}\phantom{.}\phantom{M}GB} \ns & \textbf{194.2\phantom{.}\phantom{.}\phantom{.}\phantom{.}\phantom{M}GB} \ob & \textbf{112.0\phantom{.}\phantom{.}\phantom{.}\phantom{.}\phantom{M}GB} \og \\ \hline

\end{tabular}
\end{varwidth}}
\end{table}

%\input{benchmarks-transposed.tex}
%\jk{Can we please make sure that every row is either in min. or s., and GB or MB?}

Table~\ref{tab:results} presents the obtained results. On each benchmark, we have applied each of the four algorithms ten times, and report the mean
runtime (in seconds or minutes) and memory use (in MB or GB) of those ten runs. In the table, only the significant digits are listed, which are identified by first estimating the standard deviation,
given the ten results. Given results $x_0, \ldots, x_9$, the standard deviation \textit{std} is estimated to be $\textit{std} = \frac{(\Sigma_{0 \leq i \leq 9} x_i^2) - (\Sigma_{0 \leq i \leq 9} x_i)^2/10}{8.5}$~\cite{B69}.
For all presented data the estimated standard deviation is less than $20\%$ of the mean. Cases in which this is not true are indicated by `-' in Table~\ref{tab:results}.

Concerning the significant digits, a decimal dot indicates that the unit digit is significant.
If this dot is missing, there is one insignificant zero.
% \dnj{There are two insignificant zeroes in 1500 min and in the total runtimes.
% 	These should be reformatted to display hours,
% 	or we should show in a different way that there are two insignificant zeroes.}
The estimated standard deviation is used to identify the significant digits.
For example, `3.6~s' has a standard deviation in $[0.01,0.1)$~s, `540.~MB' has a standard deviation in $[0.1,1)$~MB, and `100~s' has a standard deviation in $[1,10)$~s.

The \og-symbol after a table entry indicates that the measurement is significantly better than the corresponding measurements for the other three algorithms, and the \ob-symbol indicates that the measurement is significantly worse.
Here, the results are considered significant if, given a hundred tables such as Table~\ref{tab:results}, one table of running time (resp. memory) is expected to contain spuriously significant results.

Concerning the runtimes, clearly, GV and BO perform significantly worse than the other two algorithms,
and JGKW19 in many cases performs significantly better than the others.
In particular, it should be noted that, although GJKW17 has the same time complexity, JGKW19 often still outperforms GJKW17.
Concerning memory use, in the majority of cases GJKW17 uses more memory than the others,
while sometimes BO is the most memory-hungry.
JGKW19 is much more competitive, in many cases even outperforming every other algorithm.

Overall, the results show that when applied to practical cases, JGKW19 is generally the most efficient algorithm time-wise,
and when other algorithms have similar runtimes, it is almost always the most efficient memory-wise.
This combination makes JGKW19, the algorithm presented in this paper, currently the best option for branching bisimulation minimisation of LTSs.

\bibliographystyle{plain}
\bibliography{LTS-bisimulation}

\begin{thebibliography}{10}

\bibitem{DAS5}
H.~Bal, D.~Epema, C.~{de Laat}, R.~{van Nieuwpoort}, J.~Romein, F.~Seinstra,
  C.~Snoek, and H.~Wijshoff.
\newblock A medium-scale distributed system for computer science research:
  {I}nfrastructure for the long term.
\newblock {\em IEEE Computer}, 49(5):54--63, 2016.

\bibitem{BLW2018}
Maarten Bartholomeus, Bas Luttik, and Tim Willemse.
\newblock Modelling and analysing {{ERTMS Hybrid Level}} 3 with the {{mCRL2}}
  toolset.
\newblock In {\em Formal {{Methods}} for {{Industrial Critical Systems}}},
  pages 98--114. {Springer, Cham}, September 2018.

\bibitem{BO2003}
S.C.C. Blom and S.~Orzan.
\newblock Distributed branching bisimulation reduction of state spaces.
\newblock {\em Electron.\ Notes Theor.\ Comput.\ Sci.}, 80(1):99--113, 2003.

\bibitem{B69}
R.M. Brugger.
\newblock A note on unbiased estimation of the standard deviation.
\newblock {\em The American Statistician}, 23(4):32, 1969.

\bibitem{DBLP:conf/tacas/BunteGKLNVWWW19}
Olav Bunte, Jan~Friso Groote, Jeroen J.~A. Keiren, Maurice Laveaux, Thomas
  Neele, Erik~P. de~Vink, Wieger Wesselink, Anton Wijs, and Tim A.~C. Willemse.
\newblock The {mCRL2} toolset for analysing concurrent systems - improvements
  in expressivity and usability.
\newblock In Tom{\'{a}}s Vojnar and Lijun Zhang, editors, {\em Tools and
  Algorithms for the Construction and Analysis of Systems - 25th International
  Conference, {TACAS} 2019, Held as Part of the European Joint Conferences on
  Theory and Practice of Software, {ETAPS} 2019, Prague, Czech Republic, April
  6-11, 2019, Proceedings, Part {II}}, volume 11428 of {\em Lecture Notes in
  Computer Science}, pages 21--39. Springer, 2019.

\bibitem{dNV1995}
Rocco De~Nicola and Frits Vaandrager.
\newblock Three {{Logics}} for {{Branching Bisimulation}}.
\newblock {\em J. ACM}, 42(2):458--487, March 1995.

\bibitem{Gla1993}
Rob. J.~van Glabbeek.
\newblock The linear time \textemdash{} branching time spectrum {{II}}.
\newblock In {\em {CONCUR}'93}, pages 66--81. {Springer, Berlin, Heidelberg},
  August 1993.

\bibitem{DBLP:journals/jacm/GlabbeekW96}
Rob J.~van Glabbeek and W.~P. Weijland.
\newblock Branching time and abstraction in bisimulation semantics.
\newblock {\em J. {ACM}}, 43(3):555--600, 1996.

\bibitem{GJKW2017}
Jan~Friso Groote, David~N. Jansen, Jeroen J.~A. Keiren, and Anton~J. Wijs.
\newblock An $\bigo{m \log n}$ algorithm for computing stuttering equivalence
  and branching bisimulation.
\newblock {\em ACM Trans.\@ Comput.\@ Logic}, 18(2):Article 13, 2017.

\bibitem{GV90}
Jan~Friso Groote and Frits Vaandrager.
\newblock An efficient algorithm for branching bisimulation and stuttering
  equi\-valence.
\newblock In M.~S. Paterson, editor, {\em Automata, languages and programming
  [ICALP]}, volume 443 of {\em LNCS}, pages 626--638. Springer, Berlin, 1990.

\bibitem{GrooteW16tacas}
Jan~Friso Groote and Anton Wijs.
\newblock An {$O(m \log n)$} algorithm for stuttering equivalence and branching
  bisimulation.
\newblock In Marsha Chechik and Jean-Fran{\c c}ois Raskin, editors, {\em Tools
  and algorithms for the construction and analysis of systems: TACAS}, volume
  9636 of {\em LNCS}, pages 607--624. Springer, Berlin, 2016.

\bibitem{HHK1995}
M.~R. Henzinger, T.~A. Henzinger, and P.~W. Kopke.
\newblock Computing simulations on finite and infinite graphs.
\newblock In {\em Proceedings of {{IEEE}} 36th {{Annual Foundations}} of
  {{Computer Science}}}, pages 453--462, October 1995.

\bibitem{Hop1971}
John Hopcroft.
\newblock An $n \log n$ algorithm for minimizing states in a finite automaton.
\newblock In Zvi Kohavi and Azaria Paz, editors, {\em Theory of {{Machines}}
  and {{Computations}}}, pages 189--196. {Academic Press}, January 1971.

\bibitem{KanellakisSmolka1983}
Paris~C. Kanellakis and Scott~A. Smolka.
\newblock {CCS} expressions, finite state processes, and three problems of
  equivalence.
\newblock In {\em PODC '83: Proceedings of the second annual ACM symposium on
  Principles of distributed computing}, pages 228--240, New York, 1983. ACM.

\bibitem{KannellakisSmolka1990}
Paris~C. Kanellakis and Scott~A. Smolka.
\newblock {CCS} expressions, finite state processes, and three problems of
  equivalence.
\newblock {\em Information and Computation}, 86(1):43--68, 1990.

\bibitem{DBLP:books/sp/Milner80}
Robin Milner.
\newblock {\em A Calculus of Communicating Systems}, volume~92 of {\em Lecture
  Notes in Computer Science}.
\newblock Springer, 1980.

\bibitem{PT87}
Robert Paige and Robert~E. Tarjan.
\newblock Three partition refinement algorithms.
\newblock {\em SIAM J. Comput.\@}, 16(6):973--989, 1987.

\bibitem{RSW2014}
Michel~A. Reniers, Rob Schoren, and Tim A.~C. Willemse.
\newblock Results on {{Embeddings Between State}}-{{Based}} and
  {{Event}}-{{Based Systems}}.
\newblock {\em The Computer Journal}, 57(1):73--92, January 2014.

\bibitem{Valmari2009}
Antti Valmari.
\newblock Bisimilarity minimization in {$O(m \log n)$} time.
\newblock In Giuliana Franceschinis and Karsten Wolf, editors, {\em
  Applications and theory of Petri nets: {PETRI NETS}}, volume 5606 of {\em
  LNCS}, pages 123--142. Springer, Berlin, 2009.

\bibitem{ValmariL08}
Antti Valmari and Petri Lehtinen.
\newblock Efficient minimization of {DFA}s with partial transition functions.
\newblock In Susanne Albers and Pascal Weil, editors, {\em 25th international
  symposium on theoretical aspects of computer science: STACS}, volume~1 of
  {\em LIPIcs}, pages 645--656. Schloss Dagstuhl, Leibniz-Zentrum fuer
  Informatik, Dagstuhl, Germany, 2008.

\end{thebibliography}

\appendix

\section{Branching bisimilarity via translation to Kripke structures is $\Theta(m \log m)$}
\label{app:complexity_via_stuttering}

In the original paper \cite{GJKW2017},
it is stated that the complexity of determining branching bisimilarity is $\bigo{m (\log \size{\mathit{Act}} + \log n)}$
when using the following translation from LTS to Kripke structure,
and subsequently computing divergence-blind stuttering bisimilarity.

\begin{definition}[LTS Embedding~\cite{dNV1995}]
Let $A = (S, \mathit{Act}, \mathord{\pijl{}})$ be an LTS.
We construct the embedding of $A$ to be the Kripke structure $K_A = (S_A, \mathit{AP}, \to, L)$ as follows:
\begin{enumerate}
	\item $S_A = S \cup \{ \langle a,t \rangle \mid \exists s \in S . s \pijl{a} t \}$
	\item $\mathit{AP} = \mathit{Act} \cup \{ \bot \}$
	\item $\to$ is the least relation satisfying (for $s,t \in S$, $a \in \mathit{Act} \setminus \{ \tau \}$)
	\[\frac{s\pijl{a}{}t}{s\pijl{}{}\langle a,t\rangle} \qquad
	\frac{s\pijl{a}t}{\langle a,t\rangle \to t} \qquad
	\frac{s\pijl{\tau}t}{s \to t}\]
	\item
	$L(s)=\{\bot\}$ for $s\in S$ and $L(\langle a,t\rangle)=\{a\}$.
	\end{enumerate}
\end{definition}

The observations made were as follows:
an LTS with $n$ states and $m$ transitions, in the worst case,
has an embedding of $n + m$ states and $2m$ transitions,
so the algorithm requires $\bigo{m \log (n+m)}$,
or, since $m \leq \size{\mathit{Act}}n^2$,
$\bigo{m \log (n+\size{\mathit{Act}}n^2)} = \bigo{m (\log \size{\mathit{Act}} + \log n)}$ time.
The example LTS in Figure~\ref{fig:LTS} (left) illustrates that this bound is tight, and thus cannot be improved to $\bigo{m \log n}$.

To run the algorithm for branching bisimulation minimisation according to \cite{GJKW2017},
we first translate it to the Kripke structure in the same figure (right),
where for each pair $\langle \text{action}, \text{goal state} \rangle$ there is an extra state.
The initial partition is given in Figure~\ref{fig:KS-initial}:
we have a block $B$ for the original states and separate blocks per action label for the extra states.
We also have a single constellation $\mathbf{C}$ for all states, indicating that no splitters have been used until now.
Note that every bottom state in $B$ has a transition, so there is no need to separate states without transitions from those with
(a preprocessing step similar to Lines~\ref{alg-line-initialize-partitions-begin}--\ref{alg-line-initialize-Pi_s-end} in Algorithm~\ref{algo:main-algorithm-abstract} of this technical report).
Every state is a bottom state, except $s_1$, which has an inert transition to $s_0$.

We now run the algorithm of \cite{GJKW2017} and assume that it first handles the blocks for actions $a_1, \ldots, a_k$ as splitters.
Only $B$ can be split, as the other blocks only contain one state,
but every state in $B$ has a transition to the blocks $\{ \langle s_0, a_i \rangle \}$.
After $k$ splits of constellations, we get the situation on the left side of Figure~\ref{fig:KS-after-splits}.
Note that there are $k+1$ constellations now.

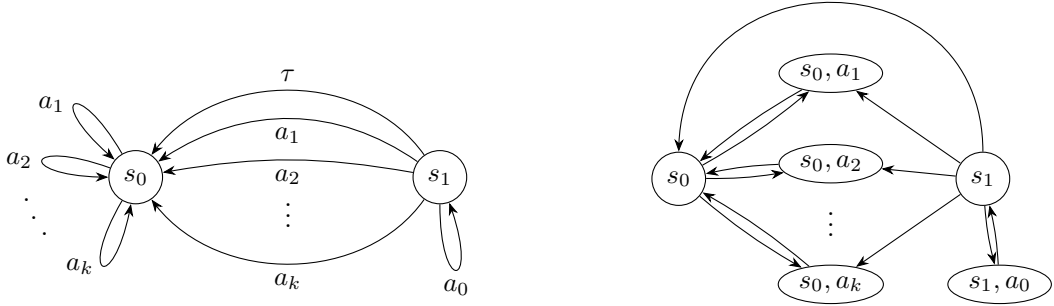
\begin{figure}[tbp]
	\begin{center}
		\begin{tikzpicture}[>=Stealth,scale=4]
			\node[circle,draw] (s0) at(0,0) {$s_0$};
			\node[circle,draw] (s1) at(1,0) {$s_1$};

			\draw[->] (s1) edge[bend right=55] node[above]{$\tau$}(s0)
			               edge[bend right=35] node[below]{$a_1$} (s0)
			               edge[bend right=10] node[below,align=center]{$a_2$ \\ $\vdots$} (s0)
			               edge[bend left= 55] node[below]{$a_k$} (s0);
			\draw[->] (s0) edge[out=120,in=140,looseness=25] node[left]{$a_1$} (s0)
			               edge[out=160,in=180,looseness=25] node(a2loop)[left]{$a_2$} (s0)
			               edge[out=240,in=260,looseness=25] node(akloop)[left]{$a_k$} (s0);

			\draw[->] (s1) edge[out=270,in=290,looseness=25] node[below]{$a_0$} (s1);

			\path (a2loop) edge[draw=none,bend right=20] node[near start]{.} node{.} node[near end]{.} (akloop);
		\end{tikzpicture}
		\hspace*{2cm}
		\begin{tikzpicture}[>=Stealth,scale=4]
			\node[circle,draw] (s0) at(0,0) {$s_0$};
			\node[circle,draw] (s1) at(1,0) {$s_1$};

			\node[ellipse,draw,inner sep=2pt] (s0a1) at(0.5  , 0.35) {$s_0,a_1$};
			\node[ellipse,draw,inner sep=2pt] (s0a2) at(0.5  , 0.05) {$s_0,a_2$};
			\node[ellipse,draw,inner sep=2pt] (s0ak) at(0.5  ,-0.35) {$s_0,a_k$};
			\node[ellipse,draw,inner sep=2pt] (s1a0) at(1.055,-0.35) {$s_1,a_0$};

			\draw[->] (s1) .. controls (1,0.75) and (0,0.75) .. (s0);
			\draw[->] (s1) edge                (s0a1)
			               edge                (s0a2)
			               edge                (s0ak)
			               edge[bend right=5]  (s1a0)
			          (s0) edge[bend right=5]  (s0a1)
			               edge[bend right=5]  (s0a2)
			               edge[bend right=5]  (s0ak)
			        (s0a1) edge[bend right=5]  (s0)
			        (s0a2) edge[bend right=5]  (s0)
			        (s0ak) edge[bend right=5]  (s0)
			        (s1a0) edge[bend right=5]  (s1);
			\path (s0a2) edge[draw=none] node[yshift=1mm]{$\vdots$} (s0ak);
		\end{tikzpicture}
	\end{center}
	\caption{A labelled transition system and its translation to a Kripke structure\label{fig:LTS}}
\end{figure}

\begin{figure}[tbp]
	\begin{center}
		\begin{tikzpicture}[>=Stealth,scale=4]
			% constellations
			\draw[very thin,fill=black!2]
				(-0.2,-0.475) rectangle (1.31,0.625);

			% blocks
			\draw[rounded corners=4mm,thick,fill=black!10]
				(0.27 , 0.25) rectangle (0.73 , 0.45)
				(0.27 ,-0.05) rectangle (0.73 , 0.15)
				(0.27 ,-0.45) rectangle (0.73 ,-0.25)
				(0.825,-0.45) rectangle (1.285,-0.25)
				(-0.15,0.6) |- (0.15,-0.15) |- (0.85,0.55) |- (1.15,-0.15) -- (1.15,0.6) -- cycle;

			\node[anchor=south west]               at (-0.2 ,-0.475) {$\mathbf{C}$};
			\node[anchor=north west,inner sep=5pt] at (-0.15, 0.55 ) {$B$};

			\node[circle,draw] (s0) at(0,0) {$s_0$};
			\node[circle,draw] (s1) at(1,0) {$s_1$};

			\node[ellipse,draw,inner sep=2pt] (s0a1) at(0.5  , 0.35) {$s_0,a_1$};
			\node[ellipse,draw,inner sep=2pt] (s0a2) at(0.5  , 0.05) {$s_0,a_2$};
			\node[ellipse,draw,inner sep=2pt] (s0ak) at(0.5  ,-0.35) {$s_0,a_k$};
			\node[ellipse,draw,inner sep=2pt] (s1a0) at(1.055,-0.35) {$s_1,a_0$};

			\draw[->] (s1) .. controls (1,0.75) and (0,0.75) .. (s0);
			\draw[->] (s1) edge                (s0a1)
			               edge                (s0a2)
			               edge                (s0ak)
			               edge[bend right=5]  (s1a0)
			          (s0) edge[bend right=5]  (s0a1)
			               edge[bend right=5]  (s0a2)
			               edge[bend right=5]  (s0ak)
			        (s0a1) edge[bend right=5]  (s0)
			        (s0a2) edge[bend right=5]  (s0)
			        (s0ak) edge[bend right=5]  (s0)
			        (s1a0) edge[bend right=5]  (s1);
			\path (s0a2) edge[draw=none] node[yshift=1mm]{$\vdots$} (s0ak);
		\end{tikzpicture}
	\end{center}
	\caption{The initial partition of the Kripke structure in Figure~\ref{fig:LTS}\label{fig:KS-initial} for the algorithm of \cite{GJKW2017}}
\end{figure}
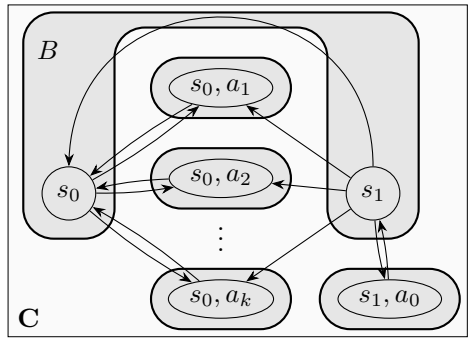

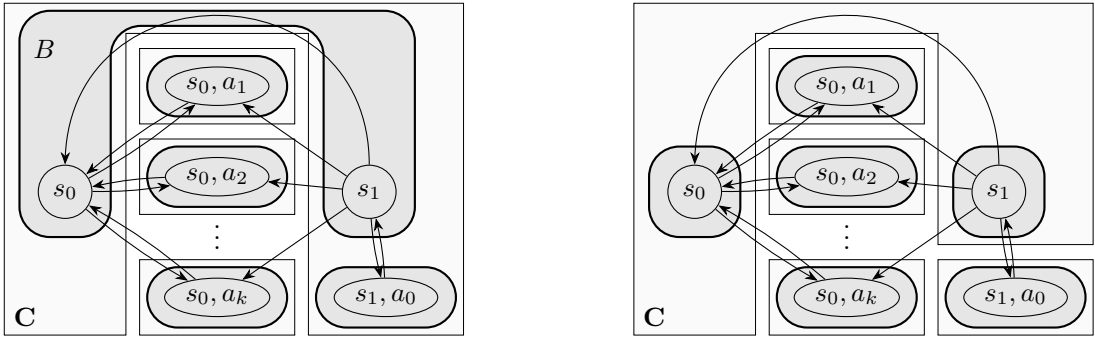
\begin{figure}[tbp]
	\begin{center}
		\begin{tikzpicture}[>=Stealth,scale=4]
			% constellations
			\draw[very thin,fill=black!2]
				(0.245, 0.225) rectangle (0.755, 0.475)
				(0.245,-0.075) rectangle (0.755, 0.175)
				(0.245,-0.475) rectangle (0.755,-0.225)
				(-0.2,-0.475) -| (0.2,0.525) -| (0.8,-0.475) -| (1.31,0.625) -| cycle;

			% blocks
			\draw[rounded corners=4mm,thick,fill=black!10]
				(0.27 , 0.25) rectangle (0.73 , 0.45)
				(0.27 ,-0.05) rectangle (0.73 , 0.15)
				(0.27 ,-0.45) rectangle (0.73 ,-0.25)
				(0.825,-0.45) rectangle (1.285,-0.25)
				(-0.15,0.6) |- (0.15,-0.15) |- (0.85,0.55) |- (1.15,-0.15) -- (1.15,0.6) -- cycle;

			\node[anchor=south west]               at (-0.2 ,-0.475) {$\mathbf{C}$};
			\node[anchor=north west,inner sep=5pt] at (-0.15, 0.55 ) {$B$};

			\node[circle,draw] (s0) at(0,0) {$s_0$};
			\node[circle,draw] (s1) at(1,0) {$s_1$};

			\node[ellipse,draw,inner sep=2pt] (s0a1) at(0.5  , 0.35) {$s_0,a_1$};
			\node[ellipse,draw,inner sep=2pt] (s0a2) at(0.5  , 0.05) {$s_0,a_2$};
			\node[ellipse,draw,inner sep=2pt] (s0ak) at(0.5  ,-0.35) {$s_0,a_k$};
			\node[ellipse,draw,inner sep=2pt] (s1a0) at(1.055,-0.35) {$s_1,a_0$};

			\draw[->] (s1) .. controls (1,0.75) and (0,0.75) .. (s0);
			\draw[->] (s1) edge                (s0a1)
			               edge                (s0a2)
			               edge                (s0ak)
			               edge[bend right=5]  (s1a0)
			          (s0) edge[bend right=5]  (s0a1)
			               edge[bend right=5]  (s0a2)
			               edge[bend right=5]  (s0ak)
			        (s0a1) edge[bend right=5]  (s0)
			        (s0a2) edge[bend right=5]  (s0)
			        (s0ak) edge[bend right=5]  (s0)
			        (s1a0) edge[bend right=5]  (s1);
			\path (s0a2) edge[draw=none] node[yshift=1mm]{$\vdots$} (s0ak);
		\end{tikzpicture}
		\hspace*{2cm}
		\begin{tikzpicture}[>=Stealth,scale=4]
			% constellations
			\draw[very thin,fill=black!2]
				(0.245, 0.225) rectangle (0.755, 0.475)
				(0.245,-0.075) rectangle (0.755, 0.175)
				(0.245,-0.475) rectangle (0.755,-0.225)
				(0.8  ,-0.475) rectangle (1.31 ,-0.225)
				(-0.2,-0.475) -| (0.2,0.525) -| (0.8,-0.175) -| (1.31,0.625) -| cycle;

			% blocks
			\draw[rounded corners=4mm,thick,fill=black!10]
				( 0.27 , 0.25) rectangle (0.73 , 0.45)
				( 0.27 ,-0.05) rectangle (0.73 , 0.15)
				( 0.27 ,-0.45) rectangle (0.73 ,-0.25)
				( 0.825,-0.45) rectangle (1.285,-0.25)
				(-0.15 ,-0.15) rectangle (0.15 , 0.15)
				( 0.85 ,-0.15) rectangle (1.15 , 0.15);

			\node[anchor=south west]               at (-0.2 ,-0.475) {$\mathbf{C}$};

			\node[circle,draw] (s0) at(0,0) {$s_0$};
			\node[circle,draw] (s1) at(1,0) {$s_1$};

			\node[ellipse,draw,inner sep=2pt] (s0a1) at(0.5  , 0.35) {$s_0,a_1$};
			\node[ellipse,draw,inner sep=2pt] (s0a2) at(0.5  , 0.05) {$s_0,a_2$};
			\node[ellipse,draw,inner sep=2pt] (s0ak) at(0.5  ,-0.35) {$s_0,a_k$};
			\node[ellipse,draw,inner sep=2pt] (s1a0) at(1.055,-0.35) {$s_1,a_0$};

			\draw[->] (s1) .. controls (1,0.75) and (0,0.75) .. (s0);
			\draw[->] (s1) edge                (s0a1)
			               edge                (s0a2)
			               edge                (s0ak)
			               edge[bend right=5]  (s1a0)
			          (s0) edge[bend right=5]  (s0a1)
			               edge[bend right=5]  (s0a2)
			               edge[bend right=5]  (s0ak)
			        (s0a1) edge[bend right=5]  (s0)
			        (s0a2) edge[bend right=5]  (s0)
			        (s0ak) edge[bend right=5]  (s0)
			        (s1a0) edge[bend right=5]  (s1);
			\path (s0a2) edge[draw=none] node[yshift=1mm]{$\vdots$} (s0ak);
		\end{tikzpicture}
	\end{center}
	\caption{The Kripke structure of Figure~\ref{fig:KS-initial} after $k$ and after $k+1$ splits\label{fig:KS-after-splits}}
\end{figure}

When we create the $(k+2)$th constellation,
block $B$ will fall apart because $s_1$ has an $a_0$-transition but $s_0$ has none.
State $s_1$ becomes a new bottom state.
Then, as one of the steps in the handling of new bottom states,
the constellations reachable from block $\{ s_1 \}$ are inserted into a balanced tree one-by-one.
If this operation uses time in $\Theta(\log \size{\text{resulting tree}})$,
all insertions together will take time in $\Theta(\log 1 + \log 2 + \cdots + \log (k+2)) = \Theta(\log k!) = \Theta(k \log k)$.
Note that $k$ is in $\Theta(m)$ or in $\Theta(\size{\mathit{Act}})$, but not in $\bigo{n}$.

Therefore, the overall runtime analysis of the algorithm in \cite{GJKW2017} cannot be improved to $\bigo{m \log n}$ without changing the algorithm.

\paragraph{Handling of this example by the new algorithm.}

Let us sketch how the algorithm that uses a partition of transitions would handle the LTS of Figure~\ref{fig:LTS} (left),
so as to convince the reader that it now only uses time in $\bigo{m \log n}$.
Note that, as the example has a constant number of states, the time complexity should degenerate to $\bigo{m \log 2} = \bigo{m}$.

The LTS has state space $S = \{ s_0, s_1 \}$, and every bottom state can do a visible transition, so $\Pi_s = \{ B_\mathrm{vis} \} = \{ S \}$.
The initial $\Pi_t$ puts all visible transitions into a single bunch, containing $k+1$ action-block-slices (one for each action label).
The action-block-slice $T_0 = \{ s_1 \pijl{a_0} s_1 \}$ contains one transition;
all other action-block-slices $T_i = \{ s_1 \pijl{a_i} s_0, s_0 \pijl{a_i} s_0 \}$ (for $i = 1, \ldots, k$) contain two transitions.
The algorithm will, at some time, create a new bunch for $T_0$
and either before or after handle $k-1$ other action-block-slices.

When a large action-block-slice $T_i$ (for $i = 1, \ldots, k$) is moved to its own bunch,
the algorithm creates a new block-bunch-slice for $(T_i)_{B_\mathrm{vis} \pijl{}}$, taking time in $\bigo{\size{T_i}}$,
and adds it to the splitter list.
Block $B_\mathrm{vis}$ is considered as unstable but the split is trivial;
also the split under the secondary splitter is trivial.
Finding out that these splits are trivial takes time in $\bigo{\size{T_i}}$ again
(the subroutine $\mathsf{split}$ will walk through all marked transitions before it finds that $U$ is empty).

When the small action-block-slice $T_0$ is moved to its own bunch,
the algorithm also creates a new block-bunch-slice $(T_0)_{B_\mathrm{vis} \pijl{}}$
and splits $B_\mathrm{vis}$ under it.
This is non-trivial and takes time in $\bigo{ \{s_1\}_{\pijl{}} + \{s_1\}_{\lijp{}} }$.
Now, as $s_1$ is a new bottom state, the algorithm will stabilize $\{ s_1 \}$ under all bunches that are reachable from it;
this takes time in $\bigo{\{ s_1 \}_{\pijl{}}}$ overall.

Handling further large action-block-slices afterwards will only lead to further trivial splits, each in time $\bigo{\size{T_i}}$.

Summing up over these time complexities,
we can see that every transition is handled a constant number of times, giving required time complexity $\bigo{m}$.

\end{document}